\documentclass[a4paper, 11pt]{article}
\usepackage[english]{babel}
\usepackage[utf8]{inputenc}
\usepackage[T1]{fontenc}
\usepackage{amsmath, amsthm, amssymb}
\usepackage{mathtools}
\usepackage{mathrsfs}
\usepackage{enumitem}
\usepackage[hidelinks]{hyperref}
\usepackage{geometry}
\usepackage{tikz}
\usetikzlibrary{positioning}
\usepackage[dvipsnames]{xcolor}
\usepackage{lmodern}
\usepackage{csquotes}

\colorlet{darkblue}{blue!50!black}
\hypersetup{
	colorlinks,%
	citecolor=darkblue,%
	filecolor=red,%
	linkcolor=darkblue,%
	urlcolor=darkblue,%
	pdfnewwindow=true,%
	pdfstartview={FitH},%
	pdftitle={Locality of KMS states for infinite fermionic lattice systems},%
	pdfauthor={Lennart Becker},%
	pdfkeywords={KMS states, fermionic lattice systems, quantum belief propagation, decay of correlations, local indistinguishability, LPPL}
}

 \usepackage[doi=true,isbn=false, url=true, date=year, backend=biber,maxbibnames=99, style=alphabetic
  ]{biblatex} 
  \renewbibmacro*{journal+issuetitle}{%
     \usebibmacro{journal}%
     \setunit*{\addspace}%
     \iffieldundef{series}
      {}
       {\newunit
        \printfield{series}%
        \setunit{\addspace}}%
     \usebibmacro{volume+number+eid}%
     \setunit{\bibpagespunct}%
    \printfield{pages}%
     \setunit{\addspace}%
   \usebibmacro{issue+date}%
     \setunit{\addcolon\space}%
   \usebibmacro{issue}%
     \newunit}
   \renewbibmacro*{note+pages}{%
    \printfield{note}%
   \newunit}

\theoremstyle{plain}
\newtheorem{theorem}{Theorem}[section]
\newtheorem{lemma}[theorem]{Lemma}
\newtheorem{proposition}[theorem]{Proposition}
\newtheorem{corollary}[theorem]{Corollary}

\theoremstyle{definition}
\newtheorem{definition}[theorem]{Definition}

\theoremstyle{remark}
\newtheorem{remark}[theorem]{Remark}

\DeclareMathOperator{\aut}{Aut}

\newcommand{\E}{\mathbb{E}}
\newcommand{\tr}{\mathrm{tr}}
\newcommand{\e}{\mathrm{e}}
\newcommand{\diam}{\mathrm{diam}}
\renewcommand{\d}{\mathrm{d}}
\renewcommand{\i}{\mathrm{i}}
\renewcommand{\Phi}{\varPhi}
\renewcommand{\Psi}{\varPsi}

\newcommand{\mT}{\mathcal{T}}

\newcommand{\N}{\mathbb{N}}
\newcommand{\Z}{\mathbb{Z}}
\newcommand{\C}{\mathbb{C}}
\newcommand{\mA}{\mathcal{A}}
\newcommand{\mP}{\mathcal{P}}
\newcommand{\mH}{\mathcal{H}}
\newcommand{\R}{\mathbb{R}}
\newcommand{\eps}{\varepsilon}
\newcommand{\mL}{\mathcal{L}}
\newcommand{\cor}{\mathrm{cor}}
\newcommand{\lppl}{\mathrm{LPPL}}
\newcommand{\li}{\mathrm{LI}}
\newcommand{\lr}{\mathrm{LR}}
\newcommand{\ent}{\mathrm{ent}}
\newcommand{\sur}{\mathrm{sur}}
\newcommand{\vol}{\mathrm{vol}}
\newcommand{\car}{\mathrm{CAR}}
\newcommand{\inter}{\mathrm{int}}
\renewcommand{\phi}{\varphi}

\DeclarePairedDelimiter\norm{\lVert}{\rVert}

\newcommand\numberthis{\addtocounter{equation}{1}\tag{\theequation}}

\usepackage{parskip}

\title{Locality of KMS states for infinite fermionic lattice systems}
\author{Lennart Becker%
\texorpdfstring{\footnote{\parbox[t]{.7\textwidth}{
                \foreignlanguage{ngerman}{Fachbereich Mathematik,  Universität Tübingen,\\
                Auf~der~Morgenstelle~10, 72076~Tübingen,} Germany
            }
        }
    }{}%
}
\date{\today}

\begin{document}

\maketitle
\begin{abstract}
We study how thermal equilibrium states of infinite fermionic lattices respond to perturbations of the Hamiltonian. Our main result concerns extensive perturbations of a chosen KMS state. We show that exponential decay of correlations and exponential local indistinguishability under the extensive perturbation yield stretched-exponential decay of correlations for the perturbed state. Neither the initial nor the perturbed state is required to be unique, allowing stability to be studied in the presence of phase coexistence. Our approach uses Quantum Belief Propagation (QBP), which we establish directly on the underlying $C^*$-algebra. Combining QBP with Lieb--Robinson bounds, we show implications between decay of correlations, the principle that local perturbations perturb locally, and local indistinguishability, closing the cycle for perturbations of the tracial state under the stated decay, support and uniformity assumptions. For extensive perturbations under uniform clustering conditions, we also obtain a differentiable path of local equilibrium expectations and bound their change linearly in the interaction norm of the perturbation. 
\end{abstract}

\section{Introduction}

We study the stability of locality properties of thermal (KMS) states of infinitely extended fermionic lattice systems under perturbations of the Hamiltonian. As a central result we show that a quantitative clustering property persists under an extensive perturbation if the effect of enlarging finite perturbation patches is sufficiently local. This criterion applies to an already interacting reference state and does not require uniqueness of either the initial or the perturbed KMS state. Our approach also gives quantitative bounds on the response to quasi-local perturbations and on the change of local expectations under extensive perturbations.

Let us describe the setting and the stability result in more detail. We fix an inverse temperature $\beta$ and start with a $(\tau,\beta)$-KMS state $\omega$, where $\tau$ is generated by a Hamiltonian $H$ on a lattice $\Gamma$. The locality property of interest is decay of correlations: for local observables $A\in\mA_X$ and $B\in\mA_Y$, with at least one of them even, the covariance
\begin{align*}
    |\omega(AB)-\omega(A)\omega(B)|
\end{align*}
should decay with the distance between $X$ and $Y$, with explicit control of its dependence on the two supports. We consider exponential and polynomial decay. This quantitative form of clustering is an additional assumption on the state. Under translation-invariance and asymptotic-abelianness assumptions, clustering criteria are related to extremality of equilibrium states \cite[Corollary IV.4.17]{S93}, but this relation alone does not specify a decay rate or its support dependence. For background on locality at finite temperature, see \cite{A23,KGRE14,BK18,A69,PP23}.

An extensive perturbation $V$ cannot, in general, be represented by a bounded element of the quasi-local algebra. Such perturbations can change the bulk behavior of the system and may induce a phase transition. We therefore first perturb $\omega$ by the bounded operators $V_M$ obtained by retaining the terms of $V$ contained in a finite region $M$, and denote the resulting Araki-perturbed states by $\omega_{V_M}$ \cite{A73a}. To control the limit as the perturbation fills the lattice, we introduce local indistinguishability (LI) relative to $\omega$ and $V$. It requires the difference
\begin{align*}
    |\omega_{V_Z}(A)-\omega_{V_Y}(A)|,
    \qquad A\in\mA_X,\quad X\subsetneq Y\subsetneq Z,
\end{align*}
to decay with the distance from $X$ to $Z\setminus Y$, uniformly over the finite regions. Thus LI controls the effect of adding perturbation terms far from the observable. It implies that the states $\omega_{V_M}$ converge, independently of the exhaustion, to a KMS state $\omega_V$ for $H+V$ (Proposition \ref{prop: convergence from li}). When $H=0$, the reference state is tracial and the patch states restrict to finite-volume Gibbs states, recovering the usual finite-volume formulation of LI \cite{KGRE14,BK18,CMTW25,BCP22}.

Our main stability theorem shows that LI controls not only convergence to the perturbed state, but also the persistence of clustering. Suppose that $H$ and $V$ are exponentially local, that $\omega$ satisfies exponential decay of correlations, and that it satisfies exponential LI for $V$, with the support bounds stated in Theorem \ref{thm: doc from LI + doc}. Then $\omega_V$ has stretched-exponentially decaying correlations. The reference state may already be a clustering KMS state of a nonzero Hamiltonian. This extends the scope of the implication from LI to clustering established for finite-volume Gibbs states by Capel, Moscolari, Teufel and Wessel \cite[Theorem 31]{CMTW25}. The extension concerns the admissible reference state and stability in the presence of phase coexistence; the decay rates and hypotheses of the two results differ.

The possibility of starting from a specified KMS state is particularly important when several equilibrium phases coexist. Finite-volume Gibbs states and infinite-volume KMS states are complementary descriptions of thermal equilibrium, but clustering of a chosen KMS state need not imply clustering uniformly for a prescribed family of finite-volume Gibbs states. Moreover, the restriction of a KMS state to a finite region is generally not the Gibbs state of the truncated Hamiltonian, since it also reflects the coupling to the exterior. The zero-field Ising model on $\mathbb Z^2$ illustrates this distinction: at $\beta>\beta_c$, the plus and minus phases cluster, whereas Gibbs states on square boxes with free boundary conditions converge to their non-clustering symmetric mixture by spin-flip symmetry and \cite[Propositions 6.9 and 8.5]{GHM01}. In general, this is not an obstruction to the finite-volume approach: appropriate boundary conditions can select clustering phases. The KMS formulation makes the chosen equilibrium state and the locality assumptions imposed on it explicit without first requiring a finite-volume approximation. In one dimension (\cite{A75}, \cite{K76}) and at a sufficiently high temperature in any dimension (\cite{FU15}), the KMS state for sufficiently short-range Hamiltonians is unique. In these situations, the Gibbs states converge to the KMS state.

The perturbation estimates underlying our results are obtained from Quantum Belief Propagation (QBP). Introduced for finite-dimensional Gibbs states by Hastings \cite{H07a}, put in the form used here by Kim \cite{K12}, and extended to KMS states of $W^*$-dynamical systems by Ejima and Ogata \cite{EO19}, QBP implements the change of state under a bounded perturbation by a bounded operator. We give an algebra-valued formulation: Theorem \ref{thm: QBP} constructs this operator in the original $C^*$-algebra by a norm-convergent Dyson series. It implements the same state change as Araki's perturbation theory. Quantitative locality estimates follow by combining this representation with suitable Lieb--Robinson bounds.

For a localized perturbation, these estimates yield the principle that local perturbations perturb locally (LPPL). Originally formulated for gapped ground states \cite{BMNS11,DS15}, this principle asks that, for an even self-adjoint perturbation $P\in\mA_X$ and an observable $A\in\mA_Y$, the change
\begin{align*}
    |\omega_P(A)-\omega(A)|
\end{align*}
decays with the distance between $X$ and $Y$. We prove two versions for KMS states, allowing also quasi-local perturbations with suitably decaying tails. Theorem \ref{thm: LPPL} assumes clustering uniformly along the path $\omega_{sP}$, $s\in[0,1]$, whereas Theorem \ref{thm: LPPL without doc} assumes clustering only in the initial state, at the price of exponential dependence on the decay norm of the perturbation.

These bounded-perturbation estimates also provide a route to LI. If LPPL holds with common bounds for the states $\omega_{V_M}$ over all finite patches $M$, then $\omega$ satisfies LI for $V$ (Theorem \ref{thm: li from lppl}). Moreover, starting from the tracial state, LI gives a common covariance bound for all finite-patch states and the limiting KMS state (Theorem \ref{thm: doc from LI}). Together, these results reproduce the cycle of implications for perturbations of the tracial state under the respective decay, support and uniformity assumptions, and extend individual implications in the equivalence established in \cite{CMTW25} to infinitely extended fermionic systems and quasi-local perturbations. Each implication has its own decay, support and uniformity requirements, and generally entails losses in the quantitative bounds.

Uniform clustering over both finite patches and coupling strengths gives further control of the selected states. If the states $\omega_{sV_M}$ satisfy the common clustering bounds of Theorem \ref{thm: QBP SLT} for all finite $M$ and $s\in[0,1]$, we construct KMS states $\omega_{sV}$ for $H+sV$ whose local expectations are continuously differentiable in $s$. Their derivatives are given by an absolutely convergent QBP formula, and the change from $\omega$ to $\omega_V$ is bounded linearly in the interaction norm of $V$. Related continuity and differentiability results appear in \cite{CMTW25,MSAC26} and \cite[Appendix A]{RGK24}.

All statements are formulated for lattice fermions and the proofs also apply to spin systems. The extension in the opposite direction is not automatic: observables on disjoint fermionic regions need not commute. We use even perturbations, parity-preserving conditional expectations and fermionic Lieb--Robinson bounds to account for this distinction; see the discussions in \cite{AM03,TW25,BTW26}. We treat both short- and long-range interactions. The long-range estimates rely on the fermionic Lieb--Robinson bounds of \cite{TW25}, which are based on the spin-system bounds of \cite{EMNY20}. Locality of temperature and decay of correlations at high temperature for finite fermionic systems were studied in \cite{KGRE14}.

The results hold at arbitrary inverse temperature $\beta$ under their stated hypotheses. We keep track of the temperature dependence in the QBP and locality estimates. We stress that, apart from the illustrative Ising examples, our purpose is to establish conditional implications rather than verify their hypotheses in particular models. Proving any of the notions of locality for physically interesting models and regimes is a formidable task. We discuss available high-temperature results in Section \ref{sec: locality of KMS}.

The paper is organized as follows. Section \ref{sec: Setup} introduces the mathematical setup. Section \ref{sec: general qbp} states the QBP results for $C^*$-dynamical systems, and Section \ref{sec: locality of KMS} formulates and discusses the locality and stability results for fermionic KMS states. Their proofs are given in Sections \ref{sec: proof general qbp} and \ref{sec: proof fermions}, respectively. Technical lemmas and locality estimates are collected in the appendices.

\section{Mathematical setup} \label{sec: Setup}

In this section, we introduce the notions we use to describe $C^*$-dynamical systems and interacting fermions on a lattice. We prove the QBP statements for general $C^*$-dynamical systems. For the locality results, we take a $C^*$-dynamical system describing interacting lattice fermions.

\subsection{\texorpdfstring{$C^*$}{C-star}-dynamical systems}

We consider $C^*$-dynamical systems $(\mA, \tau)$, where $\mA$ is a unital $C^*$-algebra and $\tau : \R \to \aut (\mA)$ is a one-parameter group of automorphisms\footnote{In the following the term {\it automorphism} is used in the sense of a $*$-automorphism as defined for example in~\cite{BR1}.}. We assume that for any $A \in \mA$ the map $t \mapsto \tau_t (A)$ is norm-continuous. We introduce the derivation $\mL$, called Liouvillian, by
\begin{align*}
    \i \mL (A) = \lim_{t \to 0} (\tau_t (A) - A) / t \, ,
\end{align*}
for all $A \in \mA$ for which the limit exists. The dense domain $D(\mL) \subset \mA$ consists of all such elements. 

For a self-adjoint element $P \in \mA$, we define the bounded derivation $\mL_P : \mA \to \mA$ by
\begin{align*}
    \mL_P (A) \coloneq [P, A] \, .
\end{align*}
The following lemma asserts that, given a dynamics $\tau$ generated by $\mL$, one can construct a new $C^*$-dynamical system $(\mA, \tau^P)$ whose generator is the (densely defined) derivation $\mL + \mL_P$.

\medskip
\begin{lemma}[{\cite[Proposition 5.4.1]{BR2}}] \label{lem: perturbed dynamics}
    Let $P = P^* \in \mA$. There exists a unique group of automorphisms $\tau^P$ generated by the derivation $\mL + \mL_P$. For every $A \in \mA$, $\tau^P$ can be obtained by the absolutely converging series
    \begin{align*}
        \tau_t^P  (A) = \sum_{n = 0}^\infty \i^n \int_0^t \d t_1 \int_0^{t_1} \d t_2 \dots \int_0^{t_{n-1}} \d t_n \, [\tau_{t_n} (P), \,  [\dots [\tau_{t_1} (P), \, \tau_t (A)]]] \, .
    \end{align*}
Duhamel's formula gives, for all $A\in\mA$,
    \begin{align*}
        \lVert\tau_t^P(A)-\tau_t(A)\rVert
        &\leq 2|t|\,\lVert P\rVert\,\lVert A\rVert\, .
    \end{align*}
\end{lemma}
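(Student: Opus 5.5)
The plan is to build the perturbed group through the interaction picture. I first construct the unitary cocycle $\Gamma_t$, then define $\tau^P_t(A) \coloneq \Gamma_t\, \tau_t(A)\, \Gamma_t^*$, and only afterwards identify the generator and expand the result into the commutator series. This makes $\tau^P_t$ automatically a $*$-automorphism, which is awkward to read off from the series directly.

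\textbf{Step 1: the cocycle.} For $P=P^*$ define $\Gamma_t$ by the Dyson series
\begin{align*}
\Gamma_t = \sum_{n\ge 0} \i^n \int_0^t \d t_1\int_0^{t_1}\d t_2 \cdots \int_0^{t_{n-1}} \d t_n\, \tau_{t_n}(P)\cdots \tau_{t_1}(P).
\end{align*}
The integrands are norm-continuous because $\tau$ is strongly continuous. The $n$-th term has norm at most $\lVert P\rVert^n |t|^n/n!$, so the series converges absolutely in norm.

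The key property of $\Gamma_t$ is the differential equation $\frac{\d}{\d t}\Gamma_t = \i\, \Gamma_t\, \tau_t(P)$ with $\Gamma_0 = 1$. Its adjoint satisfies the conjugate equation, and this gives $\frac{\d}{\d t}(\Gamma_t^*\Gamma_t) = 0$, so $\Gamma_t$ is unitary. The flow property of $\tau$ gives the cocycle identity $\Gamma_{t+s} = \Gamma_s\, \tau_s(\Gamma_t)$. From this, $\tau^P$ is a strongly continuous group of automorphisms.

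\textbf{Step 2: generator and uniqueness.} For $A\in D(\mL)$ I differentiate $\tau^P_t(A)$ at $t=0$, using the product rule with the norm-differentiable factors $\Gamma_t$ and $\Gamma_t^*$. This yields $\i(\mL(A) + [P,A])$, once the sign convention of the cocycle equation is fixed to match $\mL_P = [P,\cdot\,]$. So the generator extends $\mL+\mL_P$. A bounded perturbation of a generator is again a generator on the same domain, so the generator of $\tau^P$ is exactly $\mL + \mL_P$ on $D(\mL)$.

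For uniqueness, suppose $\sigma$ is another group with this generator. Then for $A\in D(\mL)$ the map $s\mapsto \tau^P_{t-s}(\sigma_s(A))$ has derivative zero. This step needs care with domains, since $\sigma_s(A)$ must stay in $D(\mL)$; invariance of the generator's domain under its own group gives this. I expect this domain bookkeeping to be the main technical point. It is standard, namely the argument of Bratteli--Robinson~\cite{BR2}.

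\textbf{Step 3: the series and the bound.} Iterate Duhamel's formula
\begin{align*}
\tau^P_t(A) = \tau_t(A) + \i\int_0^t \d s\, \tau^P_s\big([P,\tau_{t-s}(A)]\big).
\end{align*}
Repeating this $n$ times and reordering the time variables produces the nested commutators in the statement. The remainder is bounded by $(2\lVert P\rVert |t|)^n\lVert A\rVert/n!$, which tends to $0$ as $n\to\infty$; the same estimate gives absolute convergence of the series.

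Duhamel's formula holds for $A\in D(\mL)$ by differentiating $s\mapsto \tau^P_s(\tau_{t-s}(A))$. It extends to all $A\in\mA$ by density, since both sides are norm-continuous in $A$. Taking norms in the first-order Duhamel identity, each integrand is bounded by $\lVert [P,\tau_{t-s}(A)]\rVert \le 2\lVert P\rVert\lVert A\rVert$, and integrating over $s\in[0,t]$ gives $\lVert \tau^P_t(A)-\tau_t(A)\rVert\le 2|t|\,\lVert P\rVert\,\lVert A\rVert$.
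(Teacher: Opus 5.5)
Your plan is sound and is the standard argument. The paper does not prove this lemma itself: it cites \cite[Proposition 5.4.1]{BR2} for existence, uniqueness and the Dyson series, and invokes the same reference for the cocycle/expansional identities in the proof of Lemma \ref{lem: derivative vector}. It obtains the norm bound from the first-order Duhamel identity exactly as in your Step 3. Steps 2 and 3 work as outlined:
\begin{itemize}
\item Differentiating the conjugation at $t=0$ already gives $\i(\mL(A)+[P,A])$ with your sign convention.
\item A generator has no proper extension that is again a generator, which fixes the domain.
\item The $n$-fold Duhamel remainder bound $(2\lVert P\rVert|t|)^n\lVert A\rVert/n!$ yields both convergence and the nested commutators in the stated order.
\end{itemize}

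One step fails as written, namely unitarity in Step 1. From $\frac{\d}{\d t}\Gamma_t=\i\,\Gamma_t\tau_t(P)$ and $\frac{\d}{\d t}\Gamma_t^*=-\i\,\tau_t(P)\Gamma_t^*$ you get
\[
\frac{\d}{\d t}\bigl(\Gamma_t^*\Gamma_t\bigr)=\i\,\bigl[\Gamma_t^*\Gamma_t,\,\tau_t(P)\bigr],
\]
which is not zero a priori. The product whose derivative vanishes identically is $\Gamma_t\Gamma_t^*$, so your computation only gives $\Gamma_t\Gamma_t^*=1$. In an infinite-dimensional $C^*$-algebra a coisometry need not be unitary. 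Moreover, it is precisely $\Gamma_t^*\Gamma_t=1$ that you need for $A\mapsto\Gamma_t\tau_t(A)\Gamma_t^*$ to be multiplicative.

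The repair is short. Set $X_t=\Gamma_t^*\Gamma_t$. It solves the linear equation $\frac{\d}{\d t}X_t=\i[X_t,\tau_t(P)]$ with bounded continuous coefficient and $X_0=1$. The constant $1$ solves the same equation, so Grönwall gives $X_t=1$. Alternatively, your cocycle identity at $s=-t$ gives $\tau_t(\Gamma_{-t})\Gamma_t=1$. Then $\Gamma_t$ has a left inverse as well as the right inverse $\Gamma_t^*$, so it is unitary. With this fix the argument is complete.
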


\subsection{KMS states and Araki perturbation theory} \label{sec: KMS states}

At inverse temperature $\beta \in \R$, equilibrium states of a $C^*$-dynamical system are characterized by the Kubo–Martin–Schwinger (KMS) condition. We recall the definition.

\medskip
\begin{definition}
    Let $(\mA, \tau)$ be a $C^*$-dynamical system and $\beta \in \R$. We denote by $\mA_\ent^\tau$ the set of $\tau$-entire analytic elements of $\mA$. We call a state a $(\tau, \beta)$-KMS state if it satisfies 
    \begin{align*}
        \omega(A \, \tau_{\i  \beta} (B)) = \omega( B A)
    \end{align*}
for all $A \in \mA$ and $B \in \mA_\ent^\tau$. For $\beta = 0$ we additionally require $\omega\circ\tau_t = \omega$ for all $t\in\R$\footnote{For $\beta \neq 0$ this property follows automatically \cite[Proposition 5.3.3]{BR2}.}.
\end{definition}

\medskip
\begin{remark}
    The set $\mA_\ent^\tau$ is dense in $\mA$. Moreover, if $P = P^* \in \mA_\ent^\tau$ then for any $A \in \mA_\ent^\tau$ it also holds that $A \in \mA_\ent^{\tau^P}$. 
\end{remark}

We are interested in the KMS states of perturbed dynamics. Hence, we recall a construction of a $(\tau^P, \, \beta)$-KMS state starting from a $(\tau, \beta)$-KMS state $\omega$, where $P = P^* \in \mA$. This method was introduced by Araki in \cite{A73a}. Denote by $(\mathcal{H}, \, \pi, \, \Omega)$ the GNS triple associated to $\omega$ and $H$ the Hamiltonian that implements the dynamics in this representation. We define 
\begin{align*} \label{eq: perturbed vector}
    \Omega_P \coloneq \e^{- \beta \,(H + \pi(P)) / 2} \, \Omega \in \mH \, . \numberthis
\end{align*}
Then the state
\begin{align*} \label{eq: perturbed state}
    \omega_P (A) \coloneq \frac{(\Omega_P, \, \pi(A) \, \Omega_P)}{(\Omega_P, \, \Omega_P)} \, . \numberthis
\end{align*}
defines a $(\tau^P, \, \beta)$-KMS state. For the proof, see, for example, \cite[Theorem 5.4.4]{BR2}. Notice that
\begin{align*}
    (\omega_{P_1})_{P_2} (A) = \omega_{P_1 + P_2} (A) \, .
\end{align*}

\subsection{Fermionic lattice systems and spaces of localized operators}

In the second part of this paper we consider $C^*$-dynamical systems that describe fermionic lattice systems. As lattices we consider countably infinite connected graphs $(\Gamma, \, E)$ with graph distance $d(\cdot, \cdot)$. We assume that this graph is surface-regular with dimension $D \in \N$. That is, there is a constant $C_\sur \geq1$ such that for all $x \in \Gamma$ and $r \geq0$
\begin{align*}
    |S_r (x)| \coloneq |\{y \in \Gamma ~|~ d(x,y) = r\}| \leq C_\sur \, (1 + r)^{D - 1} \, .
\end{align*}
From an integration argument it follows that all surface-regular graphs are also $D$-regular in the sense that there is a constant $C_\vol$ such that for all $x \in \Gamma$ and $r > 0$
\begin{align*}
    |B_r (x)| \coloneq | \{y \in \Gamma ~|~ d(x, \, y) \leq r\} | \leq C_\vol \, (1 + r)^D \, .
\end{align*}
A particular example for a surface-regular graph is $\Z^D$ equipped with the $\ell^1$-distance. The set of finite, nonempty subsets of $\Gamma$ will be called $P_0 (\Gamma)$. For $X\in P_0(\Gamma)$ and $r\geq0$, we define the fattening of $X$ by
\begin{align*}
    X_r \coloneq \{x\in\Gamma ~|~ d(x,X)\leq r\}\, .
\end{align*}
Often, we want to approximate quantities on $\Gamma$ by quantities on $\Lambda_n \in P_0 (\Gamma)$, where $\Lambda_n \subseteq \Lambda_{n + 1}$ and $\bigcup_{n \in \N} \, \Lambda_n = \Gamma$. We call such a sequence $(\Lambda_n)_{n \in \N}$ an \textit{increasing and absorbing sequence} (IAS). 

The fermionic Fock space over the lattice $\Gamma$, where the fermions at each lattice site $x \in \Gamma$ can have $n \in \N$ degrees of freedom, is given by  
\begin{align*}
   \mathcal{F}(\Gamma, \C^n) \coloneq \bigoplus_{N = 0}^{\infty} \ell^2(\Gamma ,\C^n)^{\wedge N} \, ,
\end{align*}
where $\ell^2(\Gamma ,\C^n)^{\wedge N}$ denotes the $N$-fold antisymmetric tensor product of $\ell^2(\Gamma ,\C^n)$. For any $M \subseteq \Gamma$ we define the CAR algebra $\mA_M \coloneq \car (\ell^2 (M, \C^n))$ as the $C^*$-subalgebra of the algebra of bounded operators on $\mathcal{F} (M, \C^n)$ that is generated by the fermionic creation and annihilation operators $a^*_{x,i}$ and $a_{x,i}$ for $x \in M$, $i \in \{1, \dots ,n\}$. These creation and annihilation operators satisfy the canonical anti-commutation relations (CAR):
\begin{align*}
    \{a_{x, i}, \, a_{y, j}\} = \{a_{x, i}^*, \, a_{y, j}^*\} = 0 ~ \text{and} ~ \{a_{x, i}, \, a_{y, j}^*\} = \delta_{x, y} \, \delta_{i, j} \, , 
\end{align*}
for all $x$,  $y \in M$ and  $i$, $j \in \{1, \ldots, n\}$. Here, $\{A, B\} = AB + BA$ denotes the anti-commutator of $A$ and $B$. We will refer to the $C^*$-algebra $\mA_{\Gamma}$ as the quasi-local algebra and
\begin{align*}
    \mA_0 \coloneq \bigcup_{M \in P_0(\Gamma)} \mA_M \subseteq \mA_\Gamma
\end{align*}
as the local algebra. Consequently, an operator is called quasi-local if it lies in $\mA_\Gamma$ and local if it lies in $\mA_0$. For a local operator $A$, we call the smallest finite set $M\subseteq\Gamma$ such that $A\in\mA_M$ the support of $A$.

For any $X \in P_0(\Gamma)$ we define the parity automorphism $\Theta_X$ by
\begin{align*}
    \Theta_X(A) =  (-1)^{N_X} A (-1)^{N_X}, \quad \text{for all} ~~ A \in \mA_X,
\end{align*}
where $N_X \coloneq \sum_{x \in X} \sum_{i = 1}^n a_{x, i}^* a_{x, i}$ denotes the local number operator. By the quasi-local structure of $\mA_\Gamma$ there is a unique automorphism $\Theta$ on $\mA_\Gamma$ such that $\Theta |_{\mA_X} = \Theta_X$ for all $X \in P_0(\Gamma)$. 

    We define the set of even quasi-local operators
\begin{align*}
    \mA_\Gamma^+ \coloneq \{A\in \mA_\Gamma~|~   \Theta (A) = A \} \, .
\end{align*}
We denote the algebra of even operators supported in $M \subseteq \Gamma$ by $\mA^+_M \coloneq  \mA_\Gamma^+\cap \mA_M$. For disjoint regions $M_1, M_2 \subseteq \Gamma$ operators  $A \in \mA_{M_1}^+$ and $B\in \mA_{M_2}$ commute, $[A,B] = 0$.

\medskip
\begin{definition}\label{def:dec funct}
    We call a function $F : [0, \infty) \to (0, \infty)$ a decay function if $F$ is non-increasing, $F(0) = 1$ and $\log (F)$ is convex.
\end{definition}

Using these decay functions, we make the following definition for spaces of localized operators.

\medskip

\begin{definition}\label{def:norm}
    Let $F$ be a decay function, and $\E$ the fermionic conditional expectation introduced in Proposition \ref{Ex+UniqueExpectation}. We say that an observable $A \in \mA_\Gamma$ is $F$-localized if for all $x \in \Gamma$ 
    \begin{align*}
        \lVert A \rVert_{F, x} \coloneq \lVert A \rVert + \sup_{k \in \N_0} \frac{\lVert A - \E_{B_k (x)} A \rVert}{F(k)} < \infty \, .
    \end{align*}
    We denote the set of all $F$-localized observables by $\mA_F$. We set $\mA_F^+ \coloneq \mA_F \cap \mA_\Gamma^+$. For $\nu > 0$ and $F( \cdot) = (1 + \cdot)^{- \nu}$ we abbreviate $\lVert \cdot \rVert_{\nu, x} \coloneq \lVert \cdot \rVert_{F, x}$ and $\mA_\nu\coloneq\mA_F$. 
\end{definition}

\subsection{Interactions, derivations and Lieb--Robinson bounds}

In order to define dynamics on $\mA_\Gamma$ we introduce the notion of an interaction. An interaction is a map $\Phi: P_0(\Gamma) \to \mA_\Gamma^+$ such that $\Phi(M) \in \mA_M$ and $\Phi(M)^* = \Phi(M)$ for every $M\in P_0(\Gamma)$, and such that, for every $M\in P_0(\Gamma)$, the sum
\begin{equation*}
    \sum_{\substack{K\in P_0(\Gamma)\\ M\cap K \neq \emptyset}} \Phi(K)
\end{equation*}
converges unconditionally. For the empty set we set $\Phi(\emptyset)=0$. For $M \in P_0 (\Gamma)$, we set $\Phi_M \coloneq \sum_{Z \subseteq M} \Phi (Z)$. Interactions define derivations on $\mA_\Gamma$ in the following way:    
For an interaction $\Phi$, let
\begin{align*}
    \mL_{\Phi}^\circ: \mA_0 \to \mA_\Gamma,~ A\mapsto \sum_{M\in P_0(\Gamma)} [\Phi(M),A] \, .
\end{align*}
It follows from \cite[Proposition 3.2.22]{BR1} and \cite[Proposition 3.1.15]{BR1} that $\mL_{\Phi}^\circ$ is closable. We denote its closure by $\mL_{\Phi}$. 

We define the following spaces of decaying interactions.

\medskip

\begin{definition}\label{norm interaction}
    For an interaction $\Phi$ and some decay function $F$, we define
    \begin{align*}
        \|\Phi\|_{F} \coloneq \sup_{x\in \Gamma} \sum_{\substack{Z\in
        P_0(\Gamma)\\x\in Z}} \frac{\| \Phi(Z) \|}{F(\diam(Z))}  ~,
    \end{align*}
    where $\diam(Z) = \max \{ d(x, y) ~|~ x, \, y \in Z \}$. The set of interactions with finite $\norm{\cdot}_{F}$ is denoted by $\mP_{F}$.
    For $\nu \geq 0$ and $F( \cdot ) = (1 + \cdot)^{-\nu}$ we abbreviate $\norm{\cdot}_\nu \coloneq \norm{\cdot}_F$ and $\mP_\nu \coloneq \mP_F$.
\end{definition}
One can sum all the local terms of an interaction that are associated to one lattice point to obtain a quasi-local observable. In computations, it is convenient to work with this family of quasi-local terms, also called a $0$-chain.

\medskip
\begin{definition}\label{def: quasi-local terms}
    Fix once and for all a strict total order $<$ on $\Gamma$. For each $M \in P_0 (\Gamma)$, we define the center of $M$ as $\mathrm{C}(M)\in M$, where $\mathrm{C}(M)$ minimizes the map $M\to\R_+$, $x\mapsto\sum_{y\in M}d(x,y)$. If there are several such points, choose the smallest one with respect to $<$. Let $x\in \Gamma$. We define $R_x \subseteq P_0(\Gamma)$ to be the set of all finite subsets of $\Gamma$ that have their center in $x$. Given an interaction $\Phi$ we define 
    \begin{align*}
        \Phi_x \coloneq \sum_{M \in R_x} \Phi(M) \, .
    \end{align*}
\end{definition}

\medskip

\begin{remark} \label{rem: bound zero chain}
    Note that for any decay function $F : \R_+ \to \R_+$, interaction $\Phi \in \mP_F$ and lattice point $x$ the quasi-local observable $\Phi_x$ is well-defined and that $\lVert \Phi_x \rVert_{F,x} \leq 3 \, \lVert \Phi \rVert_F $.
\end{remark}

\medskip
\begin{remark}
    If $\Gamma$ is embedded in $\R^D$, then the strict total order can be defined by the lexicographical order. For an explicit construction, see \cite{BTW26}.
\end{remark}

In Section \ref{sec: locality of KMS}, we consider two classes of interactions, short-range interactions that decay exponentially and long-range interactions that decay polynomially. These classes of interactions generate time evolutions, which we call locally generated. The resulting dynamics are local, as quantified by so-called Lieb--Robinson bounds.

\medskip
\begin{definition}
    We say that the dynamics $\tau$ satisfies a Lieb--Robinson bound with $\zeta_\lr ~:~ P_0 (\Gamma) \times P_0(\Gamma) \times [0, \infty) \to [0, \infty)$ if
    \begin{align*}
        \lVert [\tau_t (A), \, B] \rVert \leq \lVert A \rVert \, \lVert B \rVert \, \zeta_\lr (X, Y, |t|) \, ,
    \end{align*}
    for all $X$, $Y \in P_0(\Gamma)$, $A \in \mA_X^+$, $B \in  \mA_Y$ and $t \in \R$. 
\end{definition}

\section{Quantum Belief Propagation} \label{sec: general qbp}

In this section, we construct the QBP implementation of the map introduced in Section \ref{sec: KMS states} for $C^*$-dynamical systems. We build on the results of Ejima and Ogata for $W^*$-dynamical systems \cite{EO19}. Their proof uses modular theory and therefore does not immediately yield an operator in the original $C^*$-algebra. The $C^*$-dynamical statement can be obtained by passing to a GNS representation, completing the represented algebra to a $W^*$-algebra, and then verifying that the relevant operators in the completion are representations of operators in the original $C^*$-algebra. We instead give a more elementary proof directly in the $C^*$-dynamical setting. The proofs of the statements in this section are given in Section \ref{sec: proof general qbp}.

Let $(\mA, \tau)$ be some $C^*$-dynamical system and $A \in \mA$ be arbitrary. For $\beta \neq 0$ we define 
\begin{align} \label{def: Phi}
    \Phi_\beta^{\tau} (A) \coloneq \int_\R \d t \, f_\beta(t) \, \tau_{-t} (A) \, ,
\end{align}
where the function $f_\beta : \R \to \R$ is defined via its Fourier transform
\begin{align} \label{def: f_hat}
    \hat{f}_\beta(\omega) \coloneq 2 (1 + \e^{\beta \omega})^{-1} \, \int_0^1 \d \tau \, \e^{\beta \tau \omega} = \frac{\tanh{\tfrac{\beta \omega}{2}}}{\tfrac{\beta \omega}{2}} \, .
\end{align}
Following \cite{CMTW25} and \cite[Lemma 10]{EO19}, we note the properties of $f_\beta$ that are relevant for our discussion:
\begin{itemize}
    \item[(i)]  $f_\beta$ is even and non-negative.
    \item[(ii)] $f_{\beta}$ is $L^1 (\R)$-normalized.
    \item[(iii)] $f_\beta (t)$ falls off exponentially for $|t| \to \infty$.
\end{itemize}
For $\beta = 0$ we set $\Phi_\beta^\tau (A) = A$.

It follows that $\bigl( \Phi_\beta^{\tau} (A) \bigr)^* = \Phi_\beta^\tau (A^*)$ for any $A \in \mA$. The following theorem gives the bounded operator that implements the perturbed state.

\medskip
\begin{theorem} \label{thm: QBP}
    Let $(\mA, \tau)$ be a $C^*$-dynamical system, $\omega$ a $(\tau, \beta)$-KMS state, and $P = P^* \in \mA$. Then there exists a norm convergent Dyson series $\eta_P \in \mA$, explicitly given by Equation \eqref{eq: eta}, such that
    \begin{align*}
        \omega_P(A) = \omega(\eta_P \, A \, \eta_P^*) \qquad \forall A \in \mA \, .
    \end{align*}
    In particular, $ A \mapsto \eta_P A \eta_P^*$ is a completely positive bounded map transporting $\omega$ to $\omega_P$.
\end{theorem}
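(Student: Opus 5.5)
The plan is to interpolate along the path $s\mapsto\omega_{sP}$, $s\in[0,1]$. I will derive a first-order linear differential equation for this path whose generator is a bounded, explicitly given operator built from $\Phi_\beta^{\tau^{sP}}(P)$. The operator $\eta_P$ is then defined as the time-ordered exponential (Dyson series) of that generator. Throughout write $\omega_s\coloneq\omega_{sP}$, which is a $(\tau^{sP},\beta)$-KMS state by Section \ref{sec: KMS states}, and set
\begin{align*}
    Q_s\coloneq \Phi_\beta^{\tau^{sP}}(P)-\omega_s(P)\,\mathbb{1}.
\end{align*}
By the properties of $f_\beta$ listed after \eqref{def: f_hat}, $Q_s$ is self-adjoint and $\lVert Q_s\rVert\le 2\lVert P\rVert$. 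Moreover $s\mapsto Q_s$ is norm-continuous, by Lemma \ref{lem: perturbed dynamics} (Duhamel in the perturbation parameter) together with dominated convergence in the $t$-integral, which the exponential decay of $f_\beta$ permits.

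\emph{Step 1 (derivative of the perturbed state).} I will show that for all $A\in\mA$
\begin{align*}
    \frac{\d}{\d s}\,\omega_s(A)=-\frac{\beta}{2}\,\omega_s\bigl(Q_sA+AQ_s\bigr).
\end{align*}
Since $(\omega_{s})_{hP}=\omega_{(s+h)P}$, it suffices to compute the derivative at $h=0$ of $h\mapsto\omega_{hP}(A)$ for a general KMS state $\omega$, here $\omega_s$ for $\tau^{sP}$. In the GNS representation, Duhamel's formula gives
\begin{align*}
    \partial_h\, \e^{-\beta(H+h\pi(P))/2}\Omega\big|_{h=0}=-\int_0^{\beta/2}\d u\,\e^{-uH}\pi(P)\,\e^{-(\beta/2-u)H}\Omega.
\end{align*}
Using $\e^{-vH}\Omega=\Omega$, this turns the derivative into
\begin{align*}
    -\int_0^{\beta/2}\d u\,\bigl[\omega(\tau_{\i u}(P)A)+\omega(A\,\tau_{-\i u}(P))\bigr]
\end{align*}
plus the normalization term $\beta\,\omega(P)\,\omega(A)$. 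I expect this to be the main obstacle. The difficulty is to rewrite the imaginary-time averages via the KMS condition as $\tfrac{\beta}{2}\omega(\{\Phi_\beta^\tau(P),A\})$. For this I would first take $P$ (and $A$) $\tau$-entire, which is legitimate by density and by norm-continuity of all quantities in $P$. I would then use the standard fact that for a KMS state $t\mapsto\omega(A\tau_t(P))$ extends analytically to the strip, and compare Fourier transforms. The identity $\hat f_\beta(\omega)=2(1+\e^{\beta\omega})^{-1}\int_0^1\e^{\beta\tau\omega}\d\tau$ is exactly the statement that the spectral weight of $\tfrac1\beta\int_0^\beta \omega(A\tau_{\i u}(P))\d u$ equals that of the symmetrised expression $\omega(\Phi_\beta^\tau(P)A)+\omega(A\Phi_\beta^\tau(P))$, after using $\omega(B\tau_{\i\beta}(C))=\omega(CB)$ to fold $[\beta/2,\beta]$ onto $[0,\beta/2]$. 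To keep this rigorous without modular theory, I would do the Fourier argument at the level of tempered distributions in $t$, for the bounded continuous functions $t\mapsto\omega(A\tau_t(P))$. The analytic continuation is then the multiplier $\e^{-u\omega}$ on the spectral side, justified for entire $P$ by Paley--Wiener-type bounds.

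\emph{Step 2 (construction of $\eta_P$).} Define $\eta_s$ for $s\in[0,1]$ by the norm-convergent Dyson series
\begin{align*}
    \eta_s=\sum_{n\ge0}\Bigl(-\frac{\beta}{2}\Bigr)^n\int_{0\le s_1\le\dots\le s_n\le s}Q_{s_1}Q_{s_2}\cdots Q_{s_n}\,\d s_1\cdots\d s_n .
\end{align*}
The ordering is chosen so that $\partial_s\eta_s=-\tfrac{\beta}{2}\eta_sQ_s$. The series converges in norm with $\lVert\eta_s\rVert\le \e^{\beta\lVert P\rVert}$, so $\eta_P\coloneq\eta_1\in\mA$; this is Equation \eqref{eq: eta}. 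Set $\mu_s(A)\coloneq\omega(\eta_sA\eta_s^*)$. Differentiating gives
\begin{align*}
    \partial_s\mu_s(A)=-\frac{\beta}{2}\,\mu_s\bigl(Q_sA+AQ_s\bigr),
\end{align*}
with $\mu_0=\omega=\omega_0$.

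\emph{Step 3 (uniqueness).} With the family $Q_s$ now regarded as a fixed, continuous, bounded input, both $s\mapsto\omega_s$ and $s\mapsto\mu_s$ solve the same linear ODE in the Banach space $\mA^*$. The generator is the bounded operator $\mu\mapsto-\tfrac{\beta}{2}\mu(\{Q_s,\cdot\})$, of norm at most $2\beta\lVert P\rVert$. Grönwall then gives $\omega_s=\mu_s$ for all $s$, and $s=1$ yields the theorem. Complete positivity of $A\mapsto\eta_PA\eta_P^*$ is immediate from its form, and boundedness follows from $\lVert\eta_P\rVert\le\e^{\beta\lVert P\rVert}$. For $\beta=0$ all formulas degenerate consistently: the derivative vanishes, $\eta_P=\mathbb{1}$ and $\omega_P=\omega$.
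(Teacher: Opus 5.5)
Your proposal is correct and follows the paper's proof: the QBP equation for $s\mapsto\omega_{sP}$ (the paper's Proposition~\ref{prop: QBP differential equation}), the same time-ordered Dyson series with $\partial_s\eta_s=-\frac{\beta}{2}\eta_sQ_s$, and Grönwall in $\mA^*$ with $Q_s$ treated as a fixed input. The differences are technical. The GNS Hamiltonian is not semibounded, so your ``Duhamel'' step needs justification; for entire $P$ the paper supplies it via Araki's expansional $\Omega_{sP}=\pi(E_{sP}(\i\beta/2))\Omega$, and it replaces your distributional Fourier argument by functional calculus of $(1+\e^{-\beta H_s})\hat f_\beta(H_s)$ on $\pi_s(P)\Omega_s$. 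Also, with $\pi(\tau_z(A))\Omega=\e^{\i zH}\pi(A)\Omega$ your displayed integrand should read $\omega(\tau_{-\i u}(P)A)+\omega(A\tau_{\i u}(P))$; this corrected form is the one your folding to $\frac1\beta\int_0^\beta\omega(A\tau_{\i u}(P))\,\d u$ actually uses.
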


\medskip
\begin{remark}
    Formally, there is a similar concept for ground states of gapped Hamiltonians, see, for example, \cite{H07b, BMNS11, MO19, BTW26}. In this setting, one constructs an automorphism that maps the ground state of a gapped Hamiltonian to the ground state of another gapped Hamiltonian, assuming that the spectral gap does not close along a path of Hamiltonians connecting them. There are some important differences to the map constructed in the previous theorem. The map between KMS states constructed here is completely positive, but need not preserve the unit. In the lattice setting, locality of the implementing operator near the perturbation does not imply that its conjugation map acts as the identity on distant observables. Indeed, even if $[\eta_P,A]=0$, we have
    \[
        \eta_P A\eta_P^*-A=A(\eta_P\eta_P^*-1).
    \]
    The right-hand side need not be small. Thus the locality of the implementing operator must be distinguished from locality of the observable map near the support of $A$. In particular, it need not approach the identity on observables far from the perturbation. Constructing transport maps with stronger locality and product properties remains a separate problem.
\end{remark}

The following proposition gives the differential equation used to prove the theorem; we refer to it as the QBP equation.

\medskip
\begin{proposition} \label{prop: QBP differential equation}
Let $(\mA, \tau)$ be a $C^*$-dynamical system, $\beta \in \R$, $\omega$ a $(\tau, \beta)$-KMS state, and $P = P^* \in \mA$. Then for any $A \in \mA$ and $s \in [0, 1]$ the following differential equation holds:
\begin{align*}
    \frac{\d}{\d s} \omega_{s P} (A) = - \frac{\beta}{2} \, \omega_{s P} \bigl( \bigl\{ \Phi_\beta^{\tau^{s P}} (P - \omega_{s P} (P)),  \, A \bigr\} \bigr) \, . \numberthis \label{eq: QBP differential eqtn. prop}
\end{align*}
\end{proposition}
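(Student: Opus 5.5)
The plan is to reduce to $s=0$, prove the derivative formula at $s=0$ for a dense class of nice perturbations, and then extend to general $P=P^*\in\mA$ by continuity. By the composition rule $(\omega_{P_1})_{P_2}=\omega_{P_1+P_2}$ we have $\omega_{(s+h)P}=(\omega_{sP})_{hP}$. Moreover $\omega_{sP}$ is a $(\tau^{sP},\beta)$-KMS state. Hence it suffices to show, for an arbitrary $C^*$-dynamical system $(\mA,\tau)$ and $(\tau,\beta)$-KMS state $\omega$,
\begin{align*}
    \frac{\d}{\d h}\Big|_{h=0}\omega_{hP}(A) = -\frac{\beta}{2}\,\omega\bigl(\bigl\{\Phi^\tau_\beta(P-\omega(P)),A\bigr\}\bigr),
\end{align*}
and then to apply this with $\tau$ replaced by $\tau^{sP}$ and $\omega$ by $\omega_{sP}$. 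The case $\beta=0$ is trivial, since then $\omega_{hP}=\omega$ and the right-hand side vanishes because $\omega$ is a trace. So I assume $\beta\neq0$.

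\textbf{Step 1: first-order expansion of the Araki vector.} Since $H\Omega=0$, I write $\Omega_{hP}=E(h)\Omega$ with $E(h)=\e^{-\beta(H+h\pi(P))/2}\e^{\beta H/2}$. Araki's Dyson expansion gives
\begin{align*}
    E(h)\Omega=\Omega-h\int_0^{\beta/2}\d u\,\pi(\tau_{\i u}(P))\Omega+O(h^2).
\end{align*}
To make $\tau_{\i u}(P)$ meaningful, I first take $P$ to be $\tau$-entire analytic; the $O(h^2)$ remainder is then controlled in norm. Differentiating the numerator $(\Omega_{hP},\pi(A)\Omega_{hP})$ produces the two terms $\omega(A\tau_{\i u}(P))$ and $\omega(\tau_{-\i u}(P)A)$, each integrated over $u\in[0,\beta/2]$. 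The KMS condition gives $\omega(\tau_{-\i u}(P)A)=\omega(A\tau_{\i\beta-\i u}(P))$. Summing the two terms therefore yields
\begin{align*}
    -\int_0^\beta\d u\,\omega(A\tau_{\i u}(P)).
\end{align*}
Differentiating the denominator produces the same expression with $A=1$, which is $-\beta\,\omega(P)$. This accounts for the centering $P-\omega(P)$.

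\textbf{Step 2: the Fourier identity (main obstacle).} It remains to show
\begin{align*}
    \int_0^\beta\d u\,\omega(A\tau_{\i u}(P))=\frac{\beta}{2}\bigl[\omega(A\Phi_\beta^\tau(P))+\omega(\Phi_\beta^\tau(P)A)\bigr].
\end{align*}
I would avoid modular theory by restricting first to $P$ with compact $\tau$-spectrum, i.e.\ $P=\int g(t)\tau_t(P_0)\,\d t$ with $\hat g$ compactly supported. Such elements are entire analytic and dense. For them, $F(z)=\omega(A\tau_z(P))$ is entire. Its real-time boundary values are the Fourier transform of a distribution $\mu$ with compact support, so $\omega(A\tau_{\i u}(P))=\int \e^{-u\omega}\,\d\mu(\omega)$ (up to sign conventions). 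The KMS condition gives $\omega(\tau_t(P)A)=\int \e^{\i t\omega}\e^{-\beta\omega}\,\d\mu(\omega)$.

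Both sides of the identity then become integrals against $\mu$. The left side has multiplier $\int_0^\beta \e^{-u\omega}\,\d u$. The right side has multiplier $\tfrac\beta2\hat f_\beta(\omega)\bigl(1+\e^{-\beta\omega}\bigr)$, where the definition $\hat f_\beta(\omega)=2(1+\e^{\beta\omega})^{-1}\int_0^1\e^{\beta\tau\omega}\,\d\tau$ is exactly what makes the two multipliers match after rescaling $\tau$. The convolution $\Phi^\tau_\beta$ acts as the Fourier multiplier $\hat f_\beta$ on the spectral representation; this is justified by Fubini, since $f_\beta\in L^1$. Getting the signs and conventions ($\tau_{-t}$ in $\Phi_\beta^\tau$, sign of $\i u$) consistent is the delicate point, and I would check it on a finite-dimensional Gibbs state.

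\textbf{Step 3: removing the regularity assumption.} For general $P=P^*$, let $P_n\to P$ in norm with $P_n$ of compact spectrum. By Steps 1 and 2 and the reduction to $s=0$, the derivative formula holds along the whole path for each $P_n$. Integrating in $s$ gives
\begin{align*}
    \omega_{P_n}(A)-\omega(A)=\int_0^1 R_n(s)\,\d s,
\end{align*}
where $R_n(s)$ denotes the right-hand side for $P_n$. Araki's perturbation theory gives norm continuity of $P\mapsto\omega_P$, locally uniformly in $s$. Lemma~\ref{lem: perturbed dynamics} gives $\tau^{sP_n}\to\tau^{sP}$ uniformly on compact time sets, and the exponential decay of $f_\beta$ then yields $R_n(s)\to R(s)$ uniformly in $s$. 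Passing to the limit in the integrated identity, and using that $s\mapsto R(s)$ is continuous, the fundamental theorem of calculus gives the claimed derivative for all $s\in[0,1]$.
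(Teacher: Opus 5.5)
Your Steps 1 and 2 are correct at $s=0$ for a perturbation with compact $\tau$-spectrum; the multiplier identity $\tfrac{\beta}{2}\hat f_\beta(\lambda)(1+\e^{-\beta\lambda})=(1-\e^{-\beta\lambda})/\lambda=\int_0^\beta\e^{-u\lambda}\,\d u$ checks out. The gap is in how you combine them with the reduction to $s=0$ in Step 3.

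To get the formula at a point $s>0$ of the path for $P_n$, the reduction applies Steps 1--2 to the system $(\mA,\tau^{sP_n})$, the state $\omega_{sP_n}$ and the perturbation $P_n$. Step 2 therefore needs $P_n$ to have compact $\tau^{sP_n}$-spectrum, but you only arranged compact $\tau$-spectrum. This property is not inherited by the perturbed dynamics. For instance, take $\tau_t=\e^{\i tN}(\cdot)\e^{-\i tN}$ on $\mathcal K(\ell^2(\N))+\C 1$, with $N$ diagonal with eigenvalues $0,1,2,\dots$. A compact self-adjoint $P$ that is tridiagonal in that basis has $\tau$-spectrum $\{\pm1\}$. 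Yet for generic small $s$ it has nonzero matrix elements between eigenvectors of $N+sP$ with arbitrarily large eigenvalue differences. Nor can you choose $P_n$ with compact $\tau^{sP_n}$-spectrum for all $s$, because the dynamics depends on $P_n$ itself. So the claim that the formula holds ``along the whole path for each $P_n$'' is unproved, and the integrated identity you pass to the limit in is not available.

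Two minor slips: for $\beta=0$ the right-hand side vanishes because of the prefactor $\beta$, not because $\omega$ is a trace. Also, you need $\omega_{sP_n}(A)-\omega(A)=\int_0^s R_n(u)\,\d u$ for every $s$, not only for $s=1$.

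The property that \emph{is} stable is entire analyticity: $P\in\mA_\ent^\tau$ implies $P\in\mA_\ent^{\tau^{sP}}$ by the Dyson series, and this is the class the paper works with. You can close the gap by proving Step 2 directly for $\tau^{sP}$-entire $P$. In the GNS representation $(\mH_s,\pi_s,\Omega_s)$ of $\omega_{sP}$ with Hamiltonian $H_s$, the function $t\mapsto\omega_{sP}(A\tau^{sP}_t(P))$ is the Fourier transform of the finite complex measure $\d(\pi_s(A^*)\Omega_s,E_{H_s}(\lambda)\pi_s(P)\Omega_s)$. Entire analyticity gives $\pi_s(P)\Omega_s\in D(\e^{R|H_s|})$ for every $R>0$. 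Your multiplier computation then goes through with this measure in place of the compactly supported distribution, and no modular theory is needed. This is essentially the paper's functional-calculus step; the paper also differentiates Araki's expansional directly along the path rather than reducing to $s=0$.

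Alternatively, keep compact spectrum but never differentiate along an approximating path. Fix $s$, choose $Q=Q^*$ with compact $\tau^{sP}$-spectrum and $\norm{P-Q}\leq\eps$, and write $\omega_{(s+h)P}=((\omega_{sP})_{hQ})_{h(P-Q)}$. Araki's a priori estimate gives $\norm{\psi_R-\psi}=O(\norm R)$, uniformly over KMS states $\psi$. Hence the difference quotients $h^{-1}(\omega_{(s+h)P}(A)-\omega_{sP}(A))$ and $h^{-1}((\omega_{sP})_{hQ}(A)-\omega_{sP}(A))$ differ by $O(\eps)$ uniformly in small $h$. The right-hand sides for $Q$ and for $P$ at the base point $\omega_{sP}$ also differ by $O(\eps)$, so letting $\eps\to0$ yields the derivative at $s$.
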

As a corollary of this proposition, we find a continuity statement for the map $P \mapsto \omega_P$.

\medskip
\begin{corollary} \label{cor: cont. KMS state}
    Let $(\mA, \tau)$ be a $C^*$-dynamical system, $\omega$ a $(\tau, \beta)$-KMS state, and $P = P^* \in \mA$. Then
    \begin{align*}
        |\omega_P (A) - \omega (A)| \leq |\beta| \, \lVert A \rVert \, \lVert P \rVert 
    \end{align*}
    for all $A \in \mA$.
\end{corollary}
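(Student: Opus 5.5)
We integrate the QBP equation of Proposition \ref{prop: QBP differential equation} along the straight path $s \mapsto sP$, $s\in[0,1]$. Since $\omega_{0\cdot P} = \omega$ and $\omega_{1\cdot P} = \omega_P$, the fundamental theorem of calculus gives
\begin{align*}
    \omega_P(A) - \omega(A) = -\frac{\beta}{2} \int_0^1 \d s \, \omega_{sP}\bigl( \bigl\{ \Phi_\beta^{\tau^{sP}}(P - \omega_{sP}(P)), \, A \bigr\} \bigr) \, .
\end{align*}
This requires the map $s \mapsto \omega_{sP}(A)$ to be differentiable with the stated derivative on all of $[0,1]$, and its derivative to be integrable. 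Both follow from the proposition: the right-hand side of \eqref{eq: QBP differential eqtn. prop} is bounded in $s$, as shown next. For $\beta = 0$ the perturbed vector is $\Omega$ itself, so $\omega_P = \omega$ and the claim is trivial. We may therefore assume $\beta \neq 0$.

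We then estimate the integrand pointwise in $s$. Since $\tau^{sP}$ is a group of $*$-automorphisms, $\lVert \tau^{sP}_{-t}(B)\rVert = \lVert B\rVert$. By properties (i) and (ii) of $f_\beta$, namely non-negativity and $\int f_\beta = 1$, we get $\lVert \Phi_\beta^{\tau^{sP}}(B)\rVert \le \lVert B\rVert$. With $B = P - \omega_{sP}(P)$ the naive bound is $\lVert B\rVert \le 2\lVert P\rVert$. Combining this with $\lVert \{X,A\}\rVert \le 2\lVert X\rVert\,\lVert A\rVert$ and $|\omega_{sP}(\cdot)| \le \lVert\cdot\rVert$ yields only $2|\beta|\,\lVert P\rVert\,\lVert A\rVert$, which is off by a factor of two.

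The main obstacle is recovering the missing factor of two. We sharpen $\lVert P - \omega_{sP}(P)\rVert \le \lVert P \rVert$ by replacing the crude triangle inequality with a spectral argument. Since $P$ is self-adjoint, its spectrum lies in $[p_-, p_+]$ with $\lVert P\rVert = \max(|p_-|,|p_+|)$. Because $\omega_{sP}$ is a state, $c \coloneq \omega_{sP}(P) \in [p_-, p_+]$. Then $\sigma(P - c) \subseteq [p_- - c, p_+ - c]$, so
\begin{align*}
    \lVert P - c\rVert \le \max(p_+ - c, \, c - p_-) \le p_+ - p_- \le 2\lVert P\rVert \, .
\end{align*}
This still gives only $2\lVert P\rVert$ in general. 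If needed, we would therefore instead recover the factor two from the anticommutator step. Writing $X = \Phi_\beta^{\tau^{sP}}(P - c)$, which is self-adjoint, we decompose
\begin{align*}
    \omega_{sP}(\{X,A\}) = \omega_{sP}(XA) + \omega_{sP}(AX) \, .
\end{align*}
We bound each term by Cauchy--Schwarz, and use that $\omega_{sP}(X) = 0$ because $\omega_{sP}$ is $\tau^{sP}$-invariant. The goal is the variance-type bound
\begin{align*}
    |\omega_{sP}(\{X,A\})| \le 2\lVert A\rVert\, \omega_{sP}(X^2)^{1/2} \, ,
\end{align*}
where by Kadison--Schwarz $\omega_{sP}(X^2) \le \omega_{sP}\bigl((P-c)^2\bigr)$, the variance of $P$ in $\omega_{sP}$. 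The variance of a self-adjoint operator with spectrum in $[p_-,p_+]$ is at most $(p_+-p_-)^2/4 \le \lVert P\rVert^2$ (Popoviciu). This gives $|\omega_{sP}(\{X,A\})| \le 2\lVert A\rVert\,\lVert P\rVert$. Integrating over $s\in[0,1]$ with the prefactor $|\beta|/2$ then yields $|\omega_P(A) - \omega(A)| \le |\beta|\,\lVert A\rVert\,\lVert P\rVert$.
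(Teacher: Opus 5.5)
Your final argument is correct and is essentially the paper's proof: integrate Proposition~\ref{prop: QBP differential equation}, apply Cauchy--Schwarz to each term of the anticommutator to obtain $|\omega_{sP}(\{X,A\})|\leq 2\norm{A}\,\omega_{sP}(X^2)^{1/2}$, and bound this second moment by $\norm{P}^2$. The only difference is in that last bound. The paper bounds $\omega_{sP}(X^2)=\omega_{sP}\bigl(\Phi_\beta^{\tau^{sP}}(P)^2\bigr)-\omega_{sP}(P)^2\leq\norm{P}^2$ directly, whereas you use Kadison--Schwarz, $\tau^{sP}$-invariance and Popoviciu's inequality, which gives the slightly sharper, shift-invariant factor $(p_+-p_-)/2$ in place of $\norm{P}$. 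The two abandoned attempts in your middle paragraph can simply be deleted.
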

\begin{proof}
    For $\beta=0$ the two states agree. Assume $\beta\neq0$. By stationarity and normalization of $f_\beta$,
    \begin{align*}
        \omega_{sP}\bigl(\Phi_\beta^{\tau^{sP}}(P)\bigr)=\omega_{sP}(P).
    \end{align*}
    Proposition \ref{prop: QBP differential equation} and the Cauchy--Schwarz inequality for $\omega_{sP}$ give
    \begin{align*}
        \left|\frac{\d}{\d s}\omega_{sP}(A)\right|
        &\leq |\beta|\,\norm A
        \sqrt{\omega_{sP}\bigl((\Phi_\beta^{\tau^{sP}}(P))^2\bigr)
                     -\omega_{sP}(P)^2}\\
        &\leq |\beta|\,\norm A\,\norm{\Phi_\beta^{\tau^{sP}}(P)}
        \leq |\beta|\,\norm A\,\norm P.
    \end{align*}
    Integrating over $s\in[0,1]$ proves the claim.
\end{proof}

\section{Locality properties of KMS states} \label{sec: locality of KMS}

We give three concepts that quantify the locality of KMS states for infinitely extended systems, namely, decay of correlations, local perturbations perturb locally (LPPL) and local indistinguishability (LI). In Section \ref{sec: qbp fermions} we establish implications between suitable versions of these notions under the stated uniformity assumptions. For concreteness, we consider the quasi-local algebra $\mA_\Gamma$ and some locally generated $\tau$ with a $(\tau, \beta)$-KMS state $\omega$ for some $\beta \in \R$.

First, we define the notion of decay of correlations. This expresses that the covariance of two distant local operators is small. For one-dimensional translation-invariant finite-range systems, exponential clustering is available at every positive temperature \cite{A69}. More general interactions with tails require additional decay assumptions and, depending on the interaction class, temperature restrictions; see \cite{PP23,KK25}. In arbitrary dimension, high-temperature clustering for finite-range spin and fermion systems is established in \cite{KGRE14}. See also \cite[Section 3]{CMTW25} for a more thorough discussion.

\medskip
\begin{definition} \label{def: decay of corr}
   Let $\beta \in \R$ and $\omega$ be a $(\tau, \beta)$-KMS state on $\mA_\Gamma$. We say that $\omega$ satisfies decay of correlations with respect to the non-increasing function $\zeta_\cor: [0, \infty) \to (0, \infty)$, the non-decreasing function $f_\cor: [0, \infty) \to [0, \infty)$ and $n_\cor \in \N_0$ if
    \begin{align*}
        |\omega(A B) - \omega(A) \, \omega(B)| \leq \lVert A \rVert \, \lVert B \rVert \, |X|^{n_\cor} \, f_\cor(|Y|) \, \zeta_\cor(d(X, Y)) \, ,
    \end{align*}
    for all $X$, $Y \in P_0 (\Gamma)$ and all $A \in \mA_X$, $B \in \mA_Y$, where one of them is even. We require $\lim_{r \to \infty} \zeta_\cor (r) = 0$. Moreover, we assume that $f_\cor(r) \geq 1$ for $r \geq 1$.
\end{definition}

\medskip
\begin{remark}
    For clarity in our exposition we have chosen not to symmetrize the bound in the cardinality of $X$ and $Y$. For disjoint supports the observables commute, since one is even. Interchanging $A$ and $B$ therefore gives the same estimate with $|X|^{n_\cor}f_\cor(|Y|)$ replaced by $|Y|^{n_\cor}f_\cor(|X|)$. As in the preceding work \cite{CMTW25}, we allow the dependence on the support of one of the operators to be exponential, while the dependence on the other support must be polynomial. In the long-range, high-temperature setting of \cite{MSAC26}, the clustering assumption allows exponential dependence on both supports.
\end{remark}

We also deal with extensive perturbations of KMS states. To quantify the locality in such situations, we introduce the notion of $V$-uniform decay of correlations.

\medskip
\begin{definition} \label{def: uniform doc}
    Let $\beta \in \R$ and $\omega$ be a $(\tau, \beta)$-KMS state on $\mA_\Gamma$. Let $V \in \mP_F$ be an extensive perturbation. We say that $\omega$ satisfies $V$-uniform decay of correlations with respect to $\zeta_\cor$, $f_\cor$ and $n_\cor$ if the states $\omega_{s V_M}$ satisfy decay of correlations uniformly in $M \in P_0 (\Gamma)$ and $s \in [0, 1]$.
\end{definition}

Next, we define the notion of LPPL for quasi-local perturbations. This expresses that a local perturbation $P$ has only a small effect on the expectation values of observables supported far from it.

\medskip
\begin{definition} \label{def: LPPL}
    Let $\beta \in \R$ and $\omega$ be a $(\tau, \beta)$-KMS state on $\mA_\Gamma$. We say that $\omega$ satisfies LPPL with respect to the non-decreasing functions $f_\lppl: [0, \infty) \to [0, \infty)$, $g_\lppl: [0, \infty) \to [0, \infty)$ and the non-increasing function $\zeta_\lppl: [0, \infty) \to [0, \infty)$ if
    \begin{align*}
        |\omega_P (A) - \omega (A)| \leq \lVert A \rVert  \, f_\lppl (|Y| )  \, g_\lppl(\lVert P \rVert_{F, x}) \, \zeta_\lppl (d(x, Y))  \, ,
    \end{align*}
    for all $Y \in P_0(\Gamma)$, $x \in \Gamma$ and $P = P^* \in \mA_F^+$, $A \in \mA_Y$. We require $\lim_{r\to\infty}\zeta_\lppl(r)=0$.
\end{definition}
The third notion we use to quantify the locality of KMS states is LI. We formulate LI relative to a fixed KMS state $\omega$ of $H$ and a perturbing interaction $V$. The comparison is between the states obtained by adding $V$ on different finite patches. For $H=0$, these are finite-volume Gibbs states on the patches, extended by the tracial state outside, and the definition recovers the corresponding finite-volume comparison, see, for example, \cite[Definition 3]{CMTW25}.

\medskip
\begin{definition} \label{def: LI}
    Let $\beta \in \R$ and $\omega$ be a $(\tau, \beta)$-KMS state on $\mA_\Gamma$. We fix an extensive perturbation $V \in \mP_F$. We say that $\omega$ satisfies LI with respect to the non-increasing function $\zeta_\li: [0, \infty) \to [0, \infty)$ and some non-decreasing function $f_\li: [0, \infty) \to [0, \infty)$, $g_\li: [0, \infty) \to [0, \infty)$ if
    \begin{align*}
        |\omega_{V_Z} (A) - \omega_{V_Y} (A)| \leq \lVert A \rVert \, f_\li (|X|) \,  g_\li(\lVert V \rVert_F) \, \zeta_\li (d(X, Z \setminus Y))  \, ,
    \end{align*}
    for all $X \subsetneq Y \subsetneq Z \in P_0 (\Gamma)$ and $A \in \mA_X$. We require that $ \lim_{r \to \infty}\zeta_\li (r) =0$.
\end{definition}

\medskip
\begin{remark}
LI is sometimes described by saying that enlarging a finite box has an asymptotically negligible effect on observables far from its boundary. Our definition is stronger, since the bound is uniform over all finite regions, including disconnected ones.
\\
Consider the nearest-neighbor ferromagnetic Ising model on $\mathbb Z^2$ at zero magnetic field and inverse temperature $\beta>\beta_c$, with $H=0$ and open boundary conditions for the finite patches of $V$. Along increasing boxes, the states $\omega_{V_X}$ converge to the equal-weight mixture of the plus and minus phases by spin-flip symmetry and \cite[Propositions 6.9 and 8.5]{GHM01}. Nevertheless, LI fails. Indeed, let $r\geq1$, choose $x,y\in\mathbb Z^2$ with $d(x,y)>2r+1$, and set $X=\{x,y\}$ and $Y=B_r(x)\cup B_r(y)$. Since the two balls do not interact, factorization and spin-flip symmetry give
\begin{align*}
    \omega_{V_Y}(\sigma_x^z\sigma_y^z)=0.
\end{align*}
Writing $m_\beta=\omega^+(\sigma_0^z)>0$, convergence along boxes and Griffiths' inequality show that a sufficiently large box $Z\supsetneq Y$ can be chosen such that
\begin{align*}
    \omega_{V_Z}(\sigma_x^z\sigma_y^z)\geq\frac{m_\beta^2}{2}.
\end{align*}
Since $|X|=2$ and $d(X,Z\setminus Y)>r$, LI would imply
\begin{align*}
    0<\frac{m_\beta^2}{2}
    &\leq\bigl|\omega_{V_Z}(\sigma_x^z\sigma_y^z)
              -\omega_{V_Y}(\sigma_x^z\sigma_y^z)\bigr|\\
    &\leq f_\li(2)g_\li(\lVert V\rVert_F)\zeta_\li(r).
\end{align*}
Sending $r\to\infty$ gives a contradiction. Thus considering only connected boxes does not capture LI.
\end{remark}

LI implies that for any IAS $(\Lambda_n)_{n \in \N}$ the corresponding sequence of perturbed KMS states $(\omega_{V_n})_{n \in \N}$ converges to a $(\tau^V, \beta)$-KMS state. We have the following proposition.

\medskip
\begin{proposition} \label{prop: convergence from li}
    Let $H,V\in\mP_F$ for a decay function $F$, and assume that $\mL_{H+sV}$ generates a dynamics $\tau^{sV}$ for every $s\in[0,1]$. Let $\omega$ be a $(\tau^{0V},\beta)$-KMS state. Fix $s\in[0,1]$ and assume that $\omega$ satisfies LI for $sV$. For any IAS $(\Lambda_n)_{n\in\N}$, set $V_n=V_{\Lambda_n}$. Then there is a $(\tau^{sV},\beta)$-KMS state $\omega_{sV}$ such that
    \begin{align*}
        \lim_{n\to\infty}\omega_{sV_n}(A)=\omega_{sV}(A),\qquad A\in\mA_\Gamma.
    \end{align*}
    The limit is independent of the IAS. If LI holds for all $sV$, $s\in[0,1]$, with the same functions $\zeta_\li$, $f_\li$ and $g_\li$, then the convergence is uniform in $s\in[0,1]$.
\end{proposition}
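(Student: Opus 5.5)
We prove Proposition \ref{prop: convergence from li} in four steps: a Cauchy estimate on local observables, density to extend to all of $\mA_\Gamma$, a limit argument for the KMS condition, and uniformity in $s$. The main tool is LI itself, which controls how much the patch states can move when the patch is enlarged away from an observable's support.

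\textbf{Step 1: Cauchy property on $\mA_0$.} Fix $s$ and an IAS $(\Lambda_n)$. Let $A\in\mA_X$ with $X\in P_0(\Gamma)$. Choose $N$ with $X\subsetneq\Lambda_N$; after passing to a subsequence if necessary, the inclusions $\Lambda_n\subsetneq\Lambda_m$ are strict. For $N\le n<m$, LI with $Y=\Lambda_n$ and $Z=\Lambda_m$ gives
\begin{align*}
|\omega_{sV_m}(A)-\omega_{sV_n}(A)|\le\norm{A}\,f_\li(|X|)\,g_\li(\norm{sV}_F)\,\zeta_\li\bigl(d(X,\Lambda_m\setminus\Lambda_n)\bigr).
\end{align*}
Since $\Lambda_n$ exhausts $\Gamma$, we have $d(X,\Gamma\setminus\Lambda_n)\to\infty$, and $\zeta_\li(r)\to0$. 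Hence $(\omega_{sV_n}(A))_n$ is Cauchy. The bound for the full sequence follows from the subsequence by interleaving, because any two indices can be compared through a common larger strict superset. The states are uniformly bounded, $\norm{\omega_{sV_n}}=1$, and $\mA_0$ is dense in $\mA_\Gamma$. An $\eps/3$ argument therefore gives convergence for every $A\in\mA_\Gamma$. The limit $\omega_{sV}$ is a positive, normalized linear functional, hence a state.

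\textbf{Step 2: Independence of the IAS.} Given two IAS $(\Lambda_n)$ and $(\Lambda'_n)$, choose indices so that $\Lambda_{n_1}\subsetneq\Lambda'_{m_1}\subsetneq\Lambda_{n_2}\subsetneq\cdots$. This interleaved sequence is itself an IAS. Its patch states converge by Step 1 and contain both original sequences as subsequences. Hence the two limits agree.

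\textbf{Step 3: KMS property of the limit.} This is the main obstacle. Each $\omega_{sV_n}$ is a $(\tau^{0V+sV_n},\beta)$-KMS state by Araki's construction. Here $\tau^{0V+sV_n}$ is the perturbation of $\tau^{0V}$ by the bounded operator $sV_n$, so Lemma \ref{lem: perturbed dynamics} applies. I would use the standard criterion that weak$^*$ limits of $(\tau^{(n)},\beta)$-KMS states are $(\tau,\beta)$-KMS states whenever $\tau^{(n)}_t(A)\to\tau_t(A)$ in norm for all $A$, locally uniformly in $t$ \cite[Proposition 5.3.25]{BR2}.

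\begin{itemize}
\item \emph{Convergence of generators.} For local $A$, $\mL_{H+sV_n}(A)\to\mL_{H+sV}(A)$, because the tail of the sum over $\Phi(Z)$ with $Z\cap\operatorname{supp}A\neq\emptyset$ and $Z\not\subseteq\Lambda_n$ vanishes by unconditional convergence.
\item \emph{Convergence of dynamics.} To pass from generators to dynamics, I would use the Trotter--Kato theorem. This needs $\mA_0$ to be a core for the generator $\mL_{H+sV}$ of $\tau^{sV}$, which is part of the standing assumption that this operator generates a dynamics. Alternatively, I would use a Duhamel argument with $\mA_0$ invariant in the approximate sense provided by Lieb--Robinson bounds.
\end{itemize}

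If only a core is assumed, I would verify the KMS condition in the form $\omega(A\,\mL(B))$-type relations on a dense set of analytic elements, namely smoothed elements $\int g(t)\tau^{sV}_t(B)\,\d t$. I would then pass to the limit using norm convergence of the dynamics.

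\textbf{Step 4: Uniformity in $s$.} Suppose LI holds for all $s\in[0,1]$ with the same functions $\zeta_\li$, $f_\li$ and $g_\li$. Since $\norm{sV}_F\le\norm{V}_F$ and $g_\li$ is non-decreasing, the Cauchy bound of Step 1 for local $A$ is uniform in $s$. Letting $m\to\infty$ gives $\sup_s|\omega_{sV}(A)-\omega_{sV_n}(A)|\to0$. The density argument of Step 1 uses only $\norm{\omega}=1$, so it preserves this uniformity and extends it to all $A\in\mA_\Gamma$.
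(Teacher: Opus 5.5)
Your proof is correct and follows the paper's argument: the LI Cauchy estimate on $\mA_0$, extension by density using $\norm{\omega_{sV_n}}=1$, Trotter--Kato together with \cite[Proposition 5.3.25]{BR2} for the KMS property, and $g_\li(\norm{sV}_F)\leq g_\li(\norm V_F)$ for uniformity in $s$. Two small points. First, the core property you need holds because $\mL_{H+sV}$ is defined as the closure of $\mL^\circ_{H+sV}$ on $\mA_0$, not because it generates a dynamics, so your fallback via smoothed elements is unnecessary. Second, in Step 2 you obtain independence of the IAS qualitatively by interleaving the two sequences, whereas the paper compares both patch states with the one on $\Lambda_n\cup\widetilde\Lambda_n$ via LI, which also gives the explicit bound $2\norm A f_\li(|X|)g_\li(\norm V_F)\zeta_\li(r)$.
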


\begin{proof}
    Let $A\in\mA_X$. For all sufficiently large $n$ and $m>n$, LI gives
    \begin{align*}
        |\omega_{sV_m} (A) - \omega_{sV_n} (A)| & \leq \norm A \,  f_\li(|X|) \, g_\li(s\norm V_F)
        \zeta_\li(d(X,\Lambda_m\setminus\Lambda_n))\\
        &\leq \norm A f_\li(|X|)g_\li(\norm V_F)
        \zeta_\li(d(X,\Gamma\setminus\Lambda_n)).
    \end{align*}
    Every finite ball around $X$ is eventually contained in $\Lambda_n$. Hence the last expression tends to zero, uniformly in $s$ whenever the LI functions are uniform. For fixed $s$, the limit defines a linear functional on $\mA_0$ satisfying
    \begin{align*}
        \omega_{sV}(1)=1,\qquad
        |\omega_{sV}(A)|\leq\norm A,\qquad
        \omega_{sV}(A^*A)=\lim_{n\to\infty}\omega_{sV_n}(A^*A)\geq0.
    \end{align*}
    It therefore extends uniquely to a state on $\mA_\Gamma$. For $A\in\mA_\Gamma$ and $A_k\in\mA_0$ with $\norm{A-A_k}\to0$, we calculate
    \begin{align*}
        |\omega_{sV}(A)-\omega_{sV_n}(A)|
        &\leq 2\norm{A-A_k}
        +|\omega_{sV}(A_k)-\omega_{sV_n}(A_k)|.
    \end{align*}
    First choosing $k$ and then $n$ proves convergence on $\mA_\Gamma$, with the same uniformity in $s$. Lemma \ref{lem: str conv time evol} and \cite[Proposition 5.3.25]{BR2} show that the limit is a $(\tau^{sV},\beta)$-KMS state.

    To show independence of the IAS, let $(\widetilde\Lambda_n)_{n\in\N}$ be another IAS. For $A\in\mA_X$, choose $n$ so large that both $\Lambda_n$ and $\widetilde\Lambda_n$ contain the fattening $X_r$ for a given $r$ of $X$. Comparing both states with the state for $\Lambda_n\cup\widetilde\Lambda_n$ gives
    \begin{align*}
        |\omega_{sV_{\Lambda_n}}(A)-\omega_{sV_{\widetilde\Lambda_n}}(A)|
        &\leq 2\norm A f_\li(|X|)g_\li(\norm V_F)\zeta_\li(r).
    \end{align*}
    Taking $n\to\infty$ and then $r\to\infty$ proves equality of the limits on $\mA_0$, and hence on $\mA_\Gamma$.
\end{proof}

\medskip
\begin{remark}
    LI does not imply uniqueness of the KMS state for the perturbed dynamics, even if the reference dynamics is trivial while $V$ generates a nontrivial dynamics. Consider the spin system on $\mathbb Z^2$ and define
    \begin{align*}
        V(\{x\})&=-4\sigma_x^z,\\
        V(\{x,y\})&=-\sigma_x^z\sigma_y^z+\sigma_x^z+\sigma_y^z,
        \qquad d(x,y)=1,
    \end{align*}
    with $V(M)=0$ otherwise. For a finite region $\Lambda$,
    \begin{align*}
        \sum_{M\subseteq\Lambda}V(M)
        =-\sum_{\substack{\{x,y\}\subseteq\Lambda\\d(x,y)=1}}
             \sigma_x^z\sigma_y^z
         -\sum_{x\in\Lambda}
             \bigl(4-\deg_\Lambda(x)\bigr)\sigma_x^z,
    \end{align*}
    where $\deg_\Lambda(x)=|\{y\in\Lambda:d(x,y)=1\}|$. This is the nearest-neighbor Ising Hamiltonian with $+$ boundary condition. At sufficiently large $\beta$, these Gibbs states satisfy exponential decay of correlations uniformly over all finite $\Lambda$ by \cite[Corollary~7.11 and Theorem~7.12]{GHM01}. Arbitrary regions are covered by the volume monotonicity in \cite[Eq.~(14)]{GHM01}, since the connection event in Theorem~7.12 is decreasing. Since $V$ is finite-range, Theorems \ref{thm: LPPL without doc} and \ref{thm: li from lppl} imply LI for the tracial state with respect to $V$. The estimate extends to quantum observables because diagonal projection in the $\sigma^z$-basis preserves expectations and products on disjoint supports, and contracts the operator norm.
    \\
    In infinite volume, the four linear edge contributions at each site cancel the singleton term. Hence $V$ generates the standard zero-field Ising dynamics. The states selected by its finite patches converge to the plus phase, whereas for $\beta>\beta_c$ the same dynamics has the two distinct KMS states $\omega^+$ and $\omega^-$. Thus LI selects a state for the specified finite patches but does not imply uniqueness. It also depends on the chosen interaction, rather than only on the infinite-volume dynamics.
\end{remark}

\subsection{Relations between decay of correlations, LPPL, and LI} \label{sec: qbp fermions}

In this section, we use QBP, as introduced in Section \ref{sec: general qbp}, to relate locality properties of KMS states of infinitely extended fermionic systems. For perturbations of the tracial state, the three implications proved below form the following cycle under their respective decay, support and uniformity assumptions (see also \cite[Figure 1]{CMTW25}):

\begin{center}
\begin{tikzpicture}[
    node distance=3.2cm,
    locality/.style={rounded corners, align=center, inner sep=6pt},
    implication/.style={ ->, >=latex, thick}
]
    \node[locality] (cor) {Decay of\\correlations};
    \node[locality,right=of cor] (lppl) {LPPL};
    \node[locality,right=of lppl] (li) {LI};
    \draw[implication] (cor) -- node[above, align=center, inner sep=2pt] {\scriptsize Theorems \ref{thm: LPPL}\\[-1mm]\scriptsize and \ref{thm: LPPL without doc}} (lppl);
    \draw[implication] (lppl) -- node[above, inner sep=2pt] {\scriptsize Theorem \ref{thm: li from lppl}} (li);
    \draw[implication] (li) to[bend left=35] node[below, inner sep=2pt] {\scriptsize Theorem \ref{thm: doc from LI}} (cor);
\end{tikzpicture}
\end{center}

Theorem \ref{thm: doc from LI + doc} is a separate stability statement: it starts from an already clustering KMS state and shows that exponential LI yields stretched-exponential clustering with an exponential support prefactor under an extensive perturbation, without assuming uniqueness. Theorem \ref{thm: QBP SLT} gives continuity and differentiability under the stronger assumption of clustering uniformly in both the finite patch and the coupling. The proofs are given in Section \ref{sec: proof fermions}. As a preparation, Proposition \ref{prop: loc of phi} establishes decay-norm bounds for the operator $\Phi_\beta^\tau$ defined in Equation \eqref{def: Phi}; its proof is given in Appendix \ref{app: LR bounds}.

\medskip
\begin{proposition}\label{prop: loc of phi}
    Let $\Psi$ and $V$ be interactions, $P=P^*\in\mA_\Gamma^+$ and $\beta\in\R$. For $X\in P_0(\Gamma)$, let $\tau^{X,P}$ be the dynamics generated by $\mL_\Psi+\mL_{V_X+P}$. We distinguish two regimes:
    \begin{enumerate}
    \item \textbf{Polynomial decay.} Fix $C_\inter>0$, $\nu>0$ and $\mu>3D+\nu$. Assume that $\Psi,V\in\mP_\mu$, $P\in\mA_\nu^+$ and
    \begin{align*}
        \norm\Psi_\mu+\norm V_\mu<C_\inter.
    \end{align*}
    Then there are polynomials $f$ and $h$ such that
    \begin{align*}
        \norm{\tau_t^{X,P}(A)}_{\nu,x}
        &\leq f(|t|)(1+\norm P_{\nu,x})\norm A_{\nu,x},\\
        \norm{\Phi_\beta^{\tau^{X,P}}(A)}_{\nu,x}
        &\leq h(|\beta|)(1+\norm P_{\nu,x})\norm A_{\nu,x}
    \end{align*}
    for $t\in\R$, $x\in\Gamma$ and $A\in\mA_\nu$. Their coefficients depend only on the lattice, $\mu$, $\nu$ and $C_\inter$, and can be chosen uniformly in $X$, $P$, $\Psi$ and $V$ subject to these assumptions.

    \item \textbf{Exponential decay.} Fix $a,b,c>0$ such that $b > 4 a$ and $c > 8 a$. There exists $C_\inter=C_\inter(a,c)>0$ such that, if $P\in\mA_{\exp(-b\cdot)}^+$ and
    \begin{align*}
        \norm\Psi_{\exp(-c\cdot)}+\norm V_{\exp(-c\cdot)}
        <\frac{C_\inter}{1+|\beta|},
    \end{align*}
    then
    \begin{align*}
        \norm{\Phi_\beta^{\tau^{X,P}}(A)}_{\exp(-a\cdot),x}
        \leq C(1+|\beta|)(1+\norm P_{\exp(-b\cdot),x})
        \norm A_{\exp(-b\cdot),x}
    \end{align*}
    for $x\in\Gamma$ and $A\in\mA_{\exp(-b\cdot)}$. The constant $C>0$ depends only on the lattice, $a$, $b$ and $c$, and is uniform in $X$, $P$, $\beta$, $\Psi$ and $V$ subject to these assumptions.
    \end{enumerate}
\end{proposition}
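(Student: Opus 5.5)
The plan is to reduce both parts to a Lieb--Robinson-type locality estimate for the perturbed dynamics $\tau^{X,P}$, measured in the norms $\norm{\cdot}_{F,x}$, and then integrate against $f_\beta$.

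\textbf{Step 1: splitting the generator.} We write $\tau^{X,P}$ as the dynamics of the interaction $\Psi+V_X$, where $V_X$ is understood as the interaction $Z\mapsto V(Z)\mathbf 1_{Z\subseteq X}$, perturbed by the bounded derivation $\mL_P$. Since $\norm{\cdot}_F$ can only decrease under truncation, $\norm{\Psi+V_X}_F\leq\norm\Psi_F+\norm V_F$. This is the source of uniformity in $X$. The fermionic Lieb--Robinson bounds of \cite{TW25} (long range) and the standard exponential bounds (short range) then give $\norm{[\tau^{X,0}_t(A),B]}$ estimates with constants depending only on $C_\inter$. The next task is to convert these into bounds on $\norm{A_t-\E_{B_k(x)}A_t}$, where $A_t:=\tau^{X,0}_t(A)$. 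For this we use the standard telescoping decomposition $A=\sum_j(\E_{B_{j}(x)}-\E_{B_{j-1}(x)})A$, whose $j$-th term has norm $\lesssim F(j-1)\norm A_{F,x}$. Each piece is evolved and localized via the Lieb--Robinson bound together with the commutator characterization of $\E$, which is applicable because the evolved pieces are even or odd. In the polynomial case, the condition $\mu>3D+\nu$ pays for the volume factors $|B_j|\sim j^D$ and for the loss in the LR exponent, and it yields $\norm{\tau^{X,0}_t(A)}_{\nu,x}\leq f_0(|t|)\norm A_{\nu,x}$ with $f_0$ a polynomial. In the exponential case, the bound has the form $C\e^{v|t|-a' k}$ with LR velocity $v\propto\norm\Psi_{\exp(-c\cdot)}+\norm V_{\exp(-c\cdot)}$. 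The conditions $b>4a$ and $c>8a$ leave room to lose a fixed fraction of the decay rate twice, once in the LR bound and once in the telescoping.

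\textbf{Step 2: adding $P$.} We pass to the interaction picture: $\tau^{X,P}_t=\tau^{X,0}_t\circ$ (a cocycle), or equivalently we use the Dyson series of Lemma~\ref{lem: perturbed dynamics}, in which each term involves $[\tau_{t_j}(P),\cdot]$. Iterating leads to exponential dependence on $|t|\norm P$, which is not linear in $\norm P_{\nu,x}$ as the statement requires. Instead I would use the Duhamel identity
\begin{align*}
\tau^{X,P}_t(A)-\tau^{X,0}_t(A)=\i\int_0^t\d s\,\tau^{X,P}_{s}\bigl([P,\tau^{X,0}_{t-s}(A)]\bigr)
\end{align*}
and bound the localization error at scale $k$ as follows. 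The commutator $[P,\cdot]$ of two operators localized near $x$ is again localized near $x$, with norm at most $2\norm P\norm{\cdot}$ and decay controlled by $\norm P_{F,x}\norm{\cdot}_{F,x}$. The outer $\tau^{X,P}_s$ is then handled by a Grönwall argument on $g(t):=\sup_A\norm{\tau^{X,P}_t(A)}_{F,x}/\norm A_{F,x}$. The naive Grönwall bound is exponential in $\norm P$; to obtain linear dependence I would instead exploit that $\tau^{X,P}_s$ is isometric in operator norm. One then only needs the localization of $\tau^{X,P}_s$ on an operator that is already $F$-localized at $x$, and its tail can be estimated by a second application of the unperturbed LR bound, since $P$ lives near $x$ and does not affect the propagation far away. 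This produces the factor $(1+\norm P_{\nu,x})$ times a polynomial in $|t|$.

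\textbf{Step 3: integration against $f_\beta$.} We have $\Phi^{\tau^{X,P}}_\beta(A)=\int f_\beta(t)\tau^{X,P}_{-t}(A)\,\d t$, and $f_\beta(t)=\beta^{-1}f_1(t/\beta)$ decays like $\e^{-\pi|t|/|\beta|}$. In the polynomial case, $\int f_\beta(t)f(|t|)\,\d t$ is a polynomial $h(|\beta|)$. In the exponential case the localized bound grows like $\e^{v|t|}$. It is integrable against $\e^{-\pi|t|/|\beta|}$ exactly when $v|\beta|<\pi$, which is why we impose the smallness condition $\norm\Psi+\norm V<C_\inter/(1+|\beta|)$. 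The integral is then $\lesssim(1+|\beta|)$. Alternatively, one splits the integral at $|t|\approx k/(2v)$ and uses the trivial bound $2\norm A$ beyond this cutoff, with the tail controlled by $f_\beta$; this costs a further halving of the decay rate and is consistent with $b>4a$.

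\textbf{Main obstacle.} I expect the main obstacle to be Step 2 in the exponential regime. There we need bounds linear in $\norm P_{\exp(-b\cdot),x}$, uniform in $\beta$ after the smallness condition, while $P$ is only quasi-local. Getting the rate bookkeeping ($b$ to $a$, with the factors 4 and 8) to close will require careful choices of intermediate decay rates.
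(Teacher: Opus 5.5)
Your Step 2 carries the real content of the proposition: the bound must be linear in the quasi-local $P$. That step has a genuine gap.

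The Duhamel identity you chose puts the full perturbed dynamics $\tau^{X,P}_s$ outside the commutator. Bounding the integrand therefore requires a localization estimate for $\tau^{X,P}_s$, which is the statement being proved. Saying ``$P$ lives near $x$ and does not affect the propagation far away'' is not an argument. $P$ has no finite support, and what you need is a Lieb--Robinson bound for the perturbed dynamics with explicit dependence on $P$.

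Suppose you supply the natural such bound, $\norm{[\tau^P_t(C),B]}\leq\norm C\norm B\bigl(\zeta_\lr+2|t|\norm P\zeta_\lr\bigr)$ for strictly local $P$, and you pay separately for truncating $P$. The integrand's argument $[P,\tau^{X,0}_{t-s}(A)]$ already has norm up to $2\norm P\norm A$. Inserting the bound then gives a factor of order $\norm P_{\nu,x}(1+|t|\norm P)$. This is quadratic in $P$, not the claimed $1+\norm P_{\nu,x}$, and it is expanding around $\tau^{X,0}$ that creates the extra power.

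The missing idea is to avoid expanding in $P$ at all. For each scale $k$ and centre $z$, the paper compares $\tau^{X,P}_t$ with the dynamics $\tau^{X,P,k}$ generated by the terms of $\Psi$ and $V_X$ contained in $B_k(z)$, together with the truncated perturbation $P_k=\E_{B_{k/4}(z)}P$. This dynamics maps $\mA_{B_k(z)}$ into itself, so $(1-\E_{B_k(z)})\tau^{X,P,k}_t(\E_{B_{k/4}(z)}A)=0$. Duhamel then leaves only three contributions:
\begin{enumerate}
\item the tail commutator $[(1-\E_{B_{k/4}(z)})P,\,\cdot\,]$, which costs $|t|F(k/4)\norm P_{F,z}\norm A$;
\item interaction terms meeting $B_{k/2}(z)$ and leaving $B_k(z)$, which have diameter at least $k/2$;
\item interaction terms far from both $\E_{B_{k/4}(z)}A$ and $P_k$.
\end{enumerate}
For the third group you move the dynamics onto the even interaction term and apply Lemma \ref{lem: perturbed LRB}. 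Its proof compares the perturbed and unperturbed evolution of whichever observable lies far from the strictly local $P_k$, and it produces the factor $1+2|t|\norm P$ exactly once. All contributions are therefore linear in $\norm P_{F,z}$.

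This route has a second advantage. It only evolves even operators and needs no commutator characterization of $\E$. Your Step 1, by contrast, evolves possibly odd pieces of $A$; for these, ordinary commutators with distant odd operators do not vanish, so graded commutators would be needed.

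Your Step 3 matches the paper: split the $f_\beta$-integral at a time proportional to $k$ and use $\int_{|t|\geq T}f_\beta\leq\e^{-\pi T/|\beta|}$.
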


We start with the LPPL statement assuming decay of correlations along a path of KMS states. This statement generalizes \cite[Theorem 25]{CMTW25}, where only finite systems, lattice spins and strictly local perturbations were considered.

\medskip
\begin{theorem}[LPPL from clustering along a path]\label{thm: LPPL}
    Let $H$ be an interaction, $\tau$ the dynamics generated by $\mL_H$, $\beta\in\R$, and $\omega$ a $(\tau,\beta)$-KMS state. Fix $P=P^*\in\mA_\Gamma^+$ and assume that $\omega_{sP}$ satisfies decay of correlations with respect to $\zeta_\cor$, $f_\cor$ and $n_\cor$, uniformly in $s\in[0,1]$.
    \begin{enumerate}
    \item \textbf{Polynomial decay.} Fix $C_\inter>0$, $\nu>0$ and $\mu>3D+\nu$. Assume that $H\in\mP_\mu$, $P\in\mA_\nu^+$ and
    \begin{align*}
        \norm H_\mu<C_\inter,\qquad
        (1+r)^{-\nu}\leq(1+r)^{Dn_\cor}\zeta_\cor(r),\quad r\geq0.
    \end{align*}
    Then there is a polynomial $g$ such that
    \begin{align*}
        |\omega_P(A)-\omega(A)|
        &\leq g(|\beta|)\norm A f_\cor(|Y|)
           (1+d(x,Y))^{Dn_\cor}\zeta_\cor(d(x,Y)/2)\\
        &\quad\times(1+\norm P_{\nu,x})\norm P_{\nu,x}.
    \end{align*}

    \item \textbf{Exponential decay.} Fix $a,b,c>0$ such that $b> 4 a$ and $c> 8 a$, and let $C_\inter=C_\inter(a,c)$ be as in Proposition \ref{prop: loc of phi}. Assume that $H\in\mP_{\exp(-c\cdot)}$, $P\in\mA_{\exp(-b\cdot)}^+$ and
    \begin{align*}
        \norm H_{\exp(-c\cdot)}&<\frac{C_\inter}{1+|\beta|}, \qquad \exp(-ar) \leq(1+r)^{Dn_\cor}\zeta_\cor(r),\qquad r\geq0.
    \end{align*}
    Then there is a constant $C>0$ such that
    \begin{align*}
        |\omega_P(A)-\omega(A)|
        &\leq C|\beta|(1+|\beta|)\norm A f_\cor(|Y|)
          (1+d(x,Y))^{Dn_\cor}\zeta_\cor(d(x,Y)/2)\\
        &\quad\times(1+\norm P_{\exp(-b\cdot),x})
          \norm P_{\exp(-b\cdot),x}.
    \end{align*}
    \end{enumerate}
    Both bounds hold for $x\in\Gamma$, $Y\in P_0(\Gamma)$ and $A\in\mA_Y$. The coefficients of $g$ and the constant $C$ are independent of $H$, $P$ and $\beta$ subject to the stated restrictions.
\end{theorem}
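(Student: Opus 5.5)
We integrate the QBP equation of Proposition \ref{prop: QBP differential equation} along the path $s\mapsto\omega_{sP}$, $s\in[0,1]$. Since the states agree for $\beta=0$, assume $\beta\neq0$ and write $\tau^{sP}$ for the dynamics generated by $\mL_H+\mL_{sP}$. Then
\begin{align*}
    \omega_P(A)-\omega(A)=-\frac{\beta}{2}\int_0^1\d s\,\omega_{sP}\bigl(\bigl\{\Phi_\beta^{\tau^{sP}}(P)-\omega_{sP}(P),\,A\bigr\}\bigr).
\end{align*}
By stationarity and normalization of $f_\beta$, $\omega_{sP}(\Phi_\beta^{\tau^{sP}}(P))=\omega_{sP}(P)$, as in the proof of Corollary \ref{cor: cont. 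KMS state}. The integrand is therefore the sum of the two covariances
\begin{align*}
    \omega_{sP}(QA)-\omega_{sP}(Q)\omega_{sP}(A)\quad\text{and}\quad\omega_{sP}(AQ)-\omega_{sP}(A)\omega_{sP}(Q),
\end{align*}
where $Q\coloneq\Phi_\beta^{\tau^{sP}}(P)$. Since $P$ is even and $\tau^{sP}$ commutes with the parity automorphism $\Theta$ (both $H$ and $P$ are even), $Q$ is even. The task reduces to bounding the covariance of the quasi-local even element $Q$, localized around $x$, with $A\in\mA_Y$, uniformly in $s$.

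For the locality of $Q$ we apply Proposition \ref{prop: loc of phi} with $\Psi=H$, $V=0$, $X=\emptyset$ and perturbation $sP$. In the polynomial regime this gives $\norm{Q}_{\nu,x}\le h(|\beta|)(1+\norm P_{\nu,x})\norm P_{\nu,x}$, using $\norm{sP}_{\nu,x}\le\norm P_{\nu,x}$. In the exponential regime it gives $\norm{Q}_{\exp(-a\cdot),x}\le C(1+|\beta|)(1+\norm P_{\exp(-b\cdot),x})\norm P_{\exp(-b\cdot),x}$. Set $r\coloneq d(x,Y)$ and $k\coloneq\lfloor r/2\rfloor$. We split
\begin{align*}
    Q=\E_{B_k(x)}Q+\bigl(Q-\E_{B_k(x)}Q\bigr).
\end{align*}
The first term is even, since the fermionic conditional expectation preserves parity, and is supported in $B_k(x)$. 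It has norm at most $\norm Q$, and $|B_k(x)|\le C_\vol(1+r)^D$. By decay of correlations for $\omega_{sP}$, with the even operator supported in $X=B_k(x)$, its covariance with $A$ is bounded by
\begin{align*}
    \norm Q\norm A\,C_\vol^{n_\cor}(1+r)^{Dn_\cor}f_\cor(|Y|)\,\zeta_\cor(d(B_k(x),Y)),
\end{align*}
and $d(B_k(x),Y)\ge r/2$. The tail term has norm at most $\norm Q_{F,x}F(k)$, so its covariance is at most $2\norm A\norm Q_{F,x}F(r/2)$, up to the shift from the integer part. The hypothesis $F(r)\le(1+r)^{Dn_\cor}\zeta_\cor(r)$, with $F=(1+\cdot)^{-\nu}$ or $F=\exp(-a\cdot)$, absorbs this tail into the same form as the first term, using $f_\cor(|Y|)\ge1$. (If $Y$ were empty the statement is trivial.) Integrating over $s$ and multiplying by $|\beta|/2$ gives the stated bounds: in the polynomial case $g(|\beta|)=|\beta|h(|\beta|)$ times constants, and in the exponential case the prefactor $C|\beta|(1+|\beta|)$.

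The main obstacle is the locality step for $\Phi_\beta^{\tau^{sP}}$, which requires Lieb--Robinson bounds for the perturbed dynamics uniform in $s$. This is supplied by Proposition \ref{prop: loc of phi}, so the remaining care is bookkeeping: the floor in $k$ (handled by monotonicity and the convention $\zeta_\cor(d/2)$), constants depending only on the lattice, and checking evenness of $\E_{B_k(x)}Q$ so that Definition \ref{def: decay of corr} applies.
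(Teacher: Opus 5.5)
Your proposal is correct and follows the paper's proof. Both integrate the QBP equation along $s\mapsto\omega_{sP}$, note that $\Phi_\beta^{\tau^{sP}}(P)$ is even, split it with $\E_{B_k(x)}$ at $k=\lfloor d(x,Y)/2\rfloor$ (the paper packages this step as Lemma~\ref{lem: decay of correlations for A_infty}), and bound the decay norm via Proposition~\ref{prop: loc of phi}; the details you leave implicit, namely the case $d(x,Y)=0$ (trivial covariance bound, using $\zeta_\cor(0)\geq1$ and $f_\cor(|Y|)\geq1$) and commuting the even local part $\E_{B_k(x)}Q$ past $A$ for the ordering $\omega_{sP}(AQ)$ once $d(x,Y)\geq1$, are exactly those treated in that lemma.
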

In the next theorem we show LPPL assuming decay of correlations only for the initial KMS state. The price is exponential, rather than polynomial, dependence on the decay norm of the perturbation. This extends \cite[Theorem 22]{CMTW25} to infinite fermionic systems and quasi-local perturbations. Common clustering data give bounds uniform over finite background patches. We include these patches in the statement, as needed for the passage to extensive perturbations.

\medskip
\begin{theorem}[LPPL from clustering of the initial state]\label{thm: LPPL without doc}
    Let $H,V$ be interactions, $X\in P_0(\Gamma)$, and $\tau^X$ the dynamics generated by $\mL_H+\mL_{V_X}$. Let $\beta\in\R$ and let $\omega_{V_X}$ be a $(\tau^X,\beta)$-KMS state satisfying decay of correlations with respect to $\zeta_\cor$, $f_\cor$ and $n_\cor$. Fix $P=P^*\in\mA_\Gamma^+$ and write $\omega_{V_X+P}=(\omega_{V_X})_P$.
    \begin{enumerate}
    \item \textbf{Polynomial decay.} Fix $C_\inter>0$, $\nu>0$ and $\mu>3D+\nu$. Assume that $H,V\in\mP_\mu$, $P\in\mA_\nu^+$ and
    \begin{align*}
        \norm H_\mu+\norm V_\mu&<C_\inter, \qquad (1+r)^{-\nu}\leq(1+r)^{Dn_\cor}\zeta_\cor(r),\qquad r\geq0.
    \end{align*}
    Then there are a polynomial $g$ and a constant $C>0$ such that
    \begin{align*}
        |\omega_{V_X+P}(A)-\omega_{V_X}(A)|
        &\leq C\norm A f_\cor(|Y|)(1+d(x,Y))^{Dn_\cor}
            \zeta_\cor(d(x,Y)/2)\\
        &\quad\times\Bigl(\exp\bigl(g(|\beta|)(1+\norm P_{\nu,x})
                 \norm P_{\nu,x}\bigr)-1\Bigr).
    \end{align*}

    \item \textbf{Exponential decay.} Fix $a,b,c>0$ such that $b > 4a$ and $c> 8 a$, and let $C_\inter=C_\inter(a,c)$ be as in Proposition \ref{prop: loc of phi}. Assume that $H,V\in\mP_{\exp(-c\cdot)}$, $P\in\mA_{\exp(-b\cdot)}^+$ and
    \begin{align*}
        \norm H_{\exp(-c\cdot)}+\norm V_{\exp(-c\cdot)}
        &<\frac{C_\inter}{1+|\beta|},\\
        \exp(-ar)&\leq(1+r)^{Dn_\cor}\zeta_\cor(r),\qquad r\geq0.
    \end{align*}
    Then there are constants $C$ ,$C^\prime>0$ such that
    \begin{align*}
        &|\omega_{V_X+P}(A)-\omega_{V_X}(A)|\\
        &\quad\leq C\norm A f_\cor(|Y|)(1+d(x,Y))^{Dn_\cor}
              \zeta_\cor(d(x,Y)/2)\\
        &\qquad\times\Bigl(\exp \bigl( C^\prime|\beta|(1+|\beta|)
              (1+\norm P_{\exp(-b\cdot),x})
              \norm P_{\exp(-b\cdot),x} \bigr)-1\Bigr).
    \end{align*}
    \end{enumerate}
    Both bounds hold for $x\in\Gamma$, $Y\in P_0(\Gamma)$ and $A\in\mA_Y$. The coefficients of $g$ and the constants $C$ and $C^\prime$ are independent of $H$, $V$, $X$, $P$ and $\beta$ subject to the stated restrictions.
\end{theorem}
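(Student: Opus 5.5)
The plan is to bootstrap Theorem \ref{thm: LPPL} by controlling how clustering degrades along the path $s\mapsto\omega_{V_X+sP}$, $s\in[0,1]$, since only the endpoint $s=0$ is assumed to cluster. We start from the QBP equation of Proposition \ref{prop: QBP differential equation}, applied to the dynamics $\tau^{X}$ and the state $\omega_{V_X}$ perturbed by $sP$:
\begin{align*}
    \frac{\d}{\d s}\omega_{V_X+sP}(A)=-\frac{\beta}{2}\,\omega_{V_X+sP}\bigl(\bigl\{\Phi_\beta^{\tau^{X,sP}}(P-\omega_{V_X+sP}(P)),A\bigr\}\bigr).
\end{align*}
The right-hand side is a symmetrized covariance of $Q_s\coloneq\Phi_\beta^{\tau^{X,sP}}(P)$ and $A$ in the state $\omega_{V_X+sP}$. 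We do not assume this state clusters. Instead we express it through $\omega_{V_X}$: by Theorem \ref{thm: QBP} (or directly via Araki's formula $\omega_{V_X+sP}(\,\cdot\,)=\omega_{V_X}(\eta_{sP}\,\cdot\,\eta_{sP}^*)$), the covariance becomes a combination of $\omega_{V_X}$-expectations of products of quasi-local operators localized near $x$ (namely $\eta_{sP}$, $Q_s$ and their adjoints) together with $A$ localized in $Y$.

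The key step is a differential inequality. Define
\begin{align*}
    D(s)\coloneq\sup_{B}\frac{|\omega_{V_X+sP}(BA)-\omega_{V_X+sP}(B)\,\omega_{V_X+sP}(A)|}{\norm B_{\cdot,x}},
\end{align*}
where the supremum runs over even $B$ in the relevant decay class ($\mA_\nu^+$ or $\mA_{\exp(-b\cdot)}^+$). We expect the bound
\begin{align*}
    |D'(s)|\lesssim K\,\bigl(D(s)+\text{(initial clustering term)}\bigr),\qquad K= g(|\beta|)(1+\norm P_{\nu,x})\norm P_{\nu,x}
\end{align*}
(respectively $C'|\beta|(1+|\beta|)(\dots)$ in the exponential case). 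To derive it, differentiate the covariance of $B$ and $A$ in $s$ using the QBP equation. This produces three-point functions of $Q_s$, $B$ and $A$, which split into covariances of $(Q_sB)$ with $A$, of $B$ with $A$, and of $Q_s$ with $A$, each weighted by factors that are bounded by norms. Proposition \ref{prop: loc of phi} shows that $Q_s$ and $Q_sB$ remain localized at $x$ with norms controlled by $(1+\norm P)\norm P$, uniformly in $X$, $s$ and $P$. Grönwall's inequality, started from $D(0)$, then gives $D(s)\le D(0)\,\e^{Ks}$. Finally, integrating $|\tfrac{\d}{\d s}\omega(A)|\le|\beta|\,D(s)\norm{Q_s}$ over $s\in[0,1]$ produces the factor $\e^{K}-1$.

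It remains to bound $D(0)$, i.e.\ the clustering of $\omega_{V_X}$ between a quasi-local $B$ localized at $x$ and a strictly local $A\in\mA_Y$. We decompose $B$ telescopically as $\sum_k(\E_{B_{k+1}(x)}-\E_{B_k(x)})B$ and split at $k=d(x,Y)/2$. For $k$ below the cut, the piece is supported in $B_{k+1}(x)$, of volume at most $C_\vol(1+k)^D$, so the decay-of-correlations hypothesis contributes a factor $(1+d)^{Dn_\cor}\zeta_\cor(d/2)$. For the tail, each piece has norm at most $\norm B F(k)$. The hypotheses $(1+r)^{-\nu}\le(1+r)^{Dn_\cor}\zeta_\cor(r)$ (respectively $\exp(-ar)\le\dots$) absorb the tail into the same bound. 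Evenness of the conditional expectations keeps the fermionic hypothesis satisfied.

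The main obstacle is closing the Grönwall argument. The derivative of the covariance involves covariances with products such as $Q_sB$, whose decay norm must be bounded by $\norm{Q_s}_{\cdot,x}\norm B_{\cdot,x}$. This requires a product estimate for localized norms, with the attendant loss in decay rate; this loss is why $\zeta_\cor$ is evaluated at $d/2$. The difficulty is that the class of $B$ over which $D$ takes its supremum must be stable under this multiplication without the decay rate degrading at each step of the iteration. We plan to handle this by fixing the decay class once and using the uniformity in Proposition \ref{prop: loc of phi}. The stated restrictions on $\norm H+\norm V$ are what make those constants uniform in $X$.
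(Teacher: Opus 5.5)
Your approach differs from the paper's, and once the points below are fixed it proves the theorem.

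The paper never tracks the intermediate states $\omega_{V_X+sP}$. It applies Theorem~\ref{thm: QBP} once and uses $\omega_{V_X}(\eta_P\eta_P^*)=1$ to write
\[
\omega_{V_X+P}(A)-\omega_{V_X}(A)=\omega_{V_X}\bigl([\eta_P-1,A]\,\eta_P^*\bigr)+\omega_{V_X}\bigl(A(\eta_P\eta_P^*-1)\bigr)-\omega_{V_X}(A)\,\omega_{V_X}(\eta_P\eta_P^*-1).
\]
So only Lemma~\ref{lem:commutator bound} and clustering of the initial state (Lemma~\ref{lem: decay of correlations for A_infty}) enter. The factor $\exp(\cdots)-1$ comes from bounding $\norm{\eta_P-1}_{F,x}$ and $\norm{\eta_P\eta_P^*-1}_{F,x}$ through the Dyson series, using Lemma~\ref{lem: continuity decay norm} and Proposition~\ref{prop: loc of phi}.

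You instead propagate a clustering bound for $\omega_{V_X+sP}$ along the path. The bound is between $x$-centred even quasi-local observables and $A$, it uses only Proposition~\ref{prop: QBP differential equation}, and the exponential comes from Grönwall. The re-expression through $\eta_{sP}$ in your first paragraph plays no role. Your route avoids the algebra-valued transport operator. As a by-product it gives clustering near the perturbation for every intermediate state, which is a local form of the path hypothesis of Theorem~\ref{thm: LPPL}. The price is a supremum over an operator class plus reordering errors, whereas the paper's route is a single identity.

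Three points need repair.

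(i) Differentiating $\omega_s(BA)-\omega_s(B)\omega_s(A)$, with $\tilde Q_s\coloneq\Phi_\beta^{\tau^{X,sP}}(P-\omega_s(P))$, produces $\omega_s(BA\tilde Q_s)$ and $\omega_s(A\tilde Q_s)$. Here $\tilde Q_s$ sits to the right of $A$, so to reach the ordering in $D$ you must commute. This leaves the error terms $\omega_s(B[A,\tilde Q_s])$ and $\omega_s(B)\,\omega_s([A,\tilde Q_s])$.
- Using $\omega_s(\tilde Q_s)=0$ (stationarity and normalization of $f_\beta$) and Lemma~\ref{lem:commutator bound}, the correct inequality is $D(s)\le D(0)+K\int_0^s(D(u)+E)\,\d u$, where $E$ is a constant multiple of $F(d(x,Y))\norm A$.
- This gives $D(s)+E\le(D(0)+E)\e^{Ks}$, not $D(s)\le D(0)\e^{Ks}$. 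Since $F(r)\le(1+r)^{Dn_\cor}\zeta_\cor(r)$ and $f_\cor(|Y|)\ge1$, only the constants change.
- Work with this integral form, because the supremum $D$ need not be differentiable.

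(ii) In the exponential regime, the supremum must run over $\mA^+_{\exp(-a\cdot)}$, not $\mA^+_{\exp(-b\cdot)}$. The reason is that $\tilde Q_s$ is only controlled in $\norm{\cdot}_{\exp(-a\cdot),x}$, the output class of Proposition~\ref{prop: loc of phi}. With the class fixed this way, Lemma~\ref{lem: continuity decay norm} gives $\norm{\tilde Q_sB}_{F,x}\le2\norm{\tilde Q_s}_{F,x}\norm B_{F,x}$ with no loss of rate. So the obstacle you anticipate does not arise. Also, the argument $d(x,Y)/2$ in $\zeta_\cor$ comes from the splitting radius in the initial clustering estimate, not from products.

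(iii) For $D(0)$, do not estimate the telescoping pieces one by one.
- The tail gives $\sum_{k\ge d/2}2F(k)\norm B_{F,x}$, which diverges for $F(k)=(1+k)^{-\nu}$ with $\nu\le1$ and loses a power of $d$ otherwise.
- The pieces below the cut cost $\sum_{k<d/2}F(k)(1+k)^{Dn_\cor}$ instead of $(1+d)^{Dn_\cor}$.
- Instead split once, $B=\E_{B_k(x)}B+(1-\E_{B_k(x)})B$ with $k=\lfloor d(x,Y)/2\rfloor$. This is exactly Lemma~\ref{lem: decay of correlations for A_infty}, up to a commutator term for the order $\omega(BA)$.
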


Next, we deal with extensive perturbations $V$. Uniform LPPL for the states $\omega_{V_M}$ over finite patches $M$ implies LI at fixed coupling. Uniformity in the coupling is needed only for the final uniform assertion. This extends \cite[Theorem 29]{CMTW25} to infinite fermionic systems.

\medskip
\begin{theorem}[LI from uniform LPPL] \label{thm: li from lppl}
    Let $H$, $V\in\mP_F$ for a decay function $F$, and let $\tau$ be the dynamics generated by $\mL_H$. Let $\beta\in\R$ and let $\omega$ be a $(\tau,\beta)$-KMS state. Assume that $\omega_{V_M}$ satisfies LPPL with respect to $f_\lppl$, $g_\lppl$ and $\zeta_\lppl$, uniformly in $M\in P_0(\Gamma)$. Then
    \begin{align*}
        |\omega_{V_Z}(A)-\omega_{V_Y}(A)|
        &\leq \norm A f_\lppl(|X|)g_\lppl(3\norm V_F)
        \sum_{z\in Z\setminus Y}\zeta_\lppl(d(z,X))
    \end{align*}
    for $X\subseteq Y\subsetneq Z\in P_0(\Gamma)$ and $A\in\mA_X$.

    If $\sum_{k=0}^\infty(1+k)^{D-1}\zeta_\lppl(k)<\infty$, define
    \begin{align*}
        h(r) \coloneq \sum_{k=\lceil r\rceil}^\infty(1+k)^{D-1}\zeta_\lppl(k),\qquad r\geq0.
    \end{align*}
    Then
    \begin{align*}
        |\omega_{V_Z}(A)-\omega_{V_Y}(A)|
        &\leq C_\sur\norm A\,|X|f_\lppl(|X|)
        g_\lppl(3\norm V_F)h(d(X,Z\setminus Y))
    \end{align*}
    for the same $X$, $Y$, $Z$, $A$. Moreover, if $\omega_{sV_M}$ satisfies LPPL with the same functions uniformly in $M\in P_0(\Gamma)$ and $s\in[0,1]$, then both bounds hold with $\omega_{V_Z}$ and $\omega_{V_Y}$ replaced by $\omega_{sV_Z}$ and $\omega_{sV_Y}$, uniformly in $s$.
\end{theorem}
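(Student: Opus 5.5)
The strategy is a telescoping argument over the sites of $Z\setminus Y$, applying the uniform LPPL hypothesis once per site. Enumerate $Z\setminus Y=\{z_1,\dots,z_m\}$ and set $M_0\coloneq Y$ and $M_j\coloneq Y\cup\{z_1,\dots,z_j\}$, so that $M_m=Z$. The difference $V_{M_j}-V_{M_{j-1}}$ is the sum of all terms $V(K)$ with $K\subseteq M_j$, $z_j\in K$. Call this $P_j$. Each $P_j$ is a finite sum of even self-adjoint local terms, hence $P_j=P_j^*\in\mA^+_{M_j}$. By the perturbation rule $(\omega_{P_1})_{P_2}=\omega_{P_1+P_2}$, we have $\omega_{V_{M_j}}=(\omega_{V_{M_{j-1}}})_{P_j}$. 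Therefore
\begin{align*}
    |\omega_{V_Z}(A)-\omega_{V_Y}(A)|\leq\sum_{j=1}^m\bigl|(\omega_{V_{M_{j-1}}})_{P_j}(A)-\omega_{V_{M_{j-1}}}(A)\bigr|.
\end{align*}
Because LPPL holds for $\omega_{V_M}$ uniformly in $M$, each summand is bounded, with the choice $x=z_j$, by $\norm A f_\lppl(|X|)g_\lppl(\norm{P_j}_{F,z_j})\zeta_\lppl(d(z_j,X))$. The first bound then follows once we show $\norm{P_j}_{F,z_j}\leq3\norm V_F$ and use that $g_\lppl$ is non-decreasing.

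The main technical point is this decay-norm estimate. For each $k$, split $P_j=\E_{B_k(z_j)}P_j+(P_j-\E_{B_k(z_j)}P_j)$. Terms $V(K)$ with $K\subseteq B_k(z_j)$ are left invariant by the conditional expectation. Every other term satisfies $z_j\in K$ and $\diam K>k$. The contraction property of $\E$ gives $\norm{V(K)-\E V(K)}\leq2\norm{V(K)}$, so
\begin{align*}
    \norm{P_j-\E_{B_k(z_j)}P_j}\leq2\sum_{K\ni z_j,\ \diam K>k}\norm{V(K)}\leq2F(k)\sum_{K\ni z_j}\frac{\norm{V(K)}}{F(\diam K)}\leq 2F(k)\norm V_F,
\end{align*}
where we used that $F$ is non-increasing. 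Together with $\norm{P_j}\leq\norm V_F$, which holds since $F\leq1$, this gives $\norm{P_j}_{F,z_j}\leq3\norm V_F$, in the same way as Remark \ref{rem: bound zero chain}. In infinite volume we would also need to check that $\E$ acts as stated on the finitely many terms involved; this is immediate because $P_j$ is a finite sum.

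For the second bound, estimate $\sum_{z\in Z\setminus Y}\zeta_\lppl(d(z,X))$ by $\sum_{x\in X}\sum_{z:\,d(z,x)\geq r}\zeta_\lppl(d(z,x))$, where $r=d(X,Z\setminus Y)$. This uses $d(z,X)=\min_x d(z,x)$ and the fact that $\zeta_\lppl$ is non-increasing, so $\zeta_\lppl(d(z,X))\leq\sum_{x}\zeta_\lppl(d(z,x))$ is crude but valid. Grouping by shells with $|S_k(x)|\leq C_\sur(1+k)^{D-1}$ gives at most $C_\sur|X|h(r)$; since distances are integers, the sum runs over $k\geq\lceil r\rceil$. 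For the uniform-in-$s$ statement, repeat the same telescoping with $sP_j$. The bound $\norm{sP_j}_{F,z_j}\leq3\norm V_F$ still holds for $s\in[0,1]$, and LPPL applies uniformly to $\omega_{sV_{M_{j-1}}}$. The restriction $X\subseteq Y$ is not needed here except to ensure that $d(X,Z\setminus Y)$ is positive and meaningful.
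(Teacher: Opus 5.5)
Your proposal is correct and follows essentially the same route as the paper: a site-by-site telescoping over $Z\setminus Y$, LPPL applied at $x=z_j$, the estimate $\norm{P_j}_{F,z_j}\leq 3\norm{V}_F$, shell counting for the second bound, and monotonicity of $g_{\lppl}$ for the uniform-in-$s$ assertion. The only cosmetic difference is the direction of each step: the paper applies LPPL in the larger state $\omega_{V_{M_j}}$ with perturbation $-P_j$, whereas you apply it in $\omega_{V_{M_{j-1}}}$ with $+P_j$, and both are covered by the uniformity in $M$.
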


In the next theorem we derive a bound on the decay of correlations from LI for perturbations of the tracial state. The bound holds uniformly for all finite patches and also for the limiting KMS state. This theorem extends the setting of \cite[Theorem 31]{CMTW25} to infinitely extended fermionic systems.

\medskip
\begin{theorem}[Decay of correlations from LI for the tracial state] \label{thm: doc from LI}
    Let $H \in \mP_F$ for some decay function $F$ and let $\tau$ be the dynamics generated by $\mL_{H}$. Let $\beta \in \R$ and let $\omega$ be the tracial state\footnote{The tracial state is the $(\Tilde{\tau}, \beta)$-KMS state for the dynamics $\Tilde{\tau}$ generated by $\mL_0$}. Assume that $\omega$ satisfies LI for the perturbation $H$ with respect to $\zeta_\li$, $f_\li$ and $g_\li$. Assume that
    \begin{align*}
        (1 + r)^D \, F(r) \leq \zeta_\li (r) \, , \quad r \geq 0 \, .
    \end{align*}
    Denote by $\omega_H$ the $(\tau, \beta)$-KMS state constructed in Proposition \ref{prop: convergence from li}. It follows that there exists a constant $C > 0$ that depends only on the lattice, $F$, the LI data and $\lVert H \rVert_F$ such that
    \begin{align*}
        |\omega_H(AB)-\omega_H(A)\omega_H(B)|
        &\leq C(1+|\beta|)\bigl(f_\li(|X|+|Y|)+|X|\bigr)\\
        &\quad\times\zeta_\li(d(X,Y)/3)\norm A\norm B,
    \end{align*}
    for all $X$, $Y \in P_0 (\Gamma)$ and $A \in \mA_X$, $B \in \mA_Y$.
    The same bound holds with $\omega_H$ replaced by $\omega_{H_M}$, uniformly in $M\in P_0(\Gamma)$.
\end{theorem}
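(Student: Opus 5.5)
Since $\omega$ is tracial, the plan is to reduce clustering to a product structure of the finite-patch states and then use LI to move the patch. For a finite region $M$, the perturbed state $\omega_{H_M}$ restricted to $\mA_M$ is the Gibbs state of $H_M$ and is the tracial state outside $M$.

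The key observation is the following. If $M = M_1 \cup M_2$ with $M_1, M_2$ disjoint and no term of $H$ in $H_{M_1} + H_{M_2}$ crossing, then $H_{M_1}+H_{M_2}$ consists of two even commuting pieces. Hence
\begin{align*}
    \omega_{H_{M_1}+H_{M_2}}(AB)=\omega_{H_{M_1}}(A)\,\omega_{H_{M_2}}(B)
\end{align*}
for $A\in\mA_{X}$, $B\in\mA_Y$ with $X\subseteq M_1$, $Y\subseteq M_2$ and one of them even. This factorization follows from the trace factorizing over disjoint regions for the fermionic tracial state via the parity-preserving conditional expectation, together with $\e^{-\beta(H_{M_1}+H_{M_2})}=\e^{-\beta H_{M_1}}\e^{-\beta H_{M_2}}$.

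The complication is that $H_{M_1}+H_{M_2}$ is not the same as $H_{M_1\cup M_2}$, which also contains the crossing terms. I would therefore proceed as follows.

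Fix $X$, $Y$ with $r=d(X,Y)$, and take $M\supseteq X_{r/3}\cup Y_{r/3}$, or $M$ arbitrary for the limiting statement; the limit then follows from Proposition \ref{prop: convergence from li} because products of local operators are local. Set $M_1 = M\cap X_{r/3}$ and $M_2=M\cap Y_{r/3}$; these are disjoint at distance about $r/3$.

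Step 1. Compare $\omega_{H_M}$ with $\omega_{H_{M_1\cup M_2}}$ on the observables $AB$, $A$ and $B$. By LI, with $X\cup Y\subsetneq M_1\cup M_2\subsetneq M$ and distance $\geq r/3$, each comparison costs $f_\li(|X|+|Y|)\,g_\li\,\zeta_\li(r/3)$. Degenerate cases, where an inclusion is not strict, give no error.

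Step 2. Remove the crossing terms. Let $W \coloneq H_{M_1\cup M_2}-H_{M_1}-H_{M_2}$ be the sum of the terms $H(Z)$ with $Z$ meeting both $M_1$ and $M_2$. Then $\omega_{H_{M_1\cup M_2}}=(\omega_{H_{M_1}+H_{M_2}})_W$. Each such $Z$ has diameter at least of order $r/3$, so
\begin{align*}
    \norm W\leq \sum_{x\in X_{r/3}}\ \sum_{Z\ni x,\ \diam Z\geq r/3}\norm{H(Z)}\leq |X_{r/3}|\,\norm H_F\,F(r/3).
\end{align*}
Corollary \ref{cor: cont. KMS state} then bounds the change of each expectation by $|\beta|\norm W$.

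The factor $|X_{r/3}|\leq C_\vol|X|(1+r)^D$ is where the hypothesis $(1+r)^DF(r)\leq\zeta_\li(r)$ enters. It yields an error of order $|\beta|\,|X|\,\zeta_\li(r/3)$, up to adjusting constants between $F(r/3)$ and $F(r)$ via log-convexity or by bounding more carefully. This gives the $|X|$ term in the bound.

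Step 3. Combine the factorization with the triangle inequality. We get $|\omega_{H_M}(AB)-\omega_{H_M}(A)\omega_{H_M}(B)|$ bounded by a constant multiple of the Step 1 and Step 2 errors applied to $AB$, $A$ and $B$, since $\omega(A)\omega(B)$ errors are controlled by $\norm A\norm B$.

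For observables where neither $A$ nor $B$ is even, the statement as written would need care. I would split into even and odd parts: odd parts have vanishing expectation under every even-perturbed tracial state. Products of two odd operators do not factor this way, so if necessary I would restrict to the case where one operator is even, as in Definition \ref{def: decay of corr}.

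The main obstacle I expect is Step 2. The distance arithmetic has to make the crossing terms have diameter comparable to $r$ rather than being merely nonzero. The loss in $F(r/3)$ versus $\zeta_\li(r/3)$ must also be matched exactly to the stated hypothesis. If that fails, I would instead choose the $M_i$ as $X_{r/2}$ and $Y_{r/2}$ minus a buffer.
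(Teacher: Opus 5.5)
Your skeleton matches the paper's proof. You use LI to restrict the perturbation to the two fattenings of radius $\rho=d(X,Y)/3$, delete the connecting terms with Corollary \ref{cor: cont. KMS state}, and conclude with the product property of the tracial state. Step~2 is not the obstacle you fear. Since $|X_\rho|\leq C_\vol|X|(1+\rho)^D$, the hypothesis applied at $\rho$ itself gives $|X_\rho|F(\rho)\leq C_\vol|X|\zeta_\li(\rho)$, so you never need to compare $F(\rho)$ with $F(3\rho)$. As written, however, the proposal does not prove the stated theorem, for three reasons.

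First, the theorem covers arbitrary $A\in\mA_X$, $B\in\mA_Y$, so your fallback of assuming one observable even would prove less. The fallback is also unnecessary: your claim that two odd operators do not factor is false for the tracial state, whose product property holds for all parities. If $A$ and $B$ are odd with disjoint supports, the even unitary $(-1)^{N_X}\in\mA_X$ commutes with $B$. Conjugating by it flips the sign of $AB$, so traciality gives $\omega(AB)=0=\omega(A)\omega(B)$. Apply this to $\e^{-\beta H_{M_1}}A\in\mA_{M_1}$ and $\e^{-\beta H_{M_2}}B\in\mA_{M_2}$. This is legitimate because $\e^{-\beta H_{M_2}}$ is even and commutes with $A$ of either parity. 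It yields $\omega_{H_{M_1}+H_{M_2}}(AB)=\omega_{H_{M_1}}(A)\,\omega_{H_{M_2}}(B)$ in general, and LI and the Corollary are insensitive to parity.

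Second, your claim that degenerate cases cost nothing fails when $M\cap(X_\rho\cup Y_\rho)=X\cup Y\subsetneq M$. Definition \ref{def: LI} requires a strict inner inclusion, so LI is unavailable there, yet $\omega_{H_M}$ and $\omega_{H_{X\cup Y}}$ genuinely differ. The paper handles this case directly. Here $d(X,M\setminus X)>\rho$, so $\lVert H_M-H_X-H_{M\setminus X}\rVert\leq|X|F(\rho)\lVert H\rVert_F$. The Corollary and the product property then give the $|X|$-term without LI.

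Third, Step~1 needs $X\cup Y\subseteq M$, whereas the theorem asserts a bound uniform over all $M\in P_0(\Gamma)$. The paper reduces to this case using $\omega_{H_M}=\omega_{H_M}\circ\E_M$ and $\E_M(AB)=\E_M(A)\E_M(B)$ for disjoint supports. Here $\E_M(A)\in\mA_{X\cap M}$ and $\E_M(B)\in\mA_{Y\cap M}$, and the reduction uses contractivity of $\E_M$; the covariance vanishes if either intersection is empty. Overlapping supports are covered by the trivial bound, since $\zeta_\li(0)\geq F(0)=1$. With these repairs your argument becomes the paper's proof.
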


The following theorem treats an already interacting reference state. Exponential clustering is assumed only for the initial KMS state $\omega$, while exponential LI controls finite patches of $V$.

\medskip
\begin{theorem}[Stretched-exponential decay of correlations under LI]\label{thm: doc from LI + doc}
    Let $\beta\in\R$, $c>0$ and $H$, $V\in\mP_{\exp(-c\cdot)}$. Let $\omega$ be a $(\tau,\beta)$-KMS state, where $\tau$ is generated by $\mL_H$. Assume that $\omega$ satisfies decay of correlations and LI for $V$ with data
    \begin{align*}
        \text{clustering:}\qquad
        \zeta_\cor(r)&=C_\cor\exp(-\mu_\cor r),&
        f_\cor(r)&=C_\cor\exp(C_\cor(1+r)),\quad n_\cor\in\N_0,\\
        \text{LI:}\qquad
        \zeta_\li(r)&=C_\li\exp(-\mu_\li r),&
        f_\li(r)&=C_\li\exp(C_\li(1+r)),\quad g_\li,
    \end{align*}
    where $C_\cor$, $C_\li \geq 1$ and $\mu_\cor$, $\mu_\li>0$. Let $\omega_V$ be the KMS state for $H+V$ selected by Proposition \ref{prop: convergence from li}. There are constants $C,\kappa>0$ such that, for $X$, $Y\in P_0(\Gamma)$ and $A\in\mA_X$, $B\in\mA_Y$, provided that at least one observable is even,
    \begin{align*}
        |\omega_V(AB)-\omega_V(A)\omega_V(B)|
        &\leq C\exp\bigl(C(1+|X|+|Y|)\bigr) \, \norm A \, \norm B\\
        &\quad\times
        \exp\left(-\kappa(1+d(X,Y))^{1/((D+1)(D+2))}\right).
    \end{align*}
    The constants $C$ and $\kappa$ are independent of $X$, $Y$, $A$ or $B$.
\end{theorem}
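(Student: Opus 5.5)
The plan is to compare $\omega_V$ with the patch state $\omega_{V_M}$ for a region $M$ adapted to $X$ and $Y$, and to control the covariance in $\omega_{V_M}$ by transporting the clustering of $\omega$. Fix $X,Y$ with $d\coloneq d(X,Y)$ and choose a length $\ell\le d/3$, to be optimised at the end. By LI, in the form of Proposition \ref{prop: convergence from li}, we have
\begin{align*}
|\omega_V(C)-\omega_{V_{M}}(C)|\le \norm C f_\li(|X|+|Y|)g_\li(\norm V_F)\zeta_\li(\ell)
\end{align*}
for $C\in\{A,B,AB\}$ and every finite $M\supseteq X_\ell\cup Y_\ell$. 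Hence it suffices to bound the covariance in $\omega_{V_M}$ with $M=X_\ell\cup Y_\ell$, or in a slightly reduced region. The cost of this comparison is $\exp(C(1+|X|+|Y|))\exp(-\mu_\li\ell)$.

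I then split $V_M$ into three parts: a part $P_X$ consisting of the terms inside $X_\ell$, a part $P_Y$ consisting of the terms inside $Y_\ell$, and the remainder $R$, made of terms meeting both regions. Because $V\in\mP_{\exp(-c\cdot)}$ and $d(X_\ell,Y_\ell)\ge d/3$, the remainder satisfies $\norm R\lesssim |X_\ell|\,\e^{-c d/3}$. Corollary \ref{cor: cont. KMS state} therefore lets me drop $R$ at a cost of $|\beta|\,|X_\ell|\,\e^{-cd/3}$. It remains to estimate the covariance in $\omega_{P_X+P_Y}$.

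For that, I use Theorem \ref{thm: QBP} twice. Write $\omega_{P_X+P_Y}(C)=\omega(\eta C\eta^*)/\ldots$, where $\eta$ is built from $P_X+P_Y$. The obstacle is that $\eta$ is not a product of a factor near $X$ and a factor near $Y$. I would first try to show, from the Dyson series \eqref{eq: eta} together with Lieb--Robinson bounds for the $\exp(-c\cdot)$ interaction $H$, that $\eta$ is approximated by $\eta_X\eta_Y$ up to an error $\e^{-\kappa' d}\e^{C\norm{P}}$. Here $\eta_X$ is an even operator localized near $X_{2\ell}$, and similarly for $\eta_Y$. The relevant time scale of the truncation is $t\sim \ell$, and the factor $f_\beta$ decays exponentially in time.

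The difficulty is that $\norm{P_X}\sim|X_\ell|\sim |X|\ell^D$. This makes the Dyson series, and with it the normalisation $\omega(\eta\eta^*)^{-1}$, grow like $\exp(C|\beta||X|\ell^D)$. Once the factorisation is in place, clustering of $\omega$ gives
\begin{align*}
\omega(\eta_X A\eta_X^*\,\eta_Y B\eta_Y^*)\approx\omega(\eta_XA\eta_X^*)\,\omega(\eta_YB\eta_Y^*),
\end{align*}
where parity is used to move the even factors. The error is
\begin{align*}
\norm{\eta_X}^2\norm{\eta_Y}^2|X_{2\ell}|^{n_\cor}f_\cor(|Y_{2\ell}|)\,\e^{-\mu_\cor d/3}.
\end{align*}
Since $f_\cor$ is exponential in its argument, this error is of size $\exp(C(|X|+|Y|)\ell^{D}-\mu d)$. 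The same bound applies to the normalisations and to the single expectations.

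I expect this balancing step to be the main obstacle. The exponential support dependence $f_\cor$ and the QBP norm growth both scale like $\exp(C\ell^D)$, while the decay is only $\e^{-\mu\ell}$ from LI and $\e^{-\mu d}$ from clustering. A single-scale choice fails, because the norm growth in $\ell^D$ is not compensated by $\e^{-\mu_\li\ell}$. To get around this, I would cover $X_\ell$ by sub-patches of size $\ell'$ and add them one at a time, applying LI at each stage so that only $O(\ell'^D)$ norm is incurred per step. The total error then looks like
\begin{align*}
\exp(C\ell'^{D+1})\exp(-\mu\ell')\cdot\text{(number of steps)}
\end{align*}
together with the exponential losses in $|X|$ and $|Y|$. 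Optimising the two scales, with $\ell'\sim d^{1/(D+2)}$ inside $\ell\sim d^{(D+1)/(D+2)}$, should give a decay of the form $\exp(-\kappa d^{1/((D+1)(D+2))})$. All constants depending on $|X|$ and $|Y|$ are absorbed into $C\exp(C(1+|X|+|Y|))$.
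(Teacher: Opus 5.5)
Your first two reductions are essentially the paper's: LI replaces $\omega_V$ by the patch state on $X_\ell\cup Y_\ell$, and Corollary \ref{cor: cont. KMS state} drops the connecting terms. The quantitative core, however, has a genuine gap, and you have misidentified the obstacle.

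The LI step happens before any QBP operator appears. Its error $f_\li(|X|+|Y|)\zeta_\li(\ell)$ is therefore never multiplied by Dyson-series norms, so a single fattening scale does not fail because of $\e^{-\mu_\li\ell}$. The norm growth bites in the localization of the QBP operators. There your error $\e^{-\kappa' d}\e^{C\norm P}$ for an approximation localized near $X_{2\ell}$ is not available. Truncating the generators (via Proposition \ref{prop: loc of phi}) to a width $L$ around the patch costs $\e^{-aL}$. Summing the Dyson series then multiplies this by $\exp(C|\beta|(|X_\ell|+|Y_\ell|))$. With $L\sim\ell$ you get $\exp(C|X|\ell^D-a\ell)$, which does not decay: never for $D\geq2$, and for $D=1$ not uniformly in $|X|$.

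The factorization itself is harmless once the generators are localized, since even local generators with disjoint supports commute. The paper simply applies Theorem \ref{thm: QBP} twice, using $\omega_{V_{X_R}+V_{Y_R}}=(\omega_{V_{X_R}})_{V_{Y_R}}$, to get the exact product $\eta_{X_R}\eta^{X_R}_{Y_R}$.

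Your sub-patch scheme does not repair the gap:
\begin{itemize}
\item Your displayed error $\exp(C\ell'^{D+1})\e^{-\mu\ell'}$ grows with $\ell'$.
\item LI gives no decay for patches adjacent to $X$.
\item The scales $\ell'\sim d^{1/(D+2)}$ and $\ell\sim d^{(D+1)/(D+2)}$ are read off from the target exponent rather than derived.
\item With $\ell\sim d^{(D+1)/(D+2)}$, the volume $|X_\ell|\sim d^{D(D+1)/(D+2)}$ is superlinear for $D\geq2$, so your own clustering error $\exp(C(|X|+|Y|)\ell^D-\mu d)$ cannot decay.
\end{itemize}

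The missing idea is to decouple the fattening radius from the localization width, and to budget the QBP cost by volume rather than by radius. The paper sets $\vartheta=1/(D+2)$. When $1+|X|+|Y|\leq(1+d)^\vartheta$, it takes $R=\lfloor((1+d)^\vartheta/(1+|X|+|Y|))^{1/D}\rfloor$, so that $|X_R|+|Y_R|\leq C_0(1+d)^\vartheta$. It then localizes the two QBP generators into $X_{R+L}$ and $Y_{R+L}$ with $L=\lceil|\beta|K(1+d)^\vartheta\rceil$. This width is comparable to the volume $|X_R|$, so $\e^{-aL}$ beats the Dyson-series growth $\exp(6|\beta|\norm V_{\exp(-c\cdot)}(|X_R|+|Y_R|))$.

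The resulting clustering supports have volume $\lesssim(1+d)^{(D+1)/(D+2)}$, which is sublinear. Hence both $f_\cor(|Y_{R+L}|)$ and the QBP norms are absorbed by $\e^{-\mu_\cor d/2}$ in the four-covariance expansion of the normalized localized state. This gives decay $\exp\bigl(-\kappa((1+d)^{1/(D+2)}/(1+|X|+|Y|))^{1/D}\bigr)$. The complementary case follows from the trivial covariance bound. A weighted AM--GM inequality then converts this into the stated exponent $1/((D+1)(D+2))$, moving the support dependence into the prefactor $\exp(C(1+|X|+|Y|))$. No multi-step LI argument is needed.
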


\medskip
\begin{remark}
The proof of the preceding theorem gives the sharper support-dependent estimate
\begin{align*}
    |\omega_V(AB)-\omega_V(A)\omega_V(B)|
    &\leq  C f_\li (|X|+|Y|) \,  \norm A \, \norm B \,  \exp \left( -\kappa
    \left(\frac{(1+d(X,Y))^{1/(D+2)}}{1+|X|+|Y|}\right)^{1/D}\right).
\end{align*}
Using this bound, polynomial growth of the LI support function yields super-polynomial decay with a polynomial support prefactor. Moreover, choosing polynomial growth of the support function in the clustering assumption permits a better stretched-exponential exponent. We have chosen the assumptions and the bound in the theorem because they allow exponential support dependence while keeping the statement transparent.
\end{remark}

\subsection{Continuity under extensive perturbations}

As a final result, we estimate the change of local expectations under extensive perturbations $V$. We require decay of correlations uniformly in both the finite patch and the coupling, as in Definition \ref{def: uniform doc}. Under this assumption, we obtain continuity in the topology of pointwise convergence on local observables and a continuously differentiable path of expectations. This extends \cite[Theorem 34]{CMTW25} to infinitely extended fermionic systems, see also \cite{MSAC26} and \cite[Appendix A]{RGK24}.

\medskip
\begin{theorem}[Continuity under extensive perturbations]\label{thm: QBP SLT}
    Let $\beta\in\R$, $C_\inter>0$ and $\mu$, $\nu>0$. Let $H$, $V\in\mP_\mu$. For $s\in[0,1]$, let $\tau^{sV}$ be the dynamics generated by $\mL_{H+sV}$. Let $\omega$ be a $(\tau^{0V},\beta)$-KMS state satisfying $V$-uniform decay of correlations with respect to
    \begin{align*}
        \zeta_\cor(r)=C_\cor(1+r)^{-m_\cor},\qquad f_\cor,\qquad n_\cor,
    \end{align*}
    for some $C_\cor>0$. Assume that
    \begin{align*}
        \nu>D,\qquad \mu>3D+\nu,\qquad m_\cor\geq\nu+Dn_\cor,\qquad
        \norm H_\mu+\norm V_\mu<C_\inter.
    \end{align*}
    For each $s\in[0,1]$, denote by $\omega_{sV}$ the $(\tau^{sV},\beta)$-KMS state constructed in Proposition \ref{prop: convergence from li}. Then the following holds:
    \begin{enumerate}
    \item \textbf{Continuity of local expectations.} There is a polynomial $g$ such that
    \begin{align*}
        |\omega_{sV}(A)-\omega_{tV}(A)|
        \leq |s-t|\,g(|\beta|)\norm V_\nu\norm A\,|Y|f_\cor(|Y|)
    \end{align*}
    for $Y\in P_0(\Gamma)$, $A\in\mA_Y$ and $s,t\in[0,1]$. Its coefficients depend only on the lattice, the decay exponents, $C_\inter$ and the common clustering data, but not on $V$ within this class.

    \item \textbf{A continuously differentiable path.} For every $A\in\mA_0$, the map $s\mapsto\omega_{sV}(A)$ is continuously differentiable, with one-sided derivatives at the endpoints, and
    \begin{align*}
        \frac{\d}{\d s}\omega_{sV}(A)
        =-\frac{\beta}{2}\sum_{x\in\Gamma}
        \omega_{sV}\bigl(\{\Phi_\beta^{\tau^{sV}}
            (V_x-\omega_{sV}(V_x)),A\}\bigr).
\end{align*}
    The sum converges absolutely and uniformly in $s\in[0,1]$.
    \end{enumerate}
\end{theorem}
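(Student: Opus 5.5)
The plan is to work first at the level of finite patches and pass to the limit at the end. Fix an IAS $(\Lambda_n)$ and write $V_n=V_{\Lambda_n}$. Applying Proposition \ref{prop: QBP differential equation} to $\omega$ with the bounded perturbation $V_n$ gives
\begin{align*}
\frac{\d}{\d s}\omega_{sV_n}(A)=-\frac{\beta}{2}\sum_{x\in\Lambda_n}\omega_{sV_n}\bigl(\{\Phi_\beta^{\tau^{sV_n}}(V_{n,x}-\omega_{sV_n}(V_{n,x})),A\}\bigr).
\end{align*}
Here $V_{n,x}$ is the $0$-chain of the truncated interaction. We use linearity of $\Phi$ to split $V_n$ into the finite sum $\sum_x V_{n,x}$, and $\tau^{sV_n}$ is generated by $\mL_H+\mL_{sV_n}$.

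For each summand we rewrite $\omega_{sV_n}(\{\Phi(W-\omega(W)),A\})$ as a sum of two covariances, $\omega(\Phi(W)A)-\omega(\Phi(W))\omega(A)$ plus the same expression with the factors in the other order. This uses stationarity, $\omega_{sV_n}(\Phi(W))=\omega_{sV_n}(W)$. Each covariance is then estimated by conditioning $\Phi_\beta^{\tau^{sV_n}}(V_{n,x})$ onto balls $B_k(x)$ and telescoping the differences $\E_{B_{k+1}(x)}-\E_{B_k(x)}$, which are even because $\E$ preserves parity. The terms with $d(B_k(x),Y)$ large are handled by the $V$-uniform clustering assumption. The remaining terms are handled by the localization bound of Proposition \ref{prop: loc of phi} (polynomial case, with $X=\Lambda_n$, $P=0$, $\Psi=H$, $V$ replaced by $sV$). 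This gives $\norm{\Phi(V_{n,x})}_{\nu,x}\le h(|\beta|)\norm{V_{n,x}}_{\nu,x}\le 3h(|\beta|)\norm V_\nu$ via Remark \ref{rem: bound zero chain}, where $\norm{V_{n,x}}_{\nu,x}$ is controlled by $\norm V_\mu$ and hence by $C_\inter$; this is where uniformity in $n$ and $s$ enters.

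Combining the two contributions, I expect a per-site bound of the form $g(|\beta|)\norm V_\nu\norm A f_\cor(|Y|)(1+d(x,Y))^{-\nu}$. The clustering contribution carries the factor $(1+k)^{Dn_\cor}$ from $|B_k|^{n_\cor}$, so it needs $m_\cor\ge \nu+Dn_\cor$. Summing over $x$ with surface regularity gives $\sum_x(1+d(x,Y))^{-\nu}\le C|Y|$ because $\nu>D$. The result is
\begin{align*}
\Bigl|\frac{\d}{\d s}\omega_{sV_n}(A)\Bigr|\le g(|\beta|)\norm V_\nu\norm A\,|Y|f_\cor(|Y|),
\end{align*}
uniformly in $n$ and $s$, and the tail of the $x$-sum outside any ball around $Y$ is small, also uniformly.

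For part 1, integrate from $t$ to $s$ to get the Lipschitz bound for $\omega_{sV_n}$. Then pass to the limit $n\to\infty$. The existence of $\omega_{sV}$ is presupposed in the statement ("constructed in Proposition \ref{prop: convergence from li}"), so LI is implicitly available and the patch states converge pointwise; the bound survives the limit.

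For part 2, show that for fixed $x$ the summands converge as $n\to\infty$ to $\omega_{sV}(\{\Phi_\beta^{\tau^{sV}}(V_x-\omega_{sV}(V_x)),A\})$. This uses three facts: $V_{n,x}\to V_x$ in norm; $\tau^{sV_n}\to\tau^{sV}$ strongly and uniformly on compact time sets (Lemma \ref{lem: str conv time evol}), which, together with the exponential decay of $f_\beta$, gives $\Phi^{n}\to\Phi$ in norm; and $\omega_{sV_n}\to\omega_{sV}$ pointwise. By the uniform summable majorant, dominated convergence then shows that the derivatives converge uniformly in $s$ to the stated series. That series is continuous in $s$: each term is continuous, since $\omega_{sV}$ is continuous in $s$ by part 1 and $\Phi^{\tau^{sV}}$ is norm-continuous in $s$ by Duhamel on local approximants, and the convergence is uniform. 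A uniformly convergent sequence of $C^1$ functions whose derivatives converge uniformly has a $C^1$ limit with the limiting derivative, which proves part 2.

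The main obstacle is the per-site covariance estimate. It must be uniform in $n$, $s$ and $x$, and it must balance the $|B_k|^{n_\cor}$ growth from clustering against the $(1+k)^{-\nu}$ localization tails, while respecting the fermionic parity requirement in the clustering definition. A further subtlety is that convergence of the states themselves relies on LI, which is not listed explicitly among the hypotheses. If it is not available, I would instead deduce convergence from the Cauchy property implied by the uniform derivative bound, applying the same QBP estimate to perturbations $V_{\Lambda_m}-V_{\Lambda_n}$ supported far from $Y$.
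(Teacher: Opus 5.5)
\textbf{Verdict: there is a genuine gap.} Your finite-patch part is fine, but you never prove that the limiting states $\omega_{sV}$ exist, and your fallback argument for this does not work under the stated hypotheses.

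\textbf{What is correct.} Your finite-patch derivative bound is sound and agrees with the paper's. The paper sums over interaction terms $V(Z)$ placed at their centers. It then applies Lemma \ref{lem: decay of correlations for A_infty} with one cut at radius $\lfloor d(x,Y)/2\rfloor$, after weakening $\zeta_\cor$ to $C_\cor(1+r)^{-\nu-Dn_\cor}$. Your $0$-chain version gives the same majorant, provided you collapse the large-$k$ part of the telescoping into $(1-\E_{B_K(x)})\Phi$. Summing $\norm{\Delta_k}$ termwise would cost a power of $d(x,Y)$.

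\textbf{The gap: existence of $\omega_{sV}$.} LI is not among the hypotheses. The theorem asserts that $V$-uniform clustering alone makes the construction of Proposition \ref{prop: convergence from li} available, so this must be proved. Your fallback fails for three reasons.
\begin{enumerate}
\item A QBP differential-equation bound on $\omega_{sV_m}(A)-\omega_{sV_n}(A)$ needs a path such as $u\mapsto\omega_{sV_n+us(V_m-V_n)}$. Its intermediate states are not of the form $\omega_{s'V_M}$, so the clustering hypothesis says nothing about them.
\item Comparing the two $s$-paths $u\mapsto\omega_{uV_n}$ and $u\mapsto\omega_{uV_m}$ instead is circular. For the terms with $Z\subseteq\Lambda_n$ near $Y$, it reduces to comparing the same two states on larger regions.
\item A one-shot QBP-operator bound with $P=s(V_m-V_n)$ fails. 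The decay norm of $P$ about any site is generically of order $|\Lambda_m\setminus\Lambda_n|$, and it enters Theorem \ref{thm: LPPL without doc} exponentially.
\end{enumerate}

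\textbf{The missing idea.} The paper's proof opens by using clustering only at the endpoints of single-site steps.
\begin{itemize}
\item Theorem \ref{thm: LPPL without doc} (polynomial case, with $sV$ and $X=M$) needs decay of correlations only for the initial state $\omega_{sV_M}$, which your hypothesis covers. It yields LPPL with decay $(1+r)^{-\nu}$, uniformly in $M$ and $s$.
\item Theorem \ref{thm: li from lppl} then adds the sites of $\Lambda_m\setminus\Lambda_n$ one at a time. Every intermediate state is again some $\omega_{sV_M}$, and each step is a perturbation of decay norm at most $3\norm V_\nu$.
\item Because $\nu>D$, the resulting sum gives LI with $\zeta_\li(r)\lesssim(1+r)^{D-\nu}$, uniformly in $s$.
\item Proposition \ref{prop: convergence from li} then provides $\omega_{sV}$, independence of the IAS, and convergence uniformly in $s$.
\end{itemize}

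\textbf{A secondary consequence.} This uniformity is also what your claim of uniform convergence of the derivatives needs. With only pointwise convergence of the states, you do not control $\omega_{sV_n}(D_s)-\omega_{sV}(D_s)$ uniformly for the $s$-dependent observables $D_s=\{\Phi_\beta^{\tau^{sV}}(V_x-\omega_{sV}(V_x)),A\}$. The paper combines uniform convergence with an $\epsilon$-net of the compact set $\{D_s\}$. For the $C^1$ conclusion alone, you could bypass this by dominated convergence in $\int_{s_0}^s\partial_u\omega_{uV_n}(A)\,\d u$, together with continuity of the limiting series.
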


\section{Proofs of the QBP results} \label{sec: proof general qbp}

\begin{proof}[Proof of Theorem \ref{thm: QBP}]
Let $A \in \mA$ be arbitrary. Proposition \ref{prop: QBP differential equation} gives a differential equation for $s \mapsto \omega_{sP}(A)$. We construct a functional $\phi_{sP}(A)$ by a Dyson series and show that it satisfies the same differential equation and initial condition. A Grönwall argument then identifies the two functionals.

Proposition \ref{prop: QBP differential equation} gives
\begin{align*}
    \frac{\d}{\d s} \omega_{s P} (A) = - \frac{\beta}{2} \, \omega_{s P} \bigr( \bigr\{ \Phi_\beta^{\tau^{s P}} (P - \omega_{sP} (P)),  \, A \bigr\} \bigr) \, .
\end{align*}
Since all involved operators are bounded, we can solve this differential equation explicitly by the Dyson series. That is, we define the operators
\begin{align*} \label{eq: eta}
    \eta_{s P} &\coloneq \mT \exp \bigr( - \frac{\beta}{2} \int_0^s \Phi_\beta^{\tau^{\sigma P}} (P - \omega_{\sigma P} (P)) \, \d  \sigma \bigr) \numberthis
    \\
    &\coloneq \sum_{n = 0}^\infty (- \tfrac{\beta}{2})^n \, \int_{0 \leq \sigma_{n} \leq \cdots \leq \sigma_1 \leq s} \Phi_\beta^{\tau^{ \sigma_n P}} (P - \omega_{\sigma_n P}(P)) \dots \Phi_\beta^{\tau^{ \sigma_1 P}} (P - \omega_{\sigma_1 P}(P)) \, \d \sigma_n \ldots \d \sigma_1 \, ,
\end{align*}
where $\mT$ denotes the time-ordering operator. The adjoint operator is given by
\begin{align*}
    \eta_{s P}^* &\coloneq \overline{\mT} \exp \bigr( - \frac{\beta}{2} \int_0^s \Phi_\beta^{\tau^{\sigma P}} (P - \omega_{\sigma P} (P)) \, \d  \sigma \bigr)
    \\
    &\coloneq \sum_{n = 0}^\infty (- \tfrac{\beta}{2})^n \, \int_{0 \leq \sigma_1 \leq \cdots \leq \sigma_n \leq s }\, \Phi_\beta^{\tau^{\sigma_n P}} (P - \omega_{\sigma_n P}(P)) \dots \Phi_\beta^{\tau^{\sigma_1 P}} (P - \omega_{\sigma_1 P}(P)) \, \d \sigma_1 \ldots \d \sigma_n \, ,
\end{align*}
where $\overline{\mT}$ denotes the reverse time-ordering operator. From Lemma \ref{lem: continuity of phi} and the Dyson series it follows that the maps $s \mapsto \eta_{sP}$ and $s \mapsto \eta_{s P}^*$ are differentiable with derivative
\begin{align*} \label{eq: derivative eta}
    \frac{\d}{\d s} \eta_{s P } = - \frac{\beta}{2} \,\eta_{s P} \, \Phi_\beta^{\tau^{s P}} (P - \omega_{s P} (P)) \, , \numberthis 
\end{align*}
and 
\begin{align*} \label{eq: derivative eta*}
    \frac{\d}{\d s} \eta_{s P }^* = - \frac{\beta}{2} \, \Phi_\beta^{\tau^{s P}} (P - \omega_{s P} (P)) \,\eta_{s P}^* \, , \numberthis
\end{align*}
respectively. Using the operator $\eta_{sP} \in \mA$ we define the functional 
\begin{align*}
    \phi_{sP} (A) \coloneq \omega(\eta_{s P} \, A \, \eta_{sP}^*) \, .
\end{align*}
We have $\phi_{0P}=\omega$, and Equations \eqref{eq: derivative eta} and \eqref{eq: derivative eta*} give
\begin{align*}
    \frac{\d}{\d s} \phi_{s P} (A) &= - \frac{\beta}{2} \, \omega \bigr( \eta_{s P} \, \bigr\{ \Phi_\beta^{\tau^{s P}} (P - \omega_{sP} (P)),  \, A \bigr\}  \, \eta_{sP}^* \bigr)
    \\
    &= - \frac{\beta}{2} \, \phi_{sP} \bigr( \bigr\{ \Phi_\beta^{\tau^{s P}} (P - \omega_{sP} (P)),  \, A \bigr\} \bigr) \, .
\end{align*}
Thus, the maps $s \mapsto \omega_{s P} (A)$ and $s \mapsto \phi_{s P} (A)$ satisfy the same initial value problem for every $A \in \mA$. To identify the two functionals, first note that
\begin{align*}
    \lVert \eta_{sP} \rVert \leq \exp ( |\beta| \, \lVert P \rVert ) \, ,
\end{align*}
since $\lVert \Phi_\beta^{\tau^{s P}} (P) \rVert \leq \lVert P \rVert$. Consequently,
\begin{align*}
    |\phi_{s P} (A)| \leq \exp(2 \, |\beta | \, \lVert P \rVert ) \, \lVert A \rVert \, , 
\end{align*}
which yields $\sup_{s\in[0,1]}\norm{\phi_{sP}}<\infty$. Integrating the two scalar differential equations, and using $\norm{\Phi_\beta^{\tau^{sP}}(P-\omega_{sP}(P))}\leq2\norm P$, gives
\begin{align*}
    |\omega_{sP}(A)-\omega_{uP}(A)|
    &\leq 2|\beta|\norm P\norm A\,|s-u|,\\
    |\phi_{sP}(A)-\phi_{uP}(A)|
    &\leq 2|\beta|\norm P\e^{2|\beta|\norm P}\norm A\,|s-u|.
\end{align*}
Taking the supremum over $\norm A\leq1$ shows norm-continuity of both paths in $\mA^*$. Set $\delta_{sP}=\omega_{sP}-\phi_{sP}$. We have that $\delta_{0P}=0$. Since both functionals satisfy the same differential equation we also have 
\begin{align*}
    \delta_{sP}(A)
    &=-\frac{\beta}{2}\int_0^s
      \delta_{uP}\bigl(\{\Phi_\beta^{\tau^{uP}}(P-\omega_{uP}(P)),A\}\bigr)\,\d u.
\end{align*}
This yields the estimate 
\begin{align*}
    \norm{\delta_{sP}}
    &\leq2|\beta|\norm P\int_0^s\norm{\delta_{uP}}\,\d u.
\end{align*}
The preceding norm estimates show that the integrand is continuous. Grönwall's lemma gives $\delta_{sP}=0$, which concludes the proof.
\end{proof}

\medskip
\begin{proof}[Proof of Proposition \ref{prop: QBP differential equation}]
For $\beta=0$, Equation \eqref{eq: perturbed vector} gives $\Omega_{sP}=\Omega$, so $\omega_{sP}=\omega$ and the derivative is zero. Assume $\beta\neq0$.
Let $A\in\mA$ be arbitrary. First, we assume that $P=P^*\in\mA_\ent^\tau$, and hence $P\in\mA_\ent^{\tau^{sP}}$ for all $s\in[0,1]$. The general case will then follow by approximating $P$.

Assuming entire analyticity, we show the differentiability of the map $s \mapsto \Omega_{s P}$. Here, $\Omega_{s P}$ is the vector defined in Equation \eqref{eq: perturbed vector}. The following lemma was proved in \cite[Lemma 2]{EO19} for $W^*$-dynamical systems. Their argument can also be adapted to $C^*$-dynamical systems. Here, we give a different proof based on Araki's expansionals, see \cite{A73b}.
\begin{lemma}\label{lem: derivative vector}
    Let $(\mA, \tau)$ be a $C^*$-dynamical system, $P = P^* \in \mA_\ent^\tau$, and $s \in [0, 1]$. Furthermore, let $\omega$ be a $(\tau, \beta)$-KMS state and $\Omega_{s P}$ the vector on the GNS Hilbert space constructed in Equation \eqref{eq: perturbed vector}. Then the map 
    \begin{align*}
        s \mapsto \Omega_{s P}
    \end{align*}
    is differentiable with respect to the norm of $\mH$ and the derivative is given by
    \begin{align*}
        \frac{\d}{\d s} \Omega_{s  P} = - \beta \int_0^{1/2} \d \sigma \, \pi(\tau_{\i \sigma \beta}^{s P} (P)) \, \Omega_{s P} \, .
\end{align*}
\end{lemma}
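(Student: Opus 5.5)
The plan is to reduce the derivative to a first-order Dyson expansion, using Araki's expansional for the analytically continued perturbed dynamics. Fix $s\in[0,1]$ and write $H_s \coloneq H+s\pi(P)$. This self-adjoint operator implements $\tau^{sP}$ in the GNS representation, in the sense that $\pi(\tau^{sP}_t(A))=\e^{\i tH_s}\pi(A)\e^{-\i tH_s}$. By definition $\Omega_{sP}=\e^{-\beta H_s/2}\Omega$ and $\Omega_{(s+h)P}=\e^{-\beta(H_s+h\pi(P))/2}\Omega$. The derivative therefore only involves perturbing the generator $H_s$ by the bounded operator $h\pi(P)$ for small $h$, applied to the vector $\Omega$, which lies in the domain of $\e^{-\beta H/2}$ by the KMS property. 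I give the argument for $\beta>0$; the case $\beta<0$ is identical with the integration interval reversed.

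The key identity is Araki's expansional formula. For $Q=Q^*\in\mA_\ent^{\tau^{sP}}$ and $0\leq t\leq\beta/2$,
\begin{align*}
    \e^{-t(H_s+\pi(Q))}\,\xi = E_Q(t)\,\e^{-tH_s}\,\xi ,
\end{align*}
with the norm-convergent series
\begin{align*}
    E_Q(t)=\sum_{n=0}^\infty(-1)^n\int_{0\leq t_1\leq\dots\leq t_n\leq t}\pi\bigl(\tau^{sP}_{\i t_1}(Q)\bigr)\cdots\pi\bigl(\tau^{sP}_{\i t_n}(Q)\bigr)\,\d t_n\cdots\d t_1 .
\end{align*}
This holds for $\xi$ in the domain of $\e^{-tH_s}$, in particular for $\xi=\Omega$. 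Formally it follows from
\begin{align*}
    \frac{\d}{\d t}\bigl(\e^{-t(H_s+\pi(Q))}\e^{tH_s}\bigr)=-\bigl(\e^{-t(H_s+\pi(Q))}\e^{tH_s}\bigr)\,\e^{-tH_s}\pi(Q)\e^{tH_s},
\end{align*}
together with $\e^{-tH_s}\pi(Q)\e^{tH_s}=\pi(\tau^{sP}_{\i t}(Q))$.

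I will take $Q=hP$. By the Remark after the KMS definition, $P$ is entire for $\tau^{sP}$. Hence $\sigma\mapsto\tau^{sP}_{\i\sigma}(P)$ is norm-continuous on $[0,\beta/2]$, and $K\coloneq\sup_{0\leq\sigma\leq\beta/2}\norm{\tau^{sP}_{\i\sigma}(P)}<\infty$. The $n$-th term of $E_{hP}(\beta/2)$ is then bounded by $|h|^nK^n(\beta/2)^n/n!$. Applying the identity at $t=\beta/2$ to $\Omega$ gives
\begin{align*}
    \Omega_{(s+h)P}-\Omega_{sP}=\bigl(E_{hP}(\beta/2)-1\bigr)\Omega_{sP}=-h\int_0^{\beta/2}\pi\bigl(\tau^{sP}_{\i t}(P)\bigr)\,\d t\;\Omega_{sP}+R_h ,
\end{align*}
where the remainder satisfies $\norm{R_h}\leq\bigl(\e^{|h|K\beta/2}-1-|h|K\beta/2\bigr)\norm{\Omega_{sP}}=O(h^2)$. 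Dividing by $h$ and substituting $t=\sigma\beta$ yields the claimed derivative $-\beta\int_0^{1/2}\pi(\tau^{sP}_{\i\sigma\beta}(P))\,\d\sigma\,\Omega_{sP}$. The same argument applies with one-sided limits at the endpoints $s=0,1$, taking $h$ of the appropriate sign.

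The main obstacle is the rigorous justification of the expansional identity with the unbounded generator $H_s$. One needs the domain statement that $\Omega_{sP}$ is well defined and the equality of the two sides as vectors. I would first verify the identity for $t$ replaced by the imaginary time $-\i t$, where everything is unitary: $\e^{\i t(H_s+\pi(Q))}\e^{-\i tH_s}$ is the cocycle given by the series of Lemma~\ref{lem: perturbed dynamics}. I would then use that for entire $Q$ both sides extend analytically to the strip, and test against vectors $\pi(B)\Omega$ with $B$ entire, invoking uniqueness of analytic continuation. Alternatively, I would appeal to Araki's construction in \cite{A73b} and the proof of \cite[Theorem 5.4.4]{BR2}, where exactly $\e^{-\beta(H+\pi(Q))/2}\Omega=E_Q(\beta/2)\Omega$ is established, and apply it with the reference pair $(\omega_{sP},\tau^{sP})$ in place of $(\omega,\tau)$. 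Here I would use $(\omega_{sP})_{hP}=\omega_{(s+h)P}$ and the identification of the corresponding GNS vectors.
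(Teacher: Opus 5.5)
Your argument is correct, but it obtains the derivative differently from the paper. The paper keeps the expansional $E_{sP}(z)$ relative to the unperturbed dynamics $\tau$ and differentiates its Dyson series term by term in $s$. It then resums with the cocycle identities $E_{sP}(z)=E_{sP}(u)\tau_u(E_{sP}(z-u))$ and $\tau^{sP}_u(P)=E_{sP}(u)\tau_u(P)E_{sP}(u)^{-1}$, obtaining $\frac{\d}{\d s}E_{sP}(z)=\i\int_0^z\tau^{sP}_u(P)\,\d u\,E_{sP}(z)$. The Hilbert space enters only through $\Omega_{sP}=\pi(E_{sP}(\i\beta/2))\Omega$ from \cite[Corollary 5.4.5]{BR2}, always relative to the pair $(H,\Omega)$ with $H\Omega=0$. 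You instead re-center at $s$ and write the increment using the expansional $E^{(s)}_{hP}$ of $hP$ relative to $\tau^{sP}$; in the paper's normalization, your $E_{hP}(\beta/2)$ is $\pi(E^{(s)}_{hP}(\i\beta/2))$. This is a power series in $h$ with $h$-independent coefficients, so the derivative is its linear coefficient and the $O(h^2)$ remainder is explicit; no term-by-term differentiation or resummation is needed. The cost is the factorization $\Omega_{(s+h)P}=\pi(E^{(s)}_{hP}(\i\beta/2))\Omega_{sP}$. This is an Araki-type identity for the reference generator $H_s=H+s\pi(P)$, which does not annihilate $\Omega$, and it is the step you only sketch.

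Your justification (a) can be completed. Prove the identity for vectors of bounded spectral support for $H_s$, tested against vectors of bounded spectral support for $H_s+h\pi(P)$, which form a core for $\e^{-\beta(H_s+h\pi(P))/2}$. Obtain it by analytic continuation from the unitary relation $\e^{\i t(H_s+h\pi(P))}=\pi(E^{(s)}_{hP}(t))\e^{\i tH_s}$, $t\in\R$, and then pass to $\Omega\in D(\e^{-\beta H_s/2})$ by closedness. Route (b) as phrased does not suffice, for two reasons. First, $(\omega_{sP})_{hP}=\omega_{(s+h)P}$ is an equality of normalized states, so it fixes a representing vector only up to normalization and a partial isometry in $\pi(\mA)'$. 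Second, the generator in \cite[Theorem 5.4.4]{BR2} for the pair $(\omega_{sP},\tau^{sP})$ is the Liouvillean annihilating $\Omega_{sP}$, not $H_s$. The cleanest justification stays in the algebra. For real $t$, both $E_{(s+h)P}(t)$ and $E^{(s)}_{hP}(t)E_{sP}(t)$ solve $W'(t)=\i W(t)\tau_t((s+h)P)$ with $W(0)=1$, so they coincide. By entireness, $E_{(s+h)P}(z)=E^{(s)}_{hP}(z)E_{sP}(z)$ for all $z\in\C$, and \cite[Corollary 5.4.5]{BR2} at $s$ and at $s+h$ then gives your factorization. This chain rule is the integrated form of the paper's formula for $\frac{\d}{\d s}E_{sP}(z)$. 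Both proofs thus rest on the same cocycle structure: the paper uses it infinitesimally, and you use it at finite $h$.
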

\begin{proof}
For $z\in\C$, we define the expansional by
\begin{align*}
    E_{sP}(z)
    \coloneq\sum_{n=0}^\infty\i^n
    \int_{0\leq\sigma_n\leq\cdots\leq\sigma_1\leq z}
    \tau_{\sigma_n}(sP)\cdots\tau_{\sigma_1}(sP)
    \,\d\sigma_1\cdots\d\sigma_n.
\end{align*}
Here $\sigma_j=zt_j$ with $0\leq t_n\leq\cdots\leq t_1\leq1$, so the integrals are along the line segment from $0$ to $z$. Since $P$ is entire, $M_z\coloneq\sup_{0\leq t\leq1}\norm{\tau_{zt}(P)}$ is finite. The $n$-th summand and its $s$-derivative for $n \geq 1$ are bounded uniformly in $s\in[0,1]$ by
\[
    \frac{(|z|M_z)^n}{n!},
\]
and 
\[
    \frac{(|z|M_z)^n}{(n-1)!},
\]
respectively. Thus we may differentiate term by term.

In each differentiated product, let $u$ be the time of the differentiated factor. The factors with times below $u$ give $E_{sP}(u)$, while shifting the remaining times by $u$ gives $\tau_u(E_{sP}(z-u))$. Summing the absolutely convergent series yields
\begin{align*}
    \frac{\d}{\d s}E_{sP}(z)
    =\i\int_0^z E_{sP}(u)\tau_u(P)\tau_u(E_{sP}(z-u))\,\d u.
\end{align*}
The expansional and the perturbed dynamics satisfy
\begin{align*}
    E_{sP}(z)&=E_{sP}(u)\tau_u(E_{sP}(z-u)),\\
    \tau_u^{sP}(P)&=E_{sP}(u)\tau_u(P)E_{sP}(u)^{-1},
\end{align*}
see \cite[Proposition 5.4.1]{BR2}. Consequently,
\begin{align*}
    \frac{\d}{\d s}E_{sP}(z)
    &=\i\int_0^z\tau_u^{sP}(P)E_{sP}(z)\,\d u.
\end{align*}
Choosing $z=\i\beta/2$ and substituting $u=\i\beta\sigma$ gives
\begin{align*}
    \frac{\d}{\d s}E_{sP}\left(\frac{\i\beta}{2}\right)
    &=-\beta\int_0^{1/2}\tau_{\i\beta\sigma}^{sP}(P)
        E_{sP}\left(\frac{\i\beta}{2}\right)\,\d\sigma.
\end{align*}
By \cite[Corollary 5.4.5]{BR2}, $\Omega_{sP}=\pi(E_{sP}(\i\beta/2))\Omega$. Boundedness of $\pi$ therefore gives
\[
    \frac{\d}{\d s}\Omega_{sP}
    =-\beta\int_0^{1/2}\pi(\tau_{\i\beta\sigma}^{sP}(P))
        \Omega_{sP}\,\d\sigma. \qedhere
\]

\end{proof}

The preceding lemma gives the derivative of $s \mapsto \omega_{sP}(A)$.

\medskip
\begin{lemma} \label{lem: Derivative of state}
    Let $(\mA, \tau)$ be a $C^*$-dynamical system, $P = P^* \in \mA_\ent^\tau$, and $s \in [0, 1]$. Furthermore, let $\omega$ be a $(\tau, \beta)$-KMS state and $\Omega_{s P}$ the vector on the GNS Hilbert space constructed in Equation \eqref{eq: perturbed vector} and $\omega_{s P}$ the state constructed in Equation \eqref{eq: perturbed state}. Then the map
    \begin{align*}
        s \mapsto \omega_{sP} (A)
    \end{align*}
    is differentiable for every $A \in \mA$. Moreover, the derivative is given by
    \begin{align*}
        \frac{\d}{\d s} \omega_{sP} (A) = - \beta \int_0^1 \d \sigma \, \bigr( \omega_{sP} (A \, \tau_{\i \sigma \beta}^{s P} (P)) - \omega_{sP} (A) \, \omega_{sP} (P) \bigr) \, .
    \end{align*}
\end{lemma}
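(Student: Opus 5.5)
We will differentiate the quotient $\omega_{sP}(A)=(\Omega_{sP},\pi(A)\Omega_{sP})/(\Omega_{sP},\Omega_{sP})$ using Lemma \ref{lem: derivative vector}. Write $N_s\coloneq(\Omega_{sP},\Omega_{sP})$, which is strictly positive because $\Omega_{sP}=\pi(E_{sP}(\i\beta/2))\Omega$ with $E_{sP}(\i\beta/2)$ invertible and $\Omega$ separating. Since $s\mapsto\Omega_{sP}$ is norm differentiable, the numerator and $N_s$ are differentiable. Set $Q_s\coloneq\int_0^{1/2}\pi(\tau^{sP}_{\i\sigma\beta}(P))\,\d\sigma$. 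The product rule then gives
\begin{align*}
    \frac{\d}{\d s}(\Omega_{sP},\pi(A)\Omega_{sP})
    =-\beta\bigl[(\Omega_{sP},\pi(A)Q_s\Omega_{sP})+(Q_s\Omega_{sP},\pi(A)\Omega_{sP})\bigr].
\end{align*}
The derivative of $N_s$ is the same expression with $A=1$. By the quotient rule, $\frac{\d}{\d s}\omega_{sP}(A)$ equals the normalized numerator derivative minus $\omega_{sP}(A)$ times the normalized derivative of $N_s$.

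The key step is to turn the adjoint term into an integral over $\sigma\in[1/2,1]$. We have $Q_s^*=\int_0^{1/2}\pi(\tau^{sP}_{-\i\sigma\beta}(P))\,\d\sigma$, so we must show
\begin{align*}
    \omega_{sP}\bigl(\tau^{sP}_{-\i\sigma\beta}(P)\,A\bigr)=\omega_{sP}\bigl(A\,\tau^{sP}_{\i(1-\sigma)\beta}(P)\bigr).
\end{align*}
This follows from the $(\tau^{sP},\beta)$-KMS condition for $\omega_{sP}$, applied with $B=\tau^{sP}_{-\i\sigma\beta}(P)$. This $B$ is entire analytic for $\tau^{sP}$ because $P\in\mA^\tau_\ent$ implies $P\in\mA^{\tau^{sP}}_\ent$, as noted in the remark after the KMS definition. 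Applying the identity to $A$ and to $A=1$, and substituting $\sigma\mapsto1-\sigma$, the two half-integrals combine into $\int_0^1\omega_{sP}(A\,\tau^{sP}_{\i\sigma\beta}(P))\,\d\sigma$. The normalization term becomes $\omega_{sP}(A)\int_0^1\omega_{sP}(\tau^{sP}_{\i\sigma\beta}(P))\,\d\sigma$.

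It remains to simplify this last integral. By $\tau^{sP}$-invariance of $\omega_{sP}$, extended analytically to imaginary times for entire elements, the integrand equals $\omega_{sP}(P)$. Together with the quotient rule this yields the claimed formula.

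We expect two points to need the most care. First, one must justify writing $\omega_{sP}(\cdot)$ in terms of $(\Omega_{sP},\pi(\cdot)\,\pi(\tau^{sP}_{\i\sigma\beta}(P))\Omega_{sP})/N_s$. Second, one must check the analytic-continuation identities: the KMS shift above, and $\omega_{sP}(\tau^{sP}_z(P))=\omega_{sP}(P)$. The latter holds because $z\mapsto\omega_{sP}(\tau^{sP}_z(P))$ is entire and constant on $\R$. The $\sigma$-integrals are continuous in $\sigma$ and bounded, so interchanging them with the states is routine.
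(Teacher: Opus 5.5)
Your proposal is correct and follows the paper's proof essentially step for step. Both arguments apply the quotient rule with Lemma~\ref{lem: derivative vector}, use the KMS condition and the group property to turn the adjoint half-integral into an integral over $[1/2,1]$, and take $A=1$ for the normalization term. You additionally spell out the invariance identity $\omega_{sP}(\tau^{sP}_{\i\sigma\beta}(P))=\omega_{sP}(P)$ via the identity theorem, which the paper only uses implicitly in its final combining step.
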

\begin{proof}
    Let $A \in \mA$. Differentiating the state in the GNS representation gives
    \begin{align*} \label{eq: derivative of KMS}
        \frac{\d}{\d s} \omega_{sP} (A) &= \frac{(\frac{\d}{\d s} \Omega_{s P}, \, \pi(A) \, \Omega_{s P} ) + (\Omega_{s P}, \, \pi(A) \, \frac{\d}{\d s} \Omega_{s P} ) }{(\Omega_{s P}, \, \Omega_{s P})} 
        \\
        & \quad- \omega_{sP} (A) \, \frac{(\frac{\d}{\d s} \Omega_{s P}, \, \Omega_{s P} ) + (\Omega_{s P}, \, \frac{\d}{\d s} \Omega_{s P} ) }{(\Omega_{s P}, \, \Omega_{s P})} \, . \numberthis
    \end{align*}
    We focus on the first term, where we use Lemma \ref{lem: derivative vector} and $P \in \mA_\ent^{\tau^{s P}}$ to calculate
    \begin{align*}
        \hspace{4ex} & \hspace{-4ex} \frac{(\frac{\d}{\d s} \Omega_{s P}, \, \pi(A) \, \Omega_{s P} ) + (\Omega_{s P}, \, \pi(A) \, \frac{\d}{\d s} \Omega_{s P} ) }{(\Omega_{s P}, \, \Omega_{s P})} 
        \\
        &= - \beta \int_0^{1/2} \d \sigma \, \bigr( \omega_{s P} ( \tau_{- \i \sigma \beta}^{s P} (P) \, A) + \omega_{s P} ( A \, \tau_{\i \sigma \beta}^{s P} (P) ) \bigr)
        \\
        & = - \beta \int_0^{1/2} \d \sigma \, \omega_{s P} ( A \,\tau_{ \i (1 - \sigma) \beta}^{s P} (P) + A \, \tau_{\i \sigma \beta}^{s P} (P) )
        \\
        & = - \beta \int_0^{1} \d \sigma \, \omega_{s P} ( A \, \tau_{\i \sigma \beta}^{s P} (P)) \, .
    \end{align*}
    In the second equality we used the KMS condition and the group property
    \begin{align*}
        \tau_z(\tau_w(P))=\tau_{z+w}(P),\qquad
        P\in\mA_\ent^{\tau^{sP}},\quad z,w\in\C.
    \end{align*} The third equality follows by a change of integration variable in the first summand. The second term in Equation \eqref{eq: derivative of KMS} can be obtained by choosing $A = 1$. Combining both statements concludes the proof.
\end{proof}

We use Lemma \ref{lem: Derivative of state} and the functional calculus to prove Equation \eqref{eq: QBP differential eqtn. prop}. By the lemma,
\begin{align*} \label{eq: deriv state proof qbp}
    \frac{\d}{\d s} \omega_{s P} (A) = - \beta \int_0^1 \d \sigma \, \bigr( \omega_{sP} (A \, \tau_{\i \sigma \beta}^{s P} (P)) - \omega_{sP} (A) \, \omega_{sP} (P) \bigr)  \, . \numberthis
\end{align*}
We first deal with the first summand, the second summand will follow by normalization. Let $(\mathcal{H}_s, \, \pi_s, \, \Omega_s)$ be the GNS triple associated to $\omega_{s P}$, and $H_{s}$ the Hamiltonian that implements the group of automorphisms $\tau^{s P}$ in this representation. Note that by stationarity of $\omega_{s P}$ the GNS vector $\Omega_s$ is an eigenvector of the GNS Hamiltonian with eigenvalue zero. Passing to the GNS representation gives
\begin{align*}
    - \beta \int_0^1 \d \sigma \, \omega_{s P} ( A \, \tau_{\i \sigma \beta}^{s P} (P) ) = - \beta  \, ( \Omega_s, \, \pi_s(A) \int_0^1 \d \sigma \, \e^{- \beta \sigma H_{s}} \, \pi_s ( P ) \, \Omega_s ) \, ,
\end{align*}
where we have used Lemma \ref{lem: impl is entire}. Inserting the identity $(1+\e^{-\beta H_s})(1+\e^{-\beta H_s})^{-1}$ gives
\begin{align*}
    \hspace{4ex} & \hspace{-4ex} - \beta \int_0^1 \d \sigma \, \omega_{s P} ( A \, \tau_{\i \sigma \beta}^{s P} (P) ) 
    \\
    &= - \beta  \, ( \Omega_s, \, \pi_s(A)\, (1 + \e^{- \beta H_{s}}) \, (1 + \e^{- \beta H_{s}})^{-1} \int_0^1 \d \sigma \, \e^{- \beta \sigma H_{s}} \, \pi_s ( P )   \, \Omega_s ) \, . \numberthis \label{eq: proof qbp}
\end{align*}
This expression is well-defined by Lemma \ref{lem: impl is entire}, since $\pi_s(P)\Omega_s$ belongs to the domain of $\e^{-\beta H_s}$. Functional calculus gives
\begin{align*}
    (1+\e^{-\beta H_s})\hat f_\beta(H_s)\pi_s(P)\Omega_s
    =2\int_0^1\e^{-\beta\sigma H_s}\pi_s(P)\Omega_s\,\d\sigma.
\end{align*}
Hence, we can rewrite the integral in Equation \eqref{eq: proof qbp} as
\begin{align*}
    - \beta \int_0^1 \d \sigma \, \omega_{s P} ( A \, \tau_{\i \sigma \beta}^{s P} (P) )& = - \frac{\beta}{2}  \, ( \Omega_s, \, \pi_s(A) \, (1 + \e^{- \beta H_{s}}) \, \hat{f}_\beta (H_s) \, \pi_s(P) \, \Omega_s) \, ,
\end{align*}
where we used the function $\hat{f}_\beta (x)$ defined in Equation \eqref{def: f_hat}. Using its Fourier transform gives
\begin{align*}
    - \beta \int_0^1 \d \sigma \, \omega_{s P} ( A \, \tau_{\i \sigma \beta}^{s P} (P) ) &= - \frac{\beta}{2} \, \int_\R \d t \, f_\beta(t) \, ( \Omega_s, \, \pi_s(A) \, (1 + \e^{- \beta H_s}) \, \e^{-\i t H_s} \, \pi_s ( P ) \, \Omega_s ) 
    \\
    &= - \frac{\beta}{2} \, \int_\R \d t \, f_\beta(t) \, ( \Omega_s, \, \pi_s(A) \, (\e^{- \i t H_s} + \e^{- (\i t + \beta) H_s}) \, \pi_s ( P ) \, \Omega_s ) \, ,
\end{align*}
The interchange with the unbounded operator is justified as follows. Since $P$ is entire, $\pi_s(P)\Omega_s$ belongs to the domain of $\e^{-\beta H_s}$, and
\begin{align*}
    &\int_\R f_\beta(t)
       \norm{(1+\e^{-\beta H_s})\e^{-\i tH_s}\pi_s(P)\Omega_s}\,\d t\\
    &\quad\leq
      \norm{\pi_s(P)\Omega_s}
       +\norm{\e^{-\beta H_s}\pi_s(P)\Omega_s}<\infty.
\end{align*}
The integral thus converges in the graph norm of the closed operator
$1+\e^{-\beta H_s}$, which permits moving this operator through the integral. Using Lemma \ref{lem: impl is entire} we can go back to the state to get
\begin{align*}
    - \beta \int_0^1 \d \sigma \, \omega_{s P} ( A \, \tau_{\i \sigma \beta}^{s P} (P) ) &= - \frac{\beta}{2}  \, \int_\R \d t \, f_\beta(t) \, \omega_{s P} ( A \, \tau_{-t}^{s P} (P) +  A  \, \tau_{- t + \i \beta}^{s P} (P) ) \, .
\end{align*}
The KMS condition gives
\begin{align*}
    - \beta \int_0^1 \d \sigma \, \omega_{s P} ( A \, \tau_{\i \sigma \beta}^{s P} (P) ) &= - \frac{\beta}{2}  \,  \int_\R \d t \, f_\beta(t) \, \omega_{s P} ( \{ \tau_{-t}^{s P} (P),  \, A \} ) 
    \\
    &= - \frac{\beta}{2} \, \omega_{s P} ( \{ \Phi_\beta^{\tau^{s P}} (P ),  \, A \} ) \, .
\end{align*}
Substitution into Equation \eqref{eq: deriv state proof qbp} gives
\begin{align*}
    \frac{\d}{\d s} \omega_{s P} (A) = - \frac{\beta}{2} \, \omega_{s P} \bigr( \bigr\{ \Phi_\beta^{\tau^{s P}} (P - \omega_{s P} (P)),  \, A \bigr\} \bigr) \, ,
\end{align*}
where we have used that $f_\beta$ is normalized. This proves the proposition for $P=P^*\in\mA_\ent^\tau$ and every $A\in\mA$.

It remains to prove the differential equation for an arbitrary perturbation $P=P^*\in\mA$. Since $\mA_\ent^\tau$ is dense in $\mA$, choose $P_n=P_n^*\in\mA_\ent^\tau$ with $P_n\to P$. It suffices to show that the derivatives of $s\mapsto\omega_{sP_n}(A)$ converge uniformly in $s\in[0,1]$ to the derivative in the statement. We use
\begin{align*}
    \omega_{s P_n} (A) &= \omega_{s P + s(P_n - P)} (A)
    \\
    & = (\omega_{s P})_{s(P_n - P)} (A) \, .
\end{align*}
By \cite[Theorem 5.4.4.(3)]{BR2},  $Q \mapsto \omega_Q$ is norm-continuous for bounded self-adjoint perturbations. This gives
\begin{align*}
    \sup_{s\in[0,1]}\norm{\omega_{sP_n}-\omega_{sP}}\longrightarrow0.
\end{align*}
Indeed, otherwise there would be a subsequence and $s_j\to s$ for which the differences stayed bounded away from zero. Since
\begin{align*}
    \norm{s_jP_{n_j}-sP}
    &\leq\norm{P_{n_j}-P}+|s_j-s|\norm P\longrightarrow0,
\end{align*}
norm continuity gives both $\omega_{s_jP_{n_j}}\to\omega_{sP}$ and $\omega_{s_jP}\to\omega_{sP}$, which is a contradiction. It remains to show that
\begin{align*}
    \Phi_\beta^{\tau^{ s P_n}} (P_n) = \int_\R \d t\, f_\beta (t) \, \tau_{-t}^{s P_n} (P_n) \, ,
\end{align*}    
    converges to $\Phi_\beta^{\tau^{s P}} (P)$ uniformly in $s \in [0,1]$. Adding and subtracting $\Phi_\beta^{\tau^{sP_n}}(P)$ gives
\begin{align*}
    \lVert \Phi_\beta^{\tau^{s P_n}} (P_n)  - \Phi_\beta^{\tau^{s P}} (P) \rVert &\leq \lVert \Phi_\beta^{\tau^{s P_n}} (P_n) - \Phi_\beta^{\tau^{s P_n}} (P) \rVert  + \lVert \Phi_\beta^{\tau^{s P_n}} (P) - \Phi_\beta^{\tau^{s P}} (P) \rVert \, .
\end{align*}
    For the first term, linearity and the normalization of $f_\beta$ give
\begin{align*}
    \lVert \Phi_\beta^{\tau^{s P_n}} (P_n) - \Phi_\beta^{\tau^{s P_n}} (P) \rVert \leq \lVert P_n - P \rVert \, , 
\end{align*}
    which converges to zero uniformly in $s \in [0, 1]$. For the second term, Duhamel's formula gives, for every $t\in\R$,
    \[
        \sup_{s\in[0,1]}
        \norm{\tau_t^{sP_n}(P)-\tau_t^{sP}(P)}
        \leq 2|t|\norm{P_n-P}\norm P.
    \]
    This yields
    \[
        \sup_{s\in[0,1]}
        \norm{\Phi_\beta^{\tau^{sP_n}}(P)-
        \Phi_\beta^{\tau^{sP}}(P)}
        \leq 2\norm{P_n-P}\norm P
        \int_\R |t|f_\beta(t)\,\d t
        \longrightarrow0.
    \]
    This proves the required convergence uniformly in $s\in[0,1]$.

Together, the preceding estimates show that
\begin{align*}
    \omega_{s P_n} \bigr( \bigr\{ \Phi_\beta^{\tau^{s P_n}} (P_n - \omega_{s P_n} (P_n)),  \, A \bigr\} \bigr)
\end{align*}
converges to
\begin{align*}
    \omega_{s P} \bigr( \bigr\{ \Phi_\beta^{\tau^{s P}} (P - \omega_{s P} (P)),  \, A \bigr\} \bigr)
\end{align*}
uniformly in $s \in [0, 1]$. Therefore, the map $s \mapsto \omega_{s P} (A)$ is differentiable with derivative
\begin{align*}
    \frac{\d}{\d s} \omega_{s P} (A) = - \frac{\beta}{2} \omega_{s P} \bigr( \bigr\{ \Phi_\beta^{\tau^{s P}} (P - \omega_{s P} (P)),  \, A \bigr\} \bigr) \, ,
\end{align*}
which concludes the proof of the proposition.
\end{proof}

\section{Proofs of the locality results} \label{sec: proof fermions}

\begin{proof}[Proof of Theorem \ref{thm: LPPL}]
    We set $F(r)=(1+r)^{-\nu}$ in the polynomial regime and $F(r)=\exp(-ar)$ in the exponential regime.
    Let $A \in \mA_Y$ be arbitrary. We start by writing the difference of expectation values as an integral
    \begin{align*}
        |\omega_P (A) - \omega(A)| = \bigl| \int_0^1 \d s \, \frac{\d}{\d s} \omega_{s P} (A) \bigr| \, .
    \end{align*}
    We use the triangle inequality to move the absolute value inside the integral and Proposition \ref{prop: QBP differential equation} to evaluate the derivative
    \begin{align*}
        |\omega_P (A) - \omega(A) | &\leq \int_0^1 \d s \, \bigl| \frac{\d}{\d s} \omega_{s P} (A) \bigr|
        \\
        & = \frac{|\beta|}{2} \int_0^1 \d s \, \bigr| \omega_{s P} \bigr( \bigr\{ \Phi_\beta^{\tau^{s P}} (P) - \omega_{s P} \bigl( \Phi_\beta^{\tau^{s P}} (P) \bigr),  \, A \bigr\} \bigr) \bigr| \, ,
    \end{align*}
    where $\tau^{s P}$ is the dynamics generated by $\mL_{H} + \mL_{s P}$. By Lemma \ref{lem: dynamics is even} $\Phi_\beta^{\tau^{sP}}(P)$ is even. Hence, we can apply decay of correlations in the form of Lemma \ref{lem: decay of correlations for A_infty} and estimate
    \begin{align*} 
        |\omega_P (A) - \omega(A) | &\leq C \, |\beta| \, \lVert A \rVert \, \sup_{s \in [0, 1]} \, \bigl\lVert \Phi_\beta^{\tau^{s P}} (P) \bigr\rVert_{F, x} \, f_\cor(|Y|) \, (1 + d (x, Y))^{D \, n_\cor} \zeta_\cor( d(x, Y) / 2 ) \, . 
    \end{align*}
    Bounding the decay norm of $\Phi_\beta^{\tau^{s P}} (P)$ using Proposition \ref{prop: loc of phi} yields the claim.
\end{proof}

\medskip
\begin{proof}[Proof of Theorem \ref{thm: LPPL without doc}]
    We set $F(r)=(1+r)^{-\nu}$ in the polynomial regime and $F(r)=\exp(-ar)$ in the exponential regime.
    Let $A \in \mA_Y$ be arbitrary. We use Theorem \ref{thm: QBP} to find an operator $\eta_P \in \mA$. This operator is even, as can be seen by Equation \eqref{eq: eta}. We write
    \begin{align*}
        \omega_{V_X + P} (A) = \omega_{V_X}(\eta_P \, A \, \eta_P^*) \, .
    \end{align*}
    We use this expression to estimate the difference
    \begin{align*} \label{eq: difference LPPL}
        |\omega_{V_X + P} (A) - \omega_{V_X} (A) | &= |\omega_{V_X} (\eta_P \, A \, \eta_P^*) - \omega_{V_X} (A)|
        \\
        & \leq \bigl| \omega_{V_X} ([\eta_P, A] \, \eta_P^*) \bigr| + \bigl| \omega_{V_X} (A \, \eta_P \eta_P^*) - \omega_{V_X} (A) \, \omega_{V_X} (\eta_P \eta_P^*) \bigr| \, , \numberthis
    \end{align*}
    where we have inserted a zero and used the triangle inequality. Using that $\omega_{V_X}$ is a state we bound the first term by
    \begin{align*}
        |\omega_{V_X} ([\eta_P, A] \, \eta_P^*)| &\leq \lVert [\eta_P, \, A] \rVert \, \lVert \eta_P^* \rVert = \lVert [\eta_P - 1, A] \rVert \, \lVert \eta_P^* \rVert \, .
    \end{align*}
    Using Lemma \ref{lem:commutator bound} we can estimate the norm of the commutator by
    \begin{align*}
        \lVert [\eta_P, A] \rVert \leq \frac{2}{F (1)} \, F(d (x, Y)) \, \lVert \eta_P - 1 \rVert_{F, x} \, \lVert A \rVert \, ,
    \end{align*}
    for any $x \in \Gamma$. Furthermore, we use the trivial bound
    \begin{align*}
        \lVert \eta_P^* \rVert = \lVert \eta_P \rVert \leq \lVert \eta_P \rVert_{F, x} \, .
    \end{align*}
    We turn to the second term in Equation \eqref{eq: difference LPPL}. Using Lemma \ref{lem: dynamics is even} we can see that $\eta_P$ is even. Hence, we can apply our assumptions on decay of correlations in the form of Lemma \ref{lem: decay of correlations for A_infty} of the initial state and find
    \begin{align*}
        \hspace{4ex} & \hspace{-4ex} |\omega_{V_X} (A \, \eta_P \eta_P^*) - \omega_{V_X} (A) \, \omega_{V_X}(\eta_P \eta_P^*)| = |\omega_{V_X} (A \, (\eta_P \eta_P^* - 1)) - \omega_{V_X} (A) \, \omega_{V_X}(\eta_P \eta_P^* - 1)|
        \\
        &\leq C \, \lVert A \rVert \, \lVert \eta_P \eta_P^* - 1 \rVert_{F,x} \, f_\cor (|Y|) \, (1 + d (x, Y))^{D \, n_\cor} \zeta_\cor(d (x, Y) / 2) \, ,
    \end{align*}
    for some constant $C > 0$. Combining both bounds yields
    \begin{align*}
        &|\omega_{V_X + P} (A) - \omega_{V_X} (A) |
        \\
        & \quad\leq C^\prime \, \lVert A \rVert \, \bigl( \lVert \eta_P - 1 \rVert_{F, x} \, \lVert \eta_P \rVert_{F, x} + \lVert \eta_P \eta_P^* - 1 \rVert_{F, x} \bigr) \, f_\cor(|Y|) \, (1 + d (x, Y))^{D \, n_\cor} \zeta_\cor(d (x, Y) / 2) \, .
    \end{align*}
    We find that the generator is bounded by
    \begin{align*}
        \norm{\Phi_\beta^{\tau^{V_X+sP}}(P)-\omega_{V_X+sP}(P)}_{F,x}\leq 2\norm{\Phi_\beta^{\tau^{V_X+sP}}(P)}_{F,x}.
    \end{align*}
    By definition of $\eta_P$ and Lemma \ref{lem: continuity decay norm} we find that 
    \begin{align*}
        \norm{\eta_P-1}_{F,x} \leq \exp \bigl( 2 |\beta|\sup_{s\in[0,1]} \norm{\Phi_\beta^{\tau^{V_X+sP}}(P)}_{F,x} \bigr) - 1.
    \end{align*}
    Using $\eta_P \eta_P^* - 1 = (\eta_P - 1) \eta_P^*+ (\eta_P^* - 1)$ and again Lemma \ref{lem: continuity decay norm} we find that
    \begin{align*}
        \norm{\eta_P\eta_P^*-1}_{F,x}
        &\leq3 \bigl( \exp\bigl(4|\beta|\sup_{s\in[0,1]} \norm{\Phi_\beta^{\tau^{V_X+sP}}(P)}_{F,x}\bigr)-1 \bigr).
    \end{align*}
    Inserting these bounds and applying Proposition \ref{prop: loc of phi} proves the statement.
\end{proof}

\medskip
\begin{proof}[Proof of Theorem \ref{thm: li from lppl}]
    Let $X \subseteq Y \subsetneq Z\in P_0 (\Gamma)$ and $A \in \mA_X$ be arbitrary. We first show the bound for
    \begin{align*}
        |\omega_{V_Z} (A) - \omega_{V_Y} (A)| \, .
    \end{align*}
    We remove the sites of $Z\setminus Y$ one at a time and use LPPL in each of the states before a site is removed.
    
    To formalize this idea, we set $N \coloneq |Z \setminus Y|$ and label the sites in $Z \setminus Y$ by $z_i$ with $i \in \{1, \dots, N\}$. We define $V_Y^i \coloneq V_{Y \cup \{z_1, \ldots, z_i\}}$ for $i=0,\ldots,N$, with $V_Y^0=V_Y$. A telescoping series gives us that 
    \begin{align*} \label{eq: sum LI}
        |\omega_{V_Z} (A) - \omega_{V_Y} (A)| \leq \sum_{i = 1}^N |\omega_{V_Y^i} (A) - \omega_{V_Y^{i - 1}} (A)| \, . \numberthis
    \end{align*}
    Composition of bounded perturbations gives
    \begin{align*}
        \omega_{V_Y^{i-1}}
        =(\omega_{V_Y^i})_{V_Y^{i-1}-V_Y^i}.
    \end{align*}
    Applying LPPL to the initial state $\omega_{V_Y^i}$ therefore gives
    \begin{align*}
        |\omega_{V_Y^i} (A) - \omega_{V_Y^{i - 1}} (A)| \leq \lVert A \rVert \, f_\lppl (|X|) \, g_\lppl(\lVert V_Y^i - V_Y^{i - 1} \rVert_{F, z_i} ) \, \zeta_\lppl (d (z_i, X) ) \, .
    \end{align*}
    We bound the decay norm of the interaction by
    \begin{align*}
        \lVert V_Y^i - V_Y^{i - 1} \rVert_{F, z_i} \leq \sum_{\substack{\Lambda \subseteq Y \cup \{z_1, \ldots, z_i \} \\ \Lambda \nsubseteq Y \cup \{z_1, \ldots, z_{i - 1} \}}} \lVert V(\Lambda) \rVert_{F, z_i} \, .
    \end{align*}
    The bound 
    \begin{align*}
        \lVert V(\Lambda) \rVert_{F, z_i} \leq 3 \frac{\lVert V (\Lambda) \rVert}{F(\diam (\Lambda))}
    \end{align*}
    then yields the bound 
    \begin{align*}
        \lVert V_Y^i - V_Y^{i - 1} \rVert_{F, z_i} \leq 3 \lVert V \rVert_{F} \, ,
    \end{align*}
    and hence
    \begin{align*}
        |\omega_{V_Y^i} (A) - \omega_{V_Y^{i - 1}} (A)| \leq \lVert A \rVert \, f_\lppl (|X|) \, g_\lppl(3 \lVert V \rVert_{F} ) \, \zeta_\lppl (d (z_i, X) ) \, .
    \end{align*}
    Inserting this bound into Equation \eqref{eq: sum LI} gives
    \begin{align*}
        |\omega_{V_Z}(A)-\omega_{V_Y}(A)|
        &\leq\norm A f_\lppl(|X|)g_\lppl(3\norm V_F)
                 \sum_{z\in Z\setminus Y}\zeta_\lppl(d(z,X))
        \\
        & \leq C_\sur \, |X| \, \norm A f_\lppl(|X|)g_\lppl(3\norm V_F) \, h(d(X, Z \setminus Y)).
    \end{align*}
    The first bound uses LPPL only in the states $\omega_{V_Y^i}$, $i=1,\ldots,N$; these form the stated sequence when read in reverse order. The second bound uses the summability assumption. For the uniform assertion, apply these estimates to $sV$ and use
    \[
        g_\lppl(3\norm{sV}_F)
        =g_\lppl(3s\norm V_F)
        \leq g_\lppl(3\norm V_F). \qedhere
    \]
\end{proof}

\medskip
\begin{proof}[Proof of Theorem \ref{thm: doc from LI}]
    Let $X$, $Y \in P_0 (\Gamma)$ and $A \in \mA_X$, $B\in \mA_Y$ be arbitrary. We fix $r = d(X, Y) / 3$ for the rest of the proof. If $r\leq1$, then $\zeta_\li(r)\geq\zeta_\li(1)\geq2^DF(1)>0$, and
    \begin{align*}
        |\omega_H(AB)-\omega_H(A)\omega_H(B)|
        \leq\frac{2}{\zeta_\li(1)}
        \bigl(f_\li(|X|+|Y|)+|X|\bigr)\zeta_\li(r)\norm A\norm B.
    \end{align*}
    The same estimate holds for $\omega_{H_M}$ for every $M\in P_0(\Gamma)$. We therefore assume $r>1$, so the enlarged supports are disjoint. We first prove the estimate for the finite-patch states. Fix $M\in P_0(\Gamma)$ with $X\cup Y\subseteq M$. Throughout this estimate, $X_r$ and $Y_r$ denote the fattenings intersected with $M$. By LI we may restrict the perturbation to $X_r\cup Y_r$. We then remove the terms connecting the two fattenings; the remaining state has zero covariance by the product property of the tracial state.

    If $X_r\cup Y_r=X\cup Y$, then $d(X,M\setminus X)\geq r$. Corollary \ref{cor: cont. KMS state} and the product property of the tracial state (see \cite[Proposition 4.6]{AM03}) give
    \begin{align*}
        |\omega_{H_M}(AB)-\omega_{H_M}(A)\omega_{H_M}(B)|
        &\leq 3|\beta|\norm A\norm B
            \norm{H_M-H_X-H_{M\setminus X}}\\
        &\leq 3|\beta|\norm A\norm B
            \sum_{x\in X}\sum_{\substack{x\in Z\subseteq M\\\diam Z\geq r}}\norm{H(Z)}\\
        &\leq 3|\beta|\,|X|F(r)\norm H_F\norm A\norm B.
    \end{align*}
    This proves the bound in this case, since $F(r)\leq\zeta_\li(r)$. We may therefore assume that $X\cup Y\subsetneq X_r\cup Y_r$.
    If $X_r\cup Y_r=M$, the following difference vanishes; otherwise, LI gives
    \begin{align*}
        \hspace{4ex}&\hspace{-4ex} |\omega_{H_M} (A B) - \omega_{H_{X_r \cup Y_r}} (A B) - \bigl( \omega_{H_M} (A) \, \omega_{H_M} (B) - \omega_{H_{X_r \cup Y_r}} (A) \, \omega_{H_{X_r \cup Y_r}} (B) \bigr) | 
        \\
        &\leq 3 \, \lVert A \rVert \, \lVert B \rVert \, f_\li (|X| + |Y|) \, g_\li (\lVert H \rVert_F) \, \zeta_\li (r) \, . 
    \end{align*}
    We use Corollary \ref{cor: cont. KMS state} and find
    \begin{gather*}
        | \omega_{H_{X_r \cup Y_r}} (A B) - \omega_{H_{X_r} + H_{Y_r}} (A B) - \bigl( \omega_{H_{X_r \cup Y_r}} (A) \, \omega_{H_{X_r \cup Y_r}} (B) - \omega_{H_{X_r} + H_{Y_r}} (A) \, \omega_{H_{X_r} + H_{Y_r}} (B) \bigr) | 
        \\
        \leq 3 |\beta| \, \lVert A \rVert \, \lVert B \rVert \, \lVert \sum_{ \substack{ Z\subseteq X_r \cup Y_r \\ Z \nsubseteq X_r, \, Z \nsubseteq Y_r}} H(Z) \rVert \,  .  \numberthis \label{eq: proof doc interaction norm}
    \end{gather*}
    Since $d(X_r,Y_r)\geq d(X,Y)-2r=r$, every interaction term in the sum meets both fattenings and has diameter at least $r$. Hence,
    \begin{align*}
        \norm{\sum_{\substack{Z\subseteq X_r\cup Y_r\\
        Z\nsubseteq X_r,\;Z\nsubseteq Y_r}}H(Z)}
        &\leq\sum_{x\in X_r}
        \sum_{\substack{x\in Z\in P_0(\Gamma)\\\diam Z\geq r}}\norm{H(Z)}\\
        &\leq |X_r|F(r)\norm H_F.
    \end{align*}
    Inserting this estimate into Equation \eqref{eq: proof doc interaction norm} yields
    \begin{gather*}
        | \omega_{H_{X_r \cup Y_r}} (A B) - \omega_{H_{X_r} + H_{Y_r}} (A B) - \bigl( \omega_{H_{X_r \cup Y_r}} (A) \, \omega_{H_{X_r \cup Y_r}} (B) - \omega_{H_{X_r} + H_{Y_r}} (A) \, \omega_{H_{X_r} + H_{Y_r}} (B) \bigr) | 
        \\
        \leq 3 |\beta| \, F(r) \,C_\vol  (1 + r)^D |X| \, \lVert A \rVert \, \lVert B \rVert \, \lVert  H \rVert_F  \, .
    \end{gather*}
    Finally, we notice that by the product property of the tracial state
    \begin{align*}
        |\omega_{H_{X_r} + H_{Y_r}} (A B) -  \omega_{H_{X_r} + H_{Y_r}} (A) \, \omega_{H_{X_r} + H_{Y_r}} (B)| = 0 \, .
    \end{align*}
    Combining the preceding estimates gives
    \begin{align*}
        |\omega_{H_M} (A B) - \omega_{H_M} (A) \, \omega_{H_M} (B)| \leq  C \, (1 + |\beta|) \, \bigl( f_\li(|X| + |Y|) + |X| \bigr) \, \zeta_\li (r) \, \lVert A \rVert \, \lVert B \rVert \,  ,
    \end{align*}
    for some constant $C > 0$ independent of $M$. This proves the bound whenever $X\cup Y\subseteq M$.

    Let now $X$ and $Y$ be disjoint but not necessarily contained in $M$. Since $H_M\in\mA_M$, the defining property of the conditional expectation gives $\omega_{H_M}=\omega_{H_M}\circ\E_M$. Moreover, the product property of the tracial state gives $\E_M(AB)=\E_M(A)\E_M(B)$. Therefore,
    \begin{align*}
        &\omega_{H_M}(AB)-\omega_{H_M}(A)\omega_{H_M}(B) =\omega_{H_M}(\E_M(A)\E_M(B))
            -\omega_{H_M}(\E_M(A))\omega_{H_M}(\E_M(B)).
    \end{align*}
    If one of the intersections $X\cap M$ and $Y\cap M$ is empty, the right-hand side vanishes. Otherwise, $\E_M(A)\in\mA_{X\cap M}$ and $\E_M(B)\in\mA_{Y\cap M}$, and contractivity of $\E_M$ gives the same bound. Overlapping supports are covered by the trivial covariance bound. Finally, choosing $M=\Lambda_n$ for an IAS and using Proposition \ref{prop: convergence from li} proves the bound for $\omega_H$.
\end{proof}

\medskip
\begin{proof}[Proof of Theorem \ref{thm: doc from LI + doc}]
    Let $X$, $Y\in P_0(\Gamma)$ and $A\in\mA_X$, $B\in\mA_Y$ be arbitrary, with at least one of $A$, $B$ even. For $\beta=0$ the state and all its perturbations are the tracial state. Hence, the statement becomes trivial. Assume $\beta \neq 0$.

    The proof proceeds through the following approximations:
    \begin{align*}
        \omega_V
        \xrightarrow{\mathrm{LI}}\omega_{V_{X_R\cup Y_R}}
        \xrightarrow{\text{remove connecting terms}}\omega_{V_{X_R}+V_{Y_R}}
        \xrightarrow{\text{QBP}}\phi_L.
    \end{align*}
    Each arrow means that the covariance is replaced by the covariance in the state on its right, with an error estimated below. The first step uses LI, the second uses exponential locality of $V$, and the third uses the QBP operators localized in a region $X_{R + L} \cup Y_{R+ L}$. The covariance in $\phi_L$ is then written as covariances of the initial state $\omega$ and estimated by clustering.

    The parameters $R$ and $L$ balance these errors. Increasing $R$ improves the LI estimate and increasing $L$ improves the QBP localization estimates. On the other hand, increasing them also increases the QBP norms and the supports entering the clustering estimate. We therefore choose them so that the relevant volumes grow sublinearly in $d(X, Y)$. This gives the stretched-exponential bound in the theorem. Decay of correlations that is only polynomial in the distance would not suffice for this argument. For a tracial reference state the final QBP step is unnecessary and a radius proportional to the separation can be used. Thus the loss is a limitation of the estimates and does not rule out exponential clustering of the perturbed state.

    All constants and lower bounds on $d(X,Y)$ used below are independent of $X$, $Y$, $A$ and $B$. They depend only on the lattice, $c$, $\beta$, the interaction norms and the stated clustering and LI data. We specify more restricted dependencies where they are useful. The bound $|\omega_V(AB)-\omega_V(A)\omega_V(B)|\leq2\norm A\norm B$ includes bounded distances after increasing the constant in the theorem, so we may restrict to sufficiently large $d(X,Y)$.

\subsubsection*{Restricting the perturbation to two separated regions}
    
    Fix $\vartheta \coloneq 1 / (D + 2)$. We divide the proof into two cases and first consider the case $1 + |X| + |Y| \leq(1+d (X, Y))^\vartheta$. The complementary case will follow from the elementary covariance bound at the end of the proof.
    We define the fattening parameter
    \[
        R\coloneq \left\lfloor \left(\frac{(1+d(X, Y))^\vartheta}{1 + |X| + |Y|}\right)^{1/D} \right\rfloor.
    \]
    The denominator in this choice compensates for the sizes of the original supports. Since $R\geq1$, the volume bound gives, with $C_0=2^DC_\vol$ depending only on the lattice,
    \begin{align}
        |X_R|+|Y_R| \leq C_0 (1+d (X, Y))^\vartheta. \label{eq: proof generalized doc size U W}
    \end{align}
    Furthermore, $d(X_R,Y_R) \geq d(X,Y)-2R$. Since $R \leq (1+d(X, Y))^{\vartheta/D}$ and $\vartheta/D<1$, it holds that
    \begin{align}
        d(X_R, Y_R)\geq \frac {d(X, Y)}{2}
        \label{eq: proof generalized doc distance U W}
    \end{align}
    for all sufficiently large distances. 

    We next use LI to replace the limiting perturbation by the perturbation supported in $X_R\cup Y_R$. Let $(\Lambda_j)_{j\in\N}$ be an IAS and set $V_j\coloneq V_{\Lambda_j}$. For all sufficiently large $j$, it holds that $X_R \cup Y_R \subseteq\Lambda_j$ and that $d(X\cup Y,\Lambda_j\setminus(X_R\cup Y_R)) \geq R$. LI applied to $AB\in\mA_{X\cup Y}$ gives
    \begin{align*}
        |\omega_{V_j}(AB)-\omega_{V_{X_R\cup Y_R}}(AB)|
        \leq
        \lVert A\rVert\lVert B\rVert
        f_\li(|X\cup Y|)
        g_\li(\lVert V\rVert_{\exp(-c\cdot)})
        \zeta_\li(R).
    \end{align*}
    Applying LI also to $A$ and $B$ and using that states have norm one gives
    \begin{align*}
        &|\omega_{V_j}(A)\omega_{V_j}(B)
        -\omega_{V_{X_R\cup Y_R}}(A)\omega_{V_{X_R \cup Y_R}}(B)|\\
        & \quad \leq
        2\lVert A\rVert\lVert B\rVert
        f_\li(|X| + |Y|)
        g_\li(\lVert V\rVert_{\exp(-c\cdot)})
        \zeta_\li(R).
    \end{align*}
    Combining the preceding estimates, taking $j\to\infty$, using Proposition \ref{prop: convergence from li} and the assumptions on $f_\li$ and $\zeta_\li$, we obtain
    \begin{align*}
        &\big|
        \omega_V(AB)-\omega_V(A)\omega_V(B)
        -\omega_{V_{X_R\cup Y_R}}(AB)
        +\omega_{V_{X_R\cup Y_R}}(A)
         \omega_{V_{X_R\cup Y_R}}(B)
        \big|
        \\
        & \qquad \leq
        C_1 \lVert A\rVert\lVert B\rVert
            \exp\bigl(C_\li(1+|X|+|Y|)\bigr) \e^{-\mu_\li R}, \numberthis \label{eq: proof generalized doc LI}
    \end{align*}
    where $C_1=3C_\li^2g_\li(\norm V_{\exp(-c\cdot)})$ depends only on the LI data and the interaction norm. In the next step, we remove the terms connecting $X_R$ and $Y_R$.
    \begin{align*}
        V_{X_R\cup Y_R} - V_{X_R} - V_{Y_R} = \sum_{\substack{Z\subseteq (X_R\cup Y_R)\\
        Z\nsubseteq X_R,\;Z\nsubseteq Y_R}}V(Z).
    \end{align*}
    Every set in this sum meets both $X_R$ and $Y_R$, and therefore has diameter at least $d(X_R,Y_R)$. Consequently,
    \begin{align*}
        \lVert V_{X_R\cup Y_R}-V_{X_R}-V_{Y_R} \rVert
        &\leq
        \sum_{x\in X_R}
        \sum_{\substack{x\in Z\in P_0(\Gamma)\\
        \diam(Z)\geq d(X_R, Y_R)}}
        \lVert V(Z)\rVert\\
        &\leq
        |X_R|\e^{-c\,d(X_R,Y_R)}
        \lVert V\rVert_{\exp(-c\cdot)}.
    \end{align*}
    Equations \eqref{eq: proof generalized doc size U W} and \eqref{eq: proof generalized doc distance U W} yield
    \begin{align*}
        \lVert V_{X_R\cup Y_R}-V_{X_R}-V_{Y_R}\rVert
        \leq C_0 \, \lVert V \rVert_{\exp(- c \cdot)} \,  (1 + d (X, Y))^\vartheta\e^{- c d (X, Y)/2},
    \end{align*}
    Applying Corollary \ref{cor: cont. KMS state} to $AB$, $A$ and $B$ gives
    \begin{align}
        &\big|
        \omega_{V_{X_R \cup Y_R}}(AB)
        -\omega_{V_{X_R \cup Y_R}}(A)\omega_{V_{X_R \cup Y_R}}(B)
        -\omega_{V_{X_R} + V_{Y_R}} (AB)
        +\omega_{V_{X_R} + V_{Y_R}}(A) \omega_{V_{X_R} + V_{Y_R}} (B)
        \big|\nonumber\\
        & \quad \leq
        3|\beta|\lVert A\rVert \, \lVert B\rVert
        \lVert V_{X_R \cup Y_R} - V_{X_R} - V_{Y_R} \rVert\nonumber\\
        & \quad \leq 3 |\beta| \, C_0 \, \lVert V \rVert_{\exp(- c \cdot)} \, \lVert A \rVert \, \lVert B \rVert \, (1+d (X, Y))^\vartheta\e^{- c\, d(X, Y)/2}.
        \label{eq: proof generalized doc remove crossing}
    \end{align}

\subsubsection*{Localizing the two QBP operators}

    It remains to estimate the covariance in $\omega_{V_{X_R}+V_{Y_R}}$. We first express this state in terms of $\omega$. Let $\eta_{X_R}$ be the QBP operator which transports $\omega$ to $\omega_{V_{X_R}}$, and let $\eta_{Y_R}^{X_R}$ transport $\omega_{V_{X_R}}$ to $\omega_{V_{X_R}+V_{Y_R}}$. Applying Theorem \ref{thm: QBP} twice gives
    \begin{align*}
        \omega_{V_{X_R}+V_{Y_R}}(D)
        =\omega\bigl(\eta_{X_R}\eta_{Y_R}^{X_R}D(\eta_{Y_R}^{X_R})^*\eta_{X_R}^*\bigr),
        \qquad D\in\mA_\Gamma.
    \end{align*}
    The second operator depends on the first perturbation, but its generator is localized near $Y_R$. We will approximate the two operators separately, by localizing their generators before taking the Dyson series. This also gives control of the norms of the local approximations.

    We first choose the decay exponents so that Proposition \ref{prop: loc of phi} applies at the fixed inverse temperature $\beta$. Let $\kappa_{c,c/2}$ be the constant from Proposition \ref{prop: LiebRobinson}, which depends only on the lattice and $c$. By increasing this constant if necessary, we assume that $\kappa_{c,c/2}\geq1$. We choose $b=c/2$ and
    \begin{align*}
        0<a&<\min \bigl(
        \frac c{16},
        \frac{\pi c}{64\kappa_{c,c/2}(1+|\beta|)
        \bigl(1+\norm H_{\exp(-c\cdot)}+2\norm V_{\exp(-c\cdot)}\bigr)}
        \bigr).
    \end{align*}
    Thus $a$ and $b$ are fixed independently of the supports. They satisfy $b>8a$ and $c>16a$. The interaction bound $C_\inter(a,c)=\frac{\pi}{2\kappa_{c,c/2}}(\frac c{8a}-1)$ from Proposition \ref{prop: loc of phi} satisfies
    \begin{align*}
        \frac{C_\inter(a,c)}{1+|\beta|}
        >2\bigl(1+\norm H_{\exp(-c\cdot)}+2\norm V_{\exp(-c\cdot)}\bigr).
    \end{align*}
    Both interactions $H+sV_{X_R}$ and $H+V_{X_R}+sV_{Y_R}$ therefore satisfy this bound uniformly in $s$, $X$, $Y$ and $R$.

    For $Z\in P_0(\Gamma)$ and $x\in Z$, the definition of the decay norm gives
    \begin{align*}
        \lVert V(Z)\rVert_{\exp(-b\cdot),x}
        \leq3\e^{b\diam(Z)}\lVert V(Z)\rVert.
    \end{align*}
    Proposition \ref{prop: loc of phi} consequently gives
    \begin{align*}
        \lVert \Phi_\beta^{ \tau^{sV_{X_R}}} (V(Z)) \rVert_{\exp( - a \cdot), x}
        &\leq C_\Phi(1+|\beta|)\norm{V(Z)}_{\exp(-b\cdot),x}\\
        &\leq 3C_\Phi(1+|\beta|)\e^{b\diam(Z)}\lVert V(Z)\rVert,
    \end{align*}
    where $C_\Phi>0$ depends only on the lattice and the fixed exponents $a,b,c$, and is uniform over the interactions satisfying the displayed norm bound. In particular, it is independent of $s$ and all supports and localization radii. Let $L\in\N$. For $x\in Z\subseteq X_R$, we have $B_L(x)\subseteq X_{R+L}$. Hence,
    \begin{align*}
        \bigl\lVert (1 - \E_{X_{R + L}}) \Phi_\beta^{\tau^{s V_{X_R}}} (V(Z)) \bigr\rVert
        &\leq 2 \bigl\lVert (1 - \E_{B_L(x)}) \Phi_\beta^{\tau^{ s V_{X_R} } } (V(Z)) \bigr\rVert
        \\
        &\leq 6C_\Phi(1+|\beta|) \, \e^{-a  L} \, \e^{b \, \diam(Z)} \lVert V(Z)\rVert.
    \end{align*}
    The scalar centering term is annihilated by $1-\E_{X_{R+L}}$. Summing the last estimate over $Z\subseteq X_R$ and using $b<c$ therefore gives
    \begin{align*}
        &\bigl \lVert (1 - \E_{X_{R + L}}) \Phi_\beta^{ \tau^{s V_{X_R}}} (V_{X_R} - \omega_{s V_{X_R}} (V_{X_R})) \bigr\rVert
        \\
        &\quad \leq 6C_\Phi(1+|\beta|) \, \e^{-aL} \sum_{x \in X_R} \sum_{x \in Z \in P_0(\Gamma)} \e^{b \, \diam(Z)} \, \lVert V(Z) \rVert
        \\
        &\quad \leq 6C_\Phi(1+|\beta|) \, |X_R| \, \e^{-aL} \, \lVert V \rVert_{\exp(-c\cdot)}, \numberthis \label{eq: proof generalized doc generator U}
    \end{align*}
    The uniform interaction bound also applies to the second QBP operator. Replacing $X_R$ by $Y_R$ and using the dynamics generated by $H+V_{X_R}+sV_{Y_R}$ yields
    \begin{align}
        \bigl\lVert (1 - \E_{Y_{R + L}}) \Phi_\beta^{\tau^{ V_{X_R} + s V_{Y_R}}} (V_{Y_R}) \bigr\rVert \leq 6C_\Phi(1+|\beta|) \, |Y_R| \, \e^{-aL} \, \lVert V \rVert_{\exp(- c \cdot)} \, , \label{eq: proof generalized doc generator W}
    \end{align}
    We now define the local approximations. Let $\eta_{X_R}^L$ be the Dyson series in Equation \eqref{eq: eta} with the QBP generator replaced by
    \begin{align*}
        \E_{X_{R + L}} \Phi_\beta^{\tau^{\sigma V_{X_R}}} \bigl(V_{X_R}-\omega_{\sigma V_{X_R}}(V_{X_R})\bigr),
    \end{align*}
    and define $(\eta_{Y_R}^{X_R})^L$ by the corresponding replacement with
    \begin{align*}
        \E_{Y_{R + L}} \Phi_\beta^{\tau^{V_{X_R}+\sigma V_{Y_R}}} \bigl(V_{Y_R}-\omega_{V_{X_R}+\sigma V_{Y_R}}(V_{Y_R})\bigr).
    \end{align*}
    By construction, $\eta_{X_R}^L\in\mA_{X_{R+L}}$ and $(\eta_{Y_R}^{X_R})^L\in\mA_{Y_{R+L}}$. To estimate their errors and norms, we use contractivity of $\Phi_\beta$ and of the conditional expectations. In particular,
    \begin{align*}
        \norm{\Phi_\beta^{\tau^{sV_{X_R}}}(V_{X_R})}
        &\leq\norm{V_{X_R}}\leq |X_R|\norm V_{\exp(-c\cdot)},\\
        \norm{\Phi_\beta^{\tau^{V_{X_R}+sV_{Y_R}}}(V_{Y_R})}
        &\leq\norm{V_{Y_R}}\leq |Y_R|\norm V_{\exp(-c\cdot)}.
    \end{align*}
    Centering increases these bounds by at most a factor of two. In each term of the difference of the Dyson series, we replace one generator at a time using \cite[Equation 34]{CMTW25} and use Equations \eqref{eq: proof generalized doc generator U} and \eqref{eq: proof generalized doc generator W}. Summing the series gives
    \begin{align*}
        \lVert\eta_{X_R}-\eta_{X_R}^L\rVert
        &\leq h^\prime (|\beta|) \, |X_R| \, \lVert V \rVert_{\exp(-c \cdot)} \, \exp (2 |\beta| \, |X_R| \, \lVert V \rVert_{\exp( - c \cdot)} - a L ) ,\nonumber\\
        \lVert\eta_{Y_R}^{X_R}-(\eta_{Y_R}^{X_R})^L\rVert
        &\leq h^\prime (|\beta|) \, |Y_R| \, \lVert V \rVert_{\exp(-c \cdot)} \, \exp (2 |\beta| \, |Y_R| \, \lVert V \rVert_{\exp( - c \cdot)} - a L) ,
    \end{align*}
    where $h^\prime(|\beta|)\coloneq3C_\Phi|\beta|(1+|\beta|)$. The same generator norm bounds give
    \begin{align*}
        \lVert \eta_{X_R} \rVert, \, \lVert \eta_{X_R}^L \rVert &\leq \exp \bigl( 2 |\beta| \, |X_R| \, \lVert V \rVert_{\exp( - c \cdot)} \bigr)
        \\
        \lVert \eta_{Y_R}^{X_R} \rVert , \, \lVert (\eta_{Y_R}^{X_R})^L \rVert &\leq \exp \bigl(2 |\beta| \, |Y_R| \, \lVert V \rVert_{\exp( - c \cdot)} \bigr) \numberthis \label{eq: proof generalized doc qbp norms}
    \end{align*}
    We now estimate the effect on expectations of replacing both QBP operators by their local approximations. Using the error bound for the replaced factor and the norm bounds for the remaining factors gives
    \begin{align}
        &\big| \omega \bigl( \eta_{X_R} \eta_{Y_R}^{X_R} \, D \, (\eta_{Y_R}^{X_R})^* \eta_{X_R}^* \bigr) - \omega \bigl( \eta_{X_R}^L (\eta_{Y_R}^{X_R})^L \, D \, ((\eta_{Y_R}^{X_R})^L)^* (\eta_{X_R}^L)^* \bigr) \big| \nonumber
        \\
        &\quad \leq 2 h^\prime (|\beta|) \, (|X_R| + |Y_R| )  \, \lVert V \rVert_{\exp(- c \cdot)} \, \exp \bigl( 6 |\beta| \, (|X_R| + |Y_R|) \, \lVert V \rVert_{\exp(-c \cdot)} - a L \bigr) \, \lVert D \rVert,
        \label{eq: proof generalized doc functional approximation}
    \end{align}
    for every $D\in\mA_\Gamma$. This estimate determines the required width $L$. Choose $K>0$, depending only on $a$, the lattice and $\norm V_{\exp(-c\cdot)}$, such that $aK>6C_0\norm V_{\exp(-c\cdot)}$, and set
    \begin{align*}
        L\coloneq\left\lceil|\beta|K(1+d(X,Y))^\vartheta\right\rceil.
    \end{align*}
    By Equation \eqref{eq: proof generalized doc size U W}, the exponent in the error bound is then negative with magnitude proportional to $(1+d(X,Y))^\vartheta$. Absorbing the remaining factor $|X_R|+|Y_R|$ into half of this decay gives
    \begin{align*}
        &2 h^\prime (|\beta|)(|X_R|+|Y_R|)\norm V_{\exp(-c\cdot)}
          \exp\bigl(6|\beta|(|X_R|+|Y_R|)\norm V_{\exp(-c\cdot)}-aL\bigr)\\
        &\quad\leq C_2\exp\bigl(-\alpha(1+d(X,Y))^\vartheta\bigr),
        \numberthis\label{eq: proof generalized doc sequential qbp}
    \end{align*}
    where $\alpha\coloneq|\beta|(aK-6C_0\norm V_{\exp(-c\cdot)})/2>0$ and
    \begin{align*}
        C_2\coloneq1+\frac{h^\prime(|\beta|)C_0\norm V_{\exp(-c\cdot)}}{\alpha}.
    \end{align*}
    These constants depend only on the lattice, $a,b,c$, $\beta$, $K$ and $\norm V_{\exp(-c\cdot)}$, and are independent of both supports.

\subsubsection*{Normalizing the local approximation}

    The exact QBP expression is normalized, whereas replacing the QBP operators by local approximations need not preserve normalization. The same approximation bound controls this error: taking $D=1$ in Equation \eqref{eq: proof generalized doc functional approximation} and using Equation \eqref{eq: proof generalized doc sequential qbp} yields
    \begin{align*}
        |\omega\bigl(
        \eta_{X_R}^L(\eta_{Y_R}^{X_R})^L
        ((\eta_{Y_R}^{X_R})^L)^*(\eta_{X_R}^L)^*
        \bigr)-1|
        \leq
        C_2\e^{-\alpha(1+d(X, Y))^\vartheta}. \numberthis \label{eq: proof doc from doc+li normalization}
    \end{align*}
    Thus, for sufficiently large $d(X, Y)$,
    \begin{align}
        \frac12\leq \omega\bigl(
        \eta_{X_R}^L(\eta_{Y_R}^{X_R})^L
        ((\eta_{Y_R}^{X_R})^L)^*(\eta_{X_R}^L)^*
        \bigr)\leq\frac32,
        \label{eq: proof generalized doc normalization}
    \end{align}
    We restrict to these distances and define the state $\phi_L$ by
    \begin{align*}
        \phi_L(D)
        \coloneq
        \frac{
        \omega\bigl(
        \eta_{X_R}^L(\eta_{Y_R}^{X_R})^L \, D \,
        ((\eta_{Y_R}^{X_R})^L)^*(\eta_{X_R}^L)^*
        \bigr)}
        {\omega\bigl(
        \eta_{X_R}^L(\eta_{Y_R}^{X_R})^L
        ((\eta_{Y_R}^{X_R})^L)^*(\eta_{X_R}^L)^*
        \bigr)}.
    \end{align*}
    Positivity and normalization give $\norm{\phi_L}=1$. Thus, normalizing the local expression changes its value at $D$ by at most $\norm D$ times its normalization error. Together with Equations \eqref{eq: proof generalized doc functional approximation}--\eqref{eq: proof doc from doc+li normalization}, this gives
    \begin{align*}
        |\omega_{V_{X_R} + V_{Y_R}}(D) - \phi_L(D)| \leq 2 C_2 \exp (-\alpha (1 + d(X,  Y))^\vartheta) \lVert D \rVert .
    \end{align*}
    Applying this estimate to $A B$, $A$ and $B$ yields
    \begin{gather}
        \big|
        \omega_{V_{X_R}+V_{Y_R}}(AB)
        -\omega_{V_{X_R}+V_{Y_R}}(A)\omega_{V_{X_R}+V_{Y_R}}(B)
        -\phi_L(AB)+\phi_L(A)\phi_L(B)
        \big|\nonumber\\
        \leq
        6 C_2 \exp \bigl( -\alpha (1 + d(X, Y))^\vartheta \bigr)
        \lVert A\rVert\lVert B\rVert.
        \label{eq: proof generalized doc covariance approximation}
    \end{gather}

\subsubsection*{Using clustering of the initial state}

    It remains to bound the covariance in $\phi_L$. The two local QBP operators have disjoint supports for large separation. Their evenness then allows us to move each operator past observables supported in the other region.
    By Lemma \ref{lem: dynamics is even}, the QBP generators are even. Proposition \ref{Ex+UniqueExpectation} shows that the conditional expectations preserve evenness. Therefore, $\eta_{X_R}^L \in \mA_{X_{R+ L}}^+$ and $(\eta_{Y_R}^{X_R})^L \in \mA_{Y_{R + L}}^+$. Since $R$ and $L$ grow sublinearly in $d(X,Y)$, we have
    \begin{align}
        d(X_{R + L}, Y_{R + L})
        \geq
        d(X, Y)-2R-2L
        \geq\frac{d(X, Y)}{2}
        \label{eq: proof generalized doc distance localized}
    \end{align}
    for all sufficiently large distances.

    To express the covariance of $\phi_L$ in terms of the initial state, we use the operators
    \begin{align*}
        A_L&\coloneq\eta_{X_R}^L A(\eta_{X_R}^L)^*\in\mA_{X_{R+L}},&
        P_L&\coloneq\eta_{X_R}^L(\eta_{X_R}^L)^*\in\mA_{X_{R+L}},\\
        B_L&\coloneq(\eta_{Y_R}^{X_R})^L B((\eta_{Y_R}^{X_R})^L)^*\in\mA_{Y_{R+L}},&
        Q_L&\coloneq(\eta_{Y_R}^{X_R})^L((\eta_{Y_R}^{X_R})^L)^*\in\mA_{Y_{R+L}}.
    \end{align*}
    Here $P_L$ and $Q_L$ account for the normalization and are even. Moreover, $A_L$ is even whenever $A$ is even, and likewise for $B_L$. Commuting the local QBP operators past the observables in the other region gives
    \begin{align*}
        \phi_L(AB)-\phi_L(A)\phi_L(B)
        =
        \frac{
        \omega(A_LB_L)\omega(P_LQ_L)
        -
        \omega(A_LQ_L)\omega(P_LB_L)}
        {\omega(P_LQ_L)^2}.
    \end{align*}
    The numerator would vanish if expectations of products between the two regions factorized. To estimate the error, we write it as four covariances of $\omega$:
    \begin{align*}
        &\omega(A_LB_L)\omega(P_LQ_L)-\omega(A_LQ_L)\omega(P_LB_L)\\
        &\quad=
        \bigl(\omega(A_LB_L)-\omega(A_L)\omega(B_L)\bigr)\omega(P_LQ_L)\\
        &\qquad+\omega(A_L)\omega(B_L)
                 \bigl(\omega(P_LQ_L)-\omega(P_L)\omega(Q_L)\bigr)\\
        &\qquad-\bigl(\omega(A_LQ_L)-\omega(A_L)\omega(Q_L)\bigr)\omega(P_LB_L)\\
        &\qquad-\omega(A_L)\omega(Q_L)
                 \bigl(\omega(P_LB_L)-\omega(P_L)\omega(B_L)\bigr).
    \end{align*}
    Each covariance is between the two enlarged supports, and one of its operators is even. Equation \eqref{eq: proof generalized doc qbp norms} gives
    \begin{align*}
        \norm{A_L}\norm{P_L}
        &\leq\exp(8|\beta|\norm V_{\exp(-c\cdot)}|X_R|)\norm A,\\
        \norm{B_L}\norm{Q_L}
        &\leq\exp(8|\beta|\norm V_{\exp(-c\cdot)}|Y_R|)\norm B.
    \end{align*}
    Taking absolute values in the four-term expansion therefore gives
    \begin{align*}
        &|\omega(A_LB_L)\omega(P_LQ_L)-\omega(A_LQ_L)\omega(P_LB_L)|\\
        &\quad\leq
        4|X_{R+L}|^{n_\cor}f_\cor(|Y_{R+L}|)
           \zeta_\cor(d(X_{R+L},Y_{R+L}))\\
        &\qquad\times
        \exp(8|\beta|\norm V_{\exp(-c\cdot)}(|X_R|+|Y_R|))\norm A\norm B.
    \end{align*}
    By Equation \eqref{eq: proof generalized doc normalization}, the denominator is at least $1/4$. Hence
    \begin{gather*}
        |\phi_L(AB)-\phi_L(A)\phi_L(B)| \numberthis\label{eq: proof generalized doc localized covariance}
        \\
        \leq 16|X_{R+L}|^{n_\cor}f_\cor(|Y_{R+L}|)
           \zeta_\cor(d(X_{R+L},Y_{R+L})) \exp(8|\beta|\norm V_{\exp(-c\cdot)}(|X_R|+|Y_R|))\norm A\norm B.
    \end{gather*}
    We must still check that the support factors and QBP norms in this bound do not grow faster than the decay. This is where the choice of $\vartheta$ is used. By $D$-regularity, Equation \eqref{eq: proof generalized doc size U W} and the definition of $L$,
    \begin{align*}
        |X_{R + L}|+|Y_{R + L}| \leq C_\vol \, C_0 \, (2+|\beta|K)^D \, (1+d (X, Y))^{\vartheta(D+1)}.
    \end{align*}
    This estimate, the decay-of-correlations assumptions and Equation \eqref{eq: proof generalized doc distance localized} give
    \begin{align*}
        &|X_{R + L}|^{n_\cor} f_\cor (|Y_{R + L}|) \zeta_\cor( d(X_{R + L }, Y_{R + L}) )
        \\
        &\quad \leq C_3 \, (1 + d (X, Y))^{n_\cor(D+1)/(D+2)} \, \exp \bigl(C_3 (1 + d (X, Y))^{(D+1)/(D+2)} -\mu_\cor \, d (X, Y)/2 \bigr),
    \end{align*}
    where
    \begin{align*}
        C_3\coloneq C_\cor^2\e^{C_\cor}
        \bigl(C_\vol C_0(2+|\beta|K)^D\bigr)^{n_\cor+1}
    \end{align*}
    depends only on the lattice, $|\beta|K$, $C_\cor$ and $n_\cor$, not on $X$, $Y$, $R$ or $L$. Since $\vartheta(D+1)=(D+1)/(D+2)<1$, the logarithms of the support and QBP norm factors grow at most as $(1+d(X,Y))^{(D+1)/(D+2)}$ and $(1+d(X,Y))^{1/(D+2)}$, respectively. Both are sublinear in the distance of $X$ and $Y$ and can therefore be absorbed into half of the exponential clustering decay, with a threshold independent of the supports. Equation \eqref{eq: proof generalized doc localized covariance} gives
    \begin{align*}
        |\phi_L(AB)-\phi_L(A)\phi_L(B)|
        &\leq16C_3\e^{-\mu_\cor d(X,Y)/4}\norm A\norm B.
    \end{align*}

\subsubsection*{Combining the errors}

    We now return to the covariance in $\omega_V$. There are three contributions: the LI error, the error from localizing the QBP operators, and the exponentially decaying terms from removing the connecting interactions and applying clustering. The factor $(1+d(X,Y))^\vartheta$ in Equation \eqref{eq: proof generalized doc remove crossing} can be absorbed into half of its exponential decay. We may therefore take $\gamma=\min(\mu_\cor,c)/4>0$, which depends only on the two decay rates. Combining Equations \eqref{eq: proof generalized doc LI}, \eqref{eq: proof generalized doc remove crossing}, \eqref{eq: proof generalized doc covariance approximation} and the preceding covariance bound gives
    \begin{align*}
        &|\omega_V(A B) - \omega_V(A) \, \omega_V(B)|
        \\
        & \quad \leq C_4 \, \lVert A\rVert \, \lVert B \rVert \, \bigl( \exp\bigl(C_\li(1+|X|+|Y|)\bigr) \, \e^{- \mu_\li R} + \e^{- \alpha (1 + d(X,  Y))^\vartheta} + \e^{- \gamma d (X, Y)} \bigr) .
    \end{align*}
    Here $C_4>0$ depends only on the lattice, $c$, $\beta$, the interaction norms and the stated clustering and LI data, and is independent of the supports. Since $R\geq1$,
    \begin{align*}
        R&\geq\frac12
          \left(\frac{(1+d(X,Y))^\vartheta}{1+|X|+|Y|}\right)^{1/D}.
    \end{align*}
    Set $\kappa\coloneq\min (\mu_\li/2,\alpha,\gamma)>0$. In this case,
    \begin{align*}
        \left(\frac{(1+d(X,Y))^\vartheta}{1+|X|+|Y|}\right)^{1/D}
        \leq (1+d(X,Y))^{\vartheta/D}
        \leq \min ( (1+d(X,Y))^\vartheta,d(X,Y) )
    \end{align*}
    for all sufficiently large distances. Consequently,
    \begin{align*}
        \e^{-\mu_\li R}
        &\leq \exp\left[-\kappa
          \left(\frac{(1+d(X,Y))^\vartheta}{1+|X|+|Y|}\right)^{1/D}\right],\\
        \e^{-\alpha(1+d(X,Y))^\vartheta}+\e^{-\gamma d(X,Y)}
        &\leq 2\exp\left[-\kappa
          \left(\frac{(1+d(X,Y))^\vartheta}{1+|X|+|Y|}\right)^{1/D}\right]
    \end{align*}
    Using that $\vartheta = 1 / (D + 2)$, the combined estimate gives
    \begin{align*}
        |\omega_V(A B) - \omega_V(A) \, \omega_V(B)|
        &\leq 3C_4\exp\bigl(C_\li(1+|X|+|Y|)\bigr)\lVert A \rVert \, \lVert B \rVert\\
        &\quad \times\exp\left( -\kappa
        \left(\frac{(1+d(X,Y))^{1/(D+2)}}{1+|X|+|Y|}\right)^{1/D}\right)
    \end{align*}
    whenever $1+|X|+|Y|\leq(1+d(X,Y))^\vartheta$. If instead $(1+|X|+|Y|)>(1+d(X,Y))^\vartheta$, then
    \begin{align*}
        \left(\frac{(1+d(X,Y))^{1/(D+2)}}{1+|X|+|Y|}\right)^{1/D}<1,
    \end{align*}
    and the trivial covariance bound gives
    \begin{align*}
        |\omega_V(A B) - \omega_V(A) \, \omega_V(B)| &\leq 2\e^\kappa\exp\bigl(C_\li(1+|X|+|Y|)\bigr)\lVert A \rVert \, \lVert B \rVert\\
        &\quad \times\exp\left(-\kappa
        \left(\frac{(1+d(X,Y))^{1/(D+2)}}{1+|X|+|Y|}\right)^{1/D}\right).
    \end{align*}
    Taking $C\geq\max (3C_4,2\e^\kappa)$ and increasing it to include the bounded distances excluded above proves the support-dependent estimate.

    The more transparent estimate follows from the weighted arithmetic--geometric mean inequality, which gives
    \begin{align*}
        (1+d(X,Y))^{1/((D+1)(D+2))}
        &\leq
        \left(\frac{(1+d(X,Y))^{1/(D+2)}}
        {1+|X|+|Y|}\right)^{1/D}
        +\frac{1+|X|+|Y|}{D+1}.
    \end{align*}
    Inserting this into the support-dependent estimate and increasing $C$ proves the bound in the theorem.
\end{proof}

\medskip
\begin{proof}[Proof of Theorem \ref{thm: QBP SLT}]
Let $A\in\mA_Y$. For $\beta=0$, all KMS states are the tracial state and the claim becomes trivial. We henceforth assume $\beta\neq0$.

Increasing $C_\cor$ if necessary, we assume that $C_\cor\geq1$. By assumption, $m_\cor\geq\nu+Dn_\cor$. We first weaken the assumed correlation estimate, uniformly in $s$ and $M$, by calculating
\[
    C_\cor(1+r)^{-m_\cor}
    \leq C_\cor(1+r)^{-\nu-Dn_\cor}.
\]
Theorem \ref{thm: LPPL without doc} may therefore be applied with this weaker decay function. Its spatial factor satisfies
\[
 (1+r)^{Dn_\cor}C_\cor(1+r/2)^{-\nu-Dn_\cor} \leq
 2^{\nu+Dn_\cor}C_\cor(1+r)^{-\nu}.
\]
Thus $\omega_{sV_M}$ satisfies LPPL with decay $(1+r)^{-\nu}$ uniformly in $s$ and $M$. Since $\nu>D$, the uniform assertion of Theorem \ref{thm: li from lppl} gives LI with decay bounded by a constant times
\[
\sum_{k=\lceil r\rceil}^{\infty}(1+k)^{D-1-\nu}
\leq C(1+r)^{D-\nu}.
\]
Hence, Proposition \ref{prop: convergence from li} shows that, for any IAS $(\Lambda_n)_{n \in \N}$ and $V_n=V_{\Lambda_n}$,
\begin{align*}
    \sup_{s\in[0,1]}|\omega_{sV_n}(B)-\omega_{sV}(B)|\longrightarrow0,
    \qquad B\in\mA_\Gamma.
\end{align*}
Taking this limit in the covariance inequality for two local observables shows that $\omega_{sV}$ satisfies decay of correlations with the same constants as $\omega_{s V_n}$.

We start by showing the continuity claim. To this end, we first observe the following elementary bound. For each $Z\in P_0(\Gamma)$, we choose $z=\mathrm C(Z)$ the center of $Z$. By Lemma \ref{lem: decay of correlations for A_infty} and Proposition \ref{prop: loc of phi}, with the clustering bound just obtained, we have that
\begin{align*}
    &\bigl|\omega_{sV_n}\bigl(\{
       \Phi_\beta^{\tau^{sV_n}}(V(Z)-\omega_{sV_n}(V(Z))),A\}\bigr)\bigr|\\
    &\quad\leq C\norm A f_\cor(|Y|)
        (1+d(z,Y))^{-\nu}
        \norm{\Phi_\beta^{\tau^{sV_n}}(V(Z))}_{\nu, z}\\
    &\quad\leq
        Ch(|\beta|)\norm A f_\cor(|Y|)
        \frac{(1+\diam Z)^\nu\norm{V(Z)}}{(1+d(z ,Y))^\nu}.
        \numberthis\label{eq: extensive common majorant}
\end{align*}
Here $h$ is the polynomial from Proposition \ref{prop: loc of phi} and $C > 0$ is a constant independent of $A$, $V$, $Y$, $z$ and $\beta$. The same bound holds for $\omega_{sV}$ and $\tau^{sV}$. The right-hand side is summable in $Z$, since
\begin{align*}
    \sum_{Z\in P_0(\Gamma)}
       \frac{(1+\diam Z)^\nu\norm{V(Z)}}{(1+d(\mathrm C(Z),Y))^\nu}
    &\leq\norm V_\nu\sum_{x\in\Gamma}(1+d(x,Y))^{-\nu}\\
    &\leq C_\sur|Y|\norm V_\nu
       \sum_{r=0}^\infty(1+r)^{D-1-\nu}<\infty.
\end{align*}

To show continuity, let $s,t\in[0,1]$. Proposition \ref{prop: QBP differential equation} gives
\begin{align*}
    |\omega_{sV}(A)-\omega_{tV}(A)|
    &\leq |s-t|\limsup_{n\to\infty}
        \sup_{u\in[0,1]}\left|\frac{\d}{\d u}\omega_{uV_n}(A)\right|\\
    &\leq\frac{|\beta|}{2}|s-t|\limsup_{n\to\infty}
        \sup_{u\in[0,1]}\sum_{Z\subseteq\Lambda_n}
        \bigl|\omega_{uV_n}\bigl(\{
          \Phi_\beta^{\tau^{uV_n}}(V(Z)-\omega_{uV_n}(V(Z))),A\}\bigr)\bigr|\\
    &\leq |s-t|\,g(|\beta|)\norm A\,|Y|f_\cor(|Y|)\norm V_\nu,
\end{align*}
where we may take the polynomial
\begin{align*}
    g(|\beta|)=\frac{|\beta|}{2}CC_\sur h(|\beta|)
                 \sum_{r=0}^\infty(1+r)^{D-1-\nu}.
\end{align*}
Here we used Equation \eqref{eq: extensive common majorant} and the preceding sum. This proves the continuity claim.

It remains to identify the derivative of the limiting state. For every finite patch, Proposition \ref{prop: QBP differential equation} gives
\begin{align*}
    \frac{\d}{\d s}\omega_{sV_n}(A)
    &=-\frac{\beta}{2}\sum_{Z\subseteq\Lambda_n}
       \omega_{sV_n}\bigl(\{
          \Phi_\beta^{\tau^{sV_n}}(V(Z)-\omega_{sV_n}(V(Z))),A\}\bigr).
\end{align*}
We show that the right-hand side converges uniformly in $s\in[0,1]$ to the corresponding infinite-volume sum. We first fix $Z\in P_0(\Gamma)$ and prove uniform convergence of its summand. Lemma \ref{lem: str conv time evol} gives convergence of the dynamics uniformly in $s$ on compact time intervals. Splitting the integral defining $\Phi_\beta$ at $T>0$, we find
\begin{align*}
    &\sup_{s\in[0,1]}
      \norm{\Phi_\beta^{\tau^{sV_n}}(V(Z))
                 -\Phi_\beta^{\tau^{sV}}(V(Z))}\\
    &\quad\leq
      \sup_{\substack{s\in[0,1]\\|t|\leq T}}
      \norm{\tau_{-t}^{sV_n}(V(Z))-\tau_{-t}^{sV}(V(Z))}
      +2\norm{V(Z)}\int_{|t|>T}f_\beta(t)\,\d t.
\end{align*}
It follows that
\begin{align*}
    \sup_{s\in[0,1]}
      \norm{\Phi_\beta^{\tau^{sV_n}}(V(Z))
                 -\Phi_\beta^{\tau^{sV}}(V(Z))}
    \longrightarrow0.
\end{align*}
Together with the uniform convergence of the states, this gives
\begin{align*}
    &\sup_{s\in[0,1]}
       \norm{\{\Phi_\beta^{\tau^{sV_n}}(V(Z)-\omega_{sV_n}(V(Z))),A\}
              -\{\Phi_\beta^{\tau^{sV}}(V(Z)-\omega_{sV}(V(Z))),A\}}\\
    &\quad\leq2\norm A\sup_{s\in[0,1]}
       \norm{\Phi_\beta^{\tau^{sV_n}}(V(Z))-\Phi_\beta^{\tau^{sV}}(V(Z))}\\
    &\qquad+2\norm A\sup_{s\in[0,1]}
       |\omega_{sV_n}(V(Z))-\omega_{sV}(V(Z))|
       \longrightarrow0.
\end{align*}

We still have to replace the state acting on this observable. The continuity assertion of Lemma \ref{lem: str conv time evol}, together with the same integral estimate, shows that $s\mapsto\Phi_\beta^{\tau^{sV}}(V(Z))$ is norm-continuous. Since $s\mapsto\omega_{sV}(V(Z))$ is continuous, the set
\begin{align*}
    \bigl\{\{\Phi_\beta^{\tau^{sV}}(V(Z)-\omega_{sV}(V(Z))),A\}:s\in[0,1]\bigr\}
\end{align*}
is compact. For $\epsilon>0$, choose an $\epsilon$-net $D_1,\ldots,D_N$ for this set. Since both states have norm one,
\begin{align*}
    &\sup_{s\in[0,1]}\bigl|(\omega_{sV_n}-\omega_{sV})
          \bigl(\{\Phi_\beta^{\tau^{sV}}(V(Z)-\omega_{sV}(V(Z))),A\}\bigr)\bigr|\\
    &\quad\leq2\epsilon+
        \max_{1\leq j\leq N}\sup_{s\in[0,1]}
            |(\omega_{sV_n}-\omega_{sV})(D_j)|.
\end{align*}
The last term converges to zero by the uniform convergence of the states. First letting $n\to\infty$ and then $\epsilon\to0$, we conclude that, for every fixed $Z$,
\begin{align*}
    &\sup_{s\in[0,1]}\Bigl|
       \omega_{sV_n}\bigl(\{
          \Phi_\beta^{\tau^{sV_n}}(V(Z)-\omega_{sV_n}(V(Z))),A\}\bigr)-\omega_{sV}\bigl(\{
          \Phi_\beta^{\tau^{sV}}(V(Z)-\omega_{sV}(V(Z))),A\}\bigr)
       \Bigr|\longrightarrow0.
\end{align*}

We regard a summand in the finite-patch derivative as zero when $Z\nsubseteq\Lambda_n$. For each fixed $Z$, this does not affect the preceding limit, since $Z\subseteq\Lambda_n$ for all sufficiently large $n$. Equation \eqref{eq: extensive common majorant} bounds all summands by the same summable expression, uniformly in $s$ and $n$. Dominated convergence therefore gives uniform convergence of the finite-patch derivatives to
\begin{align*}
    -\frac{\beta}{2}\sum_{Z\in P_0(\Gamma)}
       \omega_{sV}\bigl(\{
          \Phi_\beta^{\tau^{sV}}(V(Z)-\omega_{sV}(V(Z))),A\}\bigr).
\end{align*}
The derivatives for finite patches are continuous by Proposition \ref{prop: QBP differential equation}, Lemma \ref{lem: continuity of phi} and norm continuity of bounded perturbations. Their uniform limit is therefore continuous. For any $s_0 \in [0,1]$, the finite-patch identity
\begin{align*}
    \omega_{sV_n}(A)-\omega_{s_0 V_n}(A)
    =\int_{s_0}^s\frac{\d}{\d u}\omega_{uV_n}(A)\,\d u
\end{align*}
converges uniformly to
\begin{align*}
    \omega_{sV}(A)-\omega_{s_0 V}(A)
    &=-\frac{\beta}{2}\int_{s_0}^s\sum_{Z\in P_0(\Gamma)}
       \omega_{uV}\bigl(\{
          \Phi_\beta^{\tau^{uV}}(V(Z)-\omega_{uV}(V(Z))),A\}\bigr)\,\d u.
\end{align*}
The integrand is continuous. Hence, the fundamental theorem of calculus shows that $s\mapsto\omega_{sV}(A)$ is continuously differentiable and
\begin{align*}
    \frac{\d}{\d s}\omega_{sV}(A)
    &=-\frac{\beta}{2}\sum_{Z\in P_0(\Gamma)}
       \omega_{sV}\bigl(\{
          \Phi_\beta^{\tau^{sV}}(V(Z)-\omega_{sV}(V(Z))),A\}\bigr).
\end{align*}
Finally, absolute convergence and norm convergence of $V_x=\sum_{Z\in R_x}V(Z)$ give
\begin{align*}
    \sum_{Z\in P_0(\Gamma)}
       \omega_{sV}\bigl(\{
          \Phi_\beta^{\tau^{sV}}(V(Z)-\omega_{sV}(V(Z))),A\}\bigr) &=
       \sum_{x\in\Gamma}\sum_{Z\in R_x}
       \omega_{sV}\bigl(\{
          \Phi_\beta^{\tau^{sV}}(V(Z)-\omega_{sV}(V(Z))),A\}\bigr)\\
    &=\sum_{x\in\Gamma}
       \omega_{sV}\bigl(\{
          \Phi_\beta^{\tau^{sV}}(V_x-\omega_{sV}(V_x)),A\}\bigr).
\end{align*}
This proves the stated formula, including absolute and uniform convergence in $s$.
\end{proof}

\appendix

\section{Technical lemmas} \label{sec: technical proofs}

\subsection{Quantum Belief Propagation} \label{app: perturbed KMS}

\medskip
\begin{lemma} \label{lem: impl is entire}
    Let $(\mA, \tau)$ be a $C^*$-dynamical system, $A \in \mA_\ent^\tau$ and $\omega$ a $(\tau, \beta)$-KMS state. Let $(\mH, \pi, \Omega)$ be the associated GNS triple and $H$ the Hamiltonian that implements the dynamics in the GNS Hilbert space. It holds that the map
    \begin{align*}
        z \mapsto \e^{\i z H} \, \pi(A) \, \Omega
    \end{align*}
    is entire with respect to the norm of $\mH$. Moreover, it holds that
    \begin{align*}
        \pi(\tau_z(A)) \, \Omega = \e^{\i z H} \, \pi(A) \, \Omega \, ,
    \end{align*}
    for all $z \in \C$.
\end{lemma}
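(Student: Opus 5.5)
The plan is to use the standard characterization of $\tau$-entire elements together with the spectral theory of the self-adjoint generator $H$ in the GNS space. First, $\Omega$ is invariant: since $\omega$ is $\tau$-invariant (automatically for $\beta\neq0$, by definition for $\beta=0$), the unitary group $U_t=\e^{\i tH}$ implementing $\tau$ can be chosen with $U_t\Omega=\Omega$. Hence
\begin{align*}
    \e^{\i tH}\pi(A)\Omega=\pi(\tau_t(A))\Omega,\qquad t\in\R .
\end{align*}
The function $F(z)\coloneq\pi(\tau_z(A))\Omega$ is entire with values in $\mH$, because $z\mapsto\tau_z(A)$ is norm-entire in $\mA$ and $\pi$ is bounded and linear. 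It agrees with $t\mapsto\e^{\i tH}\pi(A)\Omega$ on the real axis.

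The main step is to show that $\xi\coloneq\pi(A)\Omega$ lies in the domain of $\e^{-yH}$ for every $y\in\R$, and that $z\mapsto\e^{\i zH}\xi$ is entire and equal to $F$. I would argue via the spectral measure $\mu_\xi$ of $H$. Let $h$ be a vector whose spectral measure is supported in a compact set $[-K,K]$; such vectors form a dense subspace. Then $g(z)\coloneq(\e^{-\i\bar zH}h,F(z))$ is entire, since $\e^{-\i\bar zH}h$ is anti-entire in $z$ and the inner product is antilinear in the first slot. On $\R$ we have $g(t)=(h,\e^{-\i tH}F(t))=(h,\xi)$, so $g$ is constant. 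This gives
\begin{align*}
    (\e^{-\i\bar zH}h,F(z))=(h,\xi)\quad\text{for all }z\in\C .
\end{align*}
Since $\e^{-\i\bar zH}$ with $z=\i y$ is the self-adjoint operator $\e^{yH}$ restricted to such $h$, this says $(\e^{yH}h,F(\i y))=(h,\xi)$.

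To conclude, apply this to $h=E([-K,K])\eta$ with $\eta\in\mH$ arbitrary, so that $E([-K,K])\e^{-yH}\xi$ is well defined. This gives $E([-K,K])\e^{-yH}\xi=E([-K,K])F(\i y)$. It follows that
\begin{align*}
    \int_{[-K,K]}\e^{-2y\lambda}\,\d\mu_\xi(\lambda)\leq\norm{F(\i y)}^2 .
\end{align*}
Letting $K\to\infty$ shows $\xi\in D(\e^{-yH})$ with $\e^{-yH}\xi=F(\i y)$. More generally, $\e^{\i zH}\xi=\e^{\i xH}\e^{-yH}\xi$ for $z=x+\i y$ is then defined. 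Repeating the argument with $\e^{\i xH}$ inserted, or using $F(x+\i y)=\e^{\i xH}F(\i y)$ from the group property of $\tau_z$ and the real-time identity, gives $\e^{\i zH}\xi=F(z)$ for all $z$.

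Entirety of $z\mapsto\e^{\i zH}\xi$ then follows because it coincides with $F$. The subtle point I expect is justifying that the scalar function $g$ is entire: the vectors $\e^{-\i\bar zH}h$ depend analytically on $z$ precisely because $h$ has compactly supported spectral measure. The rest is routine spectral calculus. Note that the KMS property is not needed beyond invariance of $\omega$.
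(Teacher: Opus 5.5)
Your argument is correct once a sign slip is fixed. As written, $g(z)=(\e^{-\i\bar zH}h,F(z))$ gives $g(t)=(\e^{-\i tH}h,\e^{\i tH}\xi)=(h,\e^{2\i tH}\xi)$, which is not constant. The right choice is $g(z)\coloneq(\e^{\i\bar zH}h,F(z))$. Then $g(t)=(\e^{\i tH}h,\e^{\i tH}\xi)=(h,\xi)$, and at $z=\i y$ the first factor really is $\e^{yH}h$, as you use afterwards; with your sign it would be $\e^{-yH}h$.

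With the correction, taking $h=E([-K,K])\eta$ and moving the bounded operator $\e^{\i\bar zH}E([-K,K])$ to the other slot gives $E([-K,K])F(z)=\e^{\i zH}E([-K,K])\xi$ for every $z\in\C$. Monotone convergence in $K$ then gives $\xi\in D(\e^{\i zH})$ and $\e^{\i zH}\xi=F(z)$ directly. So the detour through $z=\i y$ and the group property $\tau_{x+\i y}=\tau_x\circ\tau_{\i y}$ is not needed, although it is also valid.

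Your route differs from the paper's. The paper shows that $\xi=\pi(A)\Omega$ is an entire analytic vector for $H$:
\begin{itemize}
\item repeated differentiation of $\pi(\tau_t(A))\Omega=\e^{\i tH}\xi$ gives $\xi\in D(H^n)$ with $H^n\xi=\pi(\mL^n(A))\Omega$;
\item Cauchy estimates for $z\mapsto\tau_z(A)$ bound $\lVert H^k\xi\rVert/k!$ by $\sup_{|z|=2R}\lVert\tau_z(A)\rVert(2R)^{-k}$, so $\sum_k(\i z)^kH^k\xi/k!$ converges locally uniformly;
\item spectral cut-offs with monotone and dominated convergence identify this series with both $\e^{\i zH}\xi$ and $\pi(\tau_z(A))\Omega$.
\end{itemize}
You never touch the generator. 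The identity theorem for scalar entire functions moves the real-time identity to complex $z$ weakly, on spectrally bounded test vectors. The cut-off and monotone-convergence step, which both proofs share, then turns this into the domain statement. The a priori bound $\int_{[-K,K]}\e^{2\,\mathrm{Im}(z)\lambda}\,\d\mu_\xi(\lambda)\leq\lVert\tau_z(A)\rVert^2$ comes for free from the truncated identity, so you need neither the identification $H^n\xi=\pi(\mL^n(A))\Omega$ nor the Cauchy-estimate bookkeeping.

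The paper's version gives that identification and explicit moment bounds on $\xi$ as by-products; yours is shorter and uses only the spectral theorem. As you observe, both arguments use only $\tau$-invariance of $\omega$, not the KMS condition.
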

\begin{proof}
    We first notice that since $\omega$ is $\tau$ invariant, it holds that for $t \in \R$
    \begin{align*}
        \pi( \tau_t(A) ) \, \Omega = \e^{\i t H} \pi(A) \, \Omega \, .
    \end{align*}
    Since $A \in \mA_\ent^\tau$ we find that the left-hand side is differentiable. Hence, by the definition of the generator of unitary groups, it holds that
    \begin{align*}
        \pi \bigl( \mL (\tau_t (A)) \bigr) \, \Omega = H \, \e^{\i t H} \, \pi(A) \, \Omega \, .
    \end{align*}
    Again, the left-hand side is differentiable, and we can iterate this process to find
    \begin{align*}
        \pi(\mL^n (A)) \, \Omega = H^n \, \pi(A) \, \Omega \, ,
    \end{align*}
    for every $n\in\N$. In particular, $\pi(A)\Omega\in D(H^n)$. Fix $R>0$. Cauchy's estimate for the entire map $z\mapsto\tau_z(A)$ gives
    \begin{align*}
        \frac{\norm{H^k\pi(A)\Omega}_{\mH}}{k!}
        &\leq\frac{\norm{\mL^k(A)}}{k!}
        \leq\frac{\sup_{|z|=2R}\norm{\tau_z(A)}}{(2R)^k}.
    \end{align*}
    Now let $N \geq M$ for $N$, $M \in \N$. Using this bound we find that
    \begin{align*}   
        \sup_{|z| \leq R} \bigl \lVert \sum_{k=0}^{N}\frac{(\i z)^k}{k!}H^k\pi(A)\Omega - \sum_{k=0}^{M}\frac{(\i z)^k}{k!}H^k\pi(A)\Omega \bigr \rVert
        &\leq 
        \sup_{|z|\leq R}\bigl\|
             \sum_{k=M+1}^{N}\frac{(\i z)^k}{k!}H^k\pi(A)\Omega
             \bigr\|_{\mH}
             \\
        &\leq\sup_{|z|=2R}\norm{\tau_z(A)}
                   \sum_{k=M+1}^\infty2^{-k}.
    \end{align*}
    Hence the series converges locally uniformly to an entire vector-valued function. To show that it equals the exponential function defined via the spectral theorem let $\mathbf1_{[-L,L]}(H)$ be the spectral projection of $H$ onto $[-L,L]$. Then
    \begin{align*}
        \norm{\e^{R|H|}\mathbf1_{[-L,L]}(H)\pi(A)\Omega}_{\mH}
        &\leq\sum_{k=0}^\infty\frac{R^k}{k!}
                        \norm{H^k\pi(A)\Omega}_{\mH}\\
        &\leq2\sup_{|z|=2R}\norm{\tau_z(A)}.
    \end{align*}
    The spectral theorem and monotone convergence as $L \to \infty$ show that $\pi(A)\Omega\in D(\e^{R|H|})$. The spectral theorem and dominated convergence therefore give, for $|z|\leq R$,
    \begin{align*}
        \e^{\i zH}\pi(A)\Omega
        &=\sum_{k=0}^\infty\frac{(\i z)^k}{k!}H^k\pi(A)\Omega\\
        &=\pi\left(\sum_{k=0}^\infty\frac{(\i z)^k}{k!}\mL^k(A)\right)\Omega
        =\pi(\tau_z(A))\Omega.
    \end{align*}
    Since $R$ was arbitrary, both assertions follow.
\end{proof}

\medskip
\begin{lemma} \label{lem: continuity of phi}
    Let $(\mA, \tau)$ be a $C^*$-dynamical system, $P = P^* \in \mA$ and $s \in [0, 1]$. Denote by $\tau^{s P}$ the perturbed dynamics. Then the map 
    \begin{align*}
        s \mapsto \Phi_\beta^{\tau^{s P}} (A) \, ,   
    \end{align*}
    where $\Phi_\beta$ is defined in Equation \eqref{def: Phi}, is continuous (in norm) for any $A \in \mA$.
\end{lemma}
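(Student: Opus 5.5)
The plan is a dominated convergence argument applied to the Bochner integral defining $\Phi_\beta$. For $\beta=0$ the map is constant and there is nothing to prove, so assume $\beta\neq0$. Fix $A\in\mA$ and $s,s'\in[0,1]$. By definition,
\begin{align*}
    \Phi_\beta^{\tau^{sP}}(A)-\Phi_\beta^{\tau^{s'P}}(A)
    =\int_\R \d t\, f_\beta(t)\,\bigl(\tau_{-t}^{sP}(A)-\tau_{-t}^{s'P}(A)\bigr).
\end{align*}
The norm of the integrand is at most $2\norm A f_\beta(t)$, which is integrable because $f_\beta$ is $L^1$-normalized. It therefore suffices to show that $\tau_{-t}^{sP}(A)\to\tau_{-t}^{s'P}(A)$ in norm as $s\to s'$, for each fixed $t$; dominated convergence then gives the claim.

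For this pointwise convergence I would use the composition rule for Araki perturbations at the level of dynamics: $\tau^{sP}$ is the perturbation of $\tau^{s'P}$ by the bounded self-adjoint element $(s-s')P$. Indeed, its generator is $\mL+\mL_{s'P}+\mL_{(s-s')P}$, and Lemma \ref{lem: perturbed dynamics} asserts uniqueness of the group generated by such a derivation. Applying the Duhamel bound of Lemma \ref{lem: perturbed dynamics} with $\tau^{s'P}$ as reference dynamics yields
\begin{align*}
    \norm{\tau_{-t}^{sP}(A)-\tau_{-t}^{s'P}(A)}\leq 2|t|\,|s-s'|\,\norm P\,\norm A .
\end{align*}
This is the same estimate the paper already uses in the proof of Proposition \ref{prop: QBP differential equation}.

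The pointwise bound in fact gives a quantitative statement directly:
\begin{align*}
    \norm{\Phi_\beta^{\tau^{sP}}(A)-\Phi_\beta^{\tau^{s'P}}(A)}\leq 2|s-s'|\,\norm P\,\norm A\int_\R|t|f_\beta(t)\,\d t .
\end{align*}
The integral is finite by the exponential decay of $f_\beta$, property (iii). This shows Lipschitz continuity, which is stronger than the continuity claimed.

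The only point needing care is the identification $\tau^{sP}=(\tau^{s'P})^{(s-s')P}$, where the Duhamel bound is applied to a reference dynamics that is itself perturbed. I would justify this via the uniqueness clause of Lemma \ref{lem: perturbed dynamics}. The generator of $(\tau^{s'P})^{(s-s')P}$ is the closure of $\mL+\mL_{s'P}$ plus the bounded derivation $\mL_{(s-s')P}$. Its domain is $D(\mL)$, and on this domain it coincides with $\mL+\mL_{sP}$. Hence the two groups agree.
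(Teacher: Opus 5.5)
Your proposal is correct and follows the paper's proof essentially verbatim: you identify $\tau^{sP}$ with the perturbation of $\tau^{s'P}$ by $(s-s')P$ via uniqueness of the generated group, apply the Duhamel bound pointwise in $t$, and integrate against the finite first moment of $f_\beta$ to get the same Lipschitz estimate. Your remark on the generator domains spells out the uniqueness step that the paper states in one line, and the dominated-convergence preamble becomes unnecessary once you have the quantitative bound.
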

\begin{proof}
    For $\beta=0$ the map is constant because $\Phi_0=\mathrm{id}$. Assume $\beta\neq0$. Let $s$ and $s_0 \in [0, 1]$. We calculate
    \begin{align*}
        \Phi_\beta^{\tau^{s P}} (A) - \Phi_\beta^{\tau^{s_0 P}} (A) &= \int_\R \d t \, f_\beta (t) \, \bigl( \tau_{- t}^{s P} (A) - \tau_{- t}^{s_0 P} (A) \bigr) 
        \\
        &= \int_\R \d t \, f_\beta (t) \, \bigl( (\tau_{- t}^{s_0 P})^{(s - s_0) P} (A) - \tau_{- t}^{s_0 P} (A) \bigr) \, ,
    \end{align*}
    where the second equality follows by the uniqueness of the dynamics generated by a given derivation. Duhamel's formula gives
    \[
        \norm{\tau_{-t}^{sP}(A)-\tau_{-t}^{s_0P}(A)}
        \leq 2|t|\,|s-s_0| \, \norm P \, \norm A.
    \]
    Hence,
    \[
        \norm{\Phi_\beta^{\tau^{sP}}(A)-
        \Phi_\beta^{\tau^{s_0P}}(A)}
        \leq2|s-s_0|\norm P\norm A
        \int_\R |t|f_\beta(t)\,\d t
        \longrightarrow0.
    \]
    This proves the statement.
\end{proof}

\subsection{Spaces of localized operators and interactions}

We introduce the fermionic conditional expectation. To this end first note that $\mA_\Gamma$ has a unique state $\omega^{\tr}$ that satisfies
\begin{align*}
    \omega^{\tr} (AB) = \omega^{\tr} (BA)
\end{align*}
for all  $A$, $B \in \mA_\Gamma$, called the tracial state (e.g.\ \cite[Definition 4.1, Remark 2]{AM03}).

\medskip

\begin{proposition}\label{Ex+UniqueExpectation}
    For each $M \subseteq \Gamma$ there exists a unique linear map 
    \begin{align*}
        \E_M : \mA_\Gamma \to \mA_M\,,
    \end{align*}
    called the conditional expectation with respect to $\omega^\tr$, such that
    \begin{equation*}
      \forall A \in \mA_\Gamma \; \; \forall B \in \mA_M \, :\quad \omega^{\tr}(A B) = \omega^{\tr} (\E_M(A) B) \,.
    \end{equation*}
    It is unital, positive and has the properties 
    \begin{eqnarray*}
        \forall M \subseteq \Gamma \;\; \forall A, C \in \mA_M \;\; \forall B \in \mA_\Gamma \, : &&  \E_M (A\, B\, C) = A \, \E_M(B) \, C 
        \\[1mm]
        \forall M_1, M_2 \subseteq \Gamma \, : && \E_{M_1} \circ \E_{M_2} = \E_{M_1 \cap M_2} \,
        \\[1mm]
        \forall M \subseteq \Gamma \,: && \E_M \mA_\Gamma^+ \subseteq \mA_\Gamma^+
        \\[1mm]
        \forall M \subseteq \Gamma \; \;  \forall A \in \mA_\Gamma\,:&&  \lVert \E_M(A) \rVert \leq \lVert A \rVert \,.
    \end{eqnarray*}
\end{proposition}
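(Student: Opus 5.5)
This is the standard existence statement for the fermionic conditional expectation onto $\mA_M$ with respect to the tracial state, see \cite[Theorem 4.7]{AM03}. I would reprove it in the following order.

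\emph{Finite regions.} For finite $M$, the algebra $\mA_\Gamma$ is the graded tensor product of $\mA_M\cong M_{2^{n|M|}}(\C)$ with $\mA_{\Gamma\setminus M}$. Equivalently, $\mA_\Gamma$ is generated by $\mA_M$ together with its relative commutant, and this relative commutant is built from $\mA^+_{\Gamma\setminus M}$ and odd elements of $\mA_{\Gamma\setminus M}$ multiplied by the parity unitary $(-1)^{N_M}$. The tracial state factorizes over this decomposition by the product property \cite[Proposition 4.6]{AM03}.

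On elementary products $A B'$, with $A\in\mA_M$ and $B'$ in the relative commutant, I would set
\[
\E_M(AB')=\omega^\tr(B')\,A .
\]
Extending this linearly and by continuity gives a slice map. It is positive and unital, and it is contractive because it is a unital positive map (in fact completely positive, being a slice map with a state). The defining identity $\omega^\tr(AB)=\omega^\tr(\E_M(A)B)$ for $B\in\mA_M$ then follows from the factorization of the trace.

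\emph{Uniqueness.} Uniqueness follows because $\omega^\tr$ restricted to $\mA_M$ is faithful. If two maps $\E,\E'$ both satisfy the defining identity, then $\omega^\tr((\E(A)-\E'(A))B)=0$ for all $B\in\mA_M$. Choosing $B=(\E(A)-\E'(A))^*$ gives $\E(A)=\E'(A)$.

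\emph{Algebraic properties.} With uniqueness in hand, each property follows by checking that the candidate map satisfies the defining identity and then invoking uniqueness.
\begin{itemize}
\item \textbf{Module property.} For $A,C\in\mA_M$ and $B'\in\mA_M$ we have $\omega^\tr(ABCB')=\omega^\tr(B\,CB'A)$ by the trace property, which equals $\omega^\tr(\E_M(B)\,CB'A)=\omega^\tr(A\E_M(B)C\,B')$. Hence $\E_M(ABC)=A\,\E_M(B)\,C$.
\item \textbf{Tower property.} Both $\E_{M_1}\circ\E_{M_2}$ and $\E_{M_1\cap M_2}$ satisfy the identity defining $\E_{M_1\cap M_2}$. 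This uses that the image of $\E_{M_2}\circ \E_{M_1}$ actually lies in $\mA_{M_1\cap M_2}$, which I would check on elementary tensors via the slice formula.
\item \textbf{Parity.} Preservation of evenness follows from $\Theta$-invariance of $\omega^\tr$: the map $\Theta\circ\E_M\circ\Theta$ satisfies the same defining identity, so it equals $\E_M$ by uniqueness.
\end{itemize}

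\emph{Infinite regions.} For infinite $M$, I would define $\E_M$ as the norm limit of $\E_{M\cap\Lambda_n}$ along an IAS, applied to local operators. For $A\in\mA_0$ this sequence is eventually constant: once $\Lambda_n$ contains the support of $A$, the tower property gives $\E_{M\cap\Lambda_n}(A)=\E_{M\cap\mathrm{supp}A}(A)$. The limit then extends to $\mA_\Gamma$ by contractivity.

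The main obstacle is the finite-region construction, namely handling the fermionic grading correctly so that the tensor-factorization yields a well-defined slice map. The twisted relative commutant, rather than $\mA_{\Gamma\setminus M}$ itself, is what makes this work.
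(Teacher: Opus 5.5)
Your proof is correct, and it does more than the paper does. The paper does not reprove the proposition. It cites \cite[Theorem 4.7]{AM03} for $\Gamma=\Z^D$ and remarks that the argument uses only the CAR structure and the tracial state, so it transfers to a general graph. Your sketch is a self-contained version of that standard argument: the slice map $\mathrm{id}\otimes\omega^{\tr}$ over $\mA_\Gamma\cong\mA_M\otimes(\mA_M'\cap\mA_\Gamma)$ for finite $M$, uniqueness from faithfulness of the trace, the module, tower and parity properties from uniqueness, and an inductive limit for infinite $M$. None of these steps uses the geometry of $\Gamma$, so your sketch actually substantiates the paper's transfer claim. The citation buys brevity. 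Your route gives a checkable, lattice-independent proof, and complete positivity of $\E_M$ as a by-product.

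Two details each deserve one more line. First, for infinite $M$, uniqueness requires faithfulness of $\omega^{\tr}$ on $\mA_M$. This holds because $\{A:\omega^{\tr}(A^*A)=0\}$ is a closed two-sided ideal when $\omega^{\tr}$ is tracial, and the CAR algebra is simple. Second, in the tower property, take a product $A_1B_-\in\mA_{M_2}$ with $A_1\in\mA_{M_1\cap M_2}$ and $B_-$ odd in $\mA_{M_2\setminus M_1}$. It splits as $(A_1U)(UB_-)$ with $U=(-1)^{N_{M_1}}$, and in general $A_1U\notin\mA_{M_1\cap M_2}$. So your image claim rests on $\omega^{\tr}(UB_-)=0$, i.e.\ on $\Theta$-invariance of the trace. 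It is cleaner to combine the module property with $\E_{M_1}(B)=\omega^{\tr}(B)1$ for $B\in\mA_{\Gamma\setminus M_1}$. That identity follows from the product property and uniqueness, and it works for finite and infinite $M_1$ alike. The remaining properties then pass to the limit map for infinite $M$ by continuity.
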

\begin{proof}
This statement was proved for $\Gamma=\Z^D$ in \cite[Theorem 4.7]{AM03}. The proof uses only the CAR structure and the tracial state, and therefore applies to the present lattice.
\end{proof}

\medskip
\begin{lemma} \label{lem: derivations}
      Let $\alpha > 0$ and $\Phi \in  \mP_{D +\alpha}$. It holds that $\mA_{D + \alpha} \subseteq D(\mL_{\Phi})$. For all $ A\in \mA_{D + \alpha}$ the sums
    \[\sum_{M\in P_0(\Gamma)} [\,\Phi(M), \,A\,] \quad \text{and} \quad \sum_{x \in \Gamma} [\, \Phi_x, \, A \, ] \]
    converge absolutely and
    \[ \mL_{\Phi } \, A 
    \;=\;  \sum_{M\in P_0(\Gamma)} [\,\Phi(M)  , \,A\,]
    \;=\;  \sum_{x\in \Gamma} [\, \Phi_x, \, A \, ] 
    \, .
    \]
    Furthermore, there is a constant $c_\alpha$, independent of $\Phi$ and $A$, such that for all~$x\in \Gamma$
    \[
    \norm{ \mL_{\Phi } \, A }_{\alpha,x} \leq c_\alpha \, \norm{\Phi}_{D + \alpha} \, \norm{A}_{D + \alpha, x}
    \,.
    \]
\end{lemma}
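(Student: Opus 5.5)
We first treat a finite-range approximation of $\Phi$ and pass to the general case by closedness of $\mL_\Phi$. Fix $A\in\mA_{D+\alpha}$ and $x\in\Gamma$. The first step is to bound a single commutator $[\Phi(M),A]$. Since $\Phi(M)$ is even and supported in $M$, it commutes with $\E_{B_k(x)}A$ whenever $M\cap B_k(x)=\emptyset$. Choosing $k=\lfloor d(x,M)\rfloor-1$ (or $k=0$ when $M$ is close to $x$) gives
\[
\norm{[\Phi(M),A]}\leq 2\norm{\Phi(M)}\,\norm{A-\E_{B_k(x)}A}\leq 2\norm{\Phi(M)}\,\norm A_{D+\alpha,x}(1+k)^{-D-\alpha}.
\]
We then sum over $M$ by grouping according to a point $y\in M$. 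For fixed $y$, $\sum_{M\ni y}\norm{\Phi(M)}\leq\norm\Phi_{D+\alpha}$. Bounding $(1+d(x,M))^{-D-\alpha}$ via $d(x,M)\geq d(x,y)-\diam M$ gives, after using $(1+d(x,y))\leq(1+d(x,M))(1+\diam M)$, the estimate
\[
\norm{[\Phi(M),A]}\lesssim \norm{\Phi(M)}(1+\diam M)^{D+\alpha}(1+d(x,y))^{-D-\alpha}\norm A_{D+\alpha,x}.
\]
Summing over $y$ with surface regularity requires $\sum_r(1+r)^{D-1-D-\alpha}<\infty$, which holds for $\alpha>0$. This gives absolute convergence, with $\sum_M\norm{[\Phi(M),A]}\leq C\norm\Phi_{D+\alpha}\norm A_{D+\alpha,x}$. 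Identification with $\mL_\Phi A$ follows because $\E_{B_n(x)}A\in\mA_0$ converges to $A$, and $\mL_\Phi^\circ\E_{B_n(x)}A$ converges by the same bound applied to $\E_{B_n(x)}A$, whose decay norm is uniformly controlled (by the tower property of $\E$). Closedness then gives $A\in D(\mL_\Phi)$. The $\Phi_x$ form follows by regrouping the absolutely convergent sum, using Remark \ref{rem: bound zero chain}.

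For the decay-norm estimate we need $\norm{\mL_\Phi A-\E_{B_k(x)}\mL_\Phi A}\lesssim(1+k)^{-\alpha}$. We split the terms $M$ into those with $M\subseteq B_{k/2}(x)$ and the rest. For the first group we replace $A$ by $\E_{B_{k}(x)}A$: the commutator $[\Phi(M),\E_{B_k(x)}A]$ lies in $\mA_{B_k(x)}$ and is fixed by $\E_{B_k(x)}$, while the remainder $[\Phi(M),A-\E_{B_k(x)}A]$ is bounded by $2\norm{\Phi(M)}\norm A_{D+\alpha,x}(1+k)^{-D-\alpha}$. Summing this over the $\lesssim(1+k)^D$ sites of $B_{k/2}(x)$ gives $(1+k)^{-\alpha}$. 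For the second group, each $M$ either has diameter at least $k/4$ or lies at distance at least $k/4$ from $x$. By the single-term estimate above and contractivity of $1-\E$, the tail sums are bounded by $\sum_{r\geq k/4}(1+r)^{-1-\alpha}\lesssim(1+k)^{-\alpha}$, using $(1+\diam M)^{D+\alpha}$ against the weight in $\norm\Phi_{D+\alpha}$ for large diameters. Adding the norm bound gives the result with $c_\alpha$ depending only on the lattice and $\alpha$.

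The main obstacle is the bookkeeping in the second group. Large-diameter sets close to $x$ must be charged to $\norm\Phi_{D+\alpha}$ via $F(\diam M)$, while far sets must be charged to the decay of $A$. We will carry out the split so that each case loses at most $(1+k)^D$ from volume, which is exactly offset by the exponent $D+\alpha$.
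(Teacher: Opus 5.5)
Your argument is correct, but it is organized differently from the paper's. The paper decomposes the observable instead of the sum over $M$. It writes $A=\sum_{j\geq0}A_j$ with dyadic shells $A_0=\E_{B_1(x)}A$ and $A_j=\E_{B_{2^j}(x)}A-\E_{B_{2^{j-1}}(x)}A\in\mA_{B_{2^j}(x)}$, and notes $\norm{A_j}\,|B_{2^j}(x)|\lesssim 2^{-j\alpha}\norm{A}_{D+\alpha,x}$. Since only sets meeting $B_{2^j}(x)$ contribute, it gets $\sum_M\norm{[\Phi(M),A_j]}\leq 2\norm{A_j}\,|B_{2^j}(x)|\,\norm{\Phi}_{D+\alpha}$. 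Domain membership is then immediate, because the local partial sums $\E_{B_{2^J}(x)}A$ have derivations forming an absolutely convergent geometric series. The decay bound then needs only two cases. For $2^j\leq r/2$, every commutator surviving $1-\E_{B_r(x)}$ comes from a set of diameter at least $r/2$. For $2^j>r/2$, the geometric tail already gives $r^{-\alpha}$.

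You instead bound each commutator individually, $\norm{[\Phi(M),A]}\lesssim\norm{\Phi(M)}\,\norm{A}_{D+\alpha,x}(1+d(x,M))^{-D-\alpha}$, and sum over $M$ with a three-way geometric split. This produces an explicit majorant for every term in terms of $d(x,M)$, which is convenient for dominated-convergence arguments. The price is the extra bookkeeping you anticipated, and it closes as you predict. If $\diam M\geq k/4$ and $d(x,M)\leq k/4$, bound the commutator trivially, pay $(1+k/4)^{-D-\alpha}$ from the weight in $\norm{\Phi}_{D+\alpha}$, and count over $y\in B_{k/4}(x)$; this gives $(1+k)^{D}(1+k)^{-D-\alpha}$. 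If $d(x,M)>k/4$, every point of $M$ lies outside $B_{k/4}(x)$, and your single-term estimate gives the tail $\sum_{r>k/4}(1+r)^{-1-\alpha}$.

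Two details should be tightened. First, for $x\in M$ you cannot take $k=0$, since $B_0(x)=\{x\}$ meets $M$; use the trivial bound $2\norm{\Phi(M)}\,\norm{A}$ there. Second, the uniform bound $\norm{\E_{B_n(x)}A}_{D+\alpha,x}\leq\norm{A}_{D+\alpha,x}$ only gives boundedness of $\mL_\Phi^\circ\E_{B_n(x)}A$, not convergence. Applying your estimate to $A-\E_{B_n(x)}A$ does not help either, because its decay norm need not tend to zero. Convergence follows from dominated convergence, using termwise convergence together with your $n$-independent majorant. Alternatively, argue directly: in $\sum_M[\Phi(M),A-\E_{B_n(x)}A]$ the terms with $d(x,M)>n$ equal $[\Phi(M),A]$ and form a vanishing tail. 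The remaining terms are bounded by $2|B_n(x)|\,\norm{\Phi}_{D+\alpha}\norm{A-\E_{B_n(x)}A}\lesssim(1+n)^{-\alpha}$.
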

\begin{proof}
    Fix $x\in\Gamma$ and set
    \begin{align*}
        A_0=\E_{B_1(x)}A,\qquad
        A_j=\E_{B_{2^j}(x)}A-\E_{B_{2^{j-1}}(x)}A,\quad j\geq1.
    \end{align*}
    Then $A=\sum_{j=0}^\infty A_j$ in norm and $A_j\in\mA_{B_{2^j}(x)}$. By surface regularity and the definition of the decay norm, we find that
    \[
        \norm{A_j} \, |B_{2^j}(x)|
        \leq 2^{2D+\alpha+1}C_\vol\,2^{-j\alpha}\norm A_{D+\alpha,x},
    \]
    for all $j\geq0$. We also have the bound
    \[
        \sum_M\norm{[\Phi(M),A_j]}
        \leq2\norm{A_j} \, |B_{2^j}(x)| \, \norm\Phi_{D+\alpha}.
    \]
    Summing over $j$ proves absolute convergence of the commutator sum. The local partial sums converge to $A$ and their derivations converge in norm. Closedness of $\mL_\Phi$ therefore gives $A\in D(\mL_\Phi)$ and the first identity. Absolute convergence also permits regrouping the terms by their centers and proves the second identity.

    To bound the decay norm, fix $r\geq4$. If $2^j\leq r/2$, every nonzero commutator contributing to
    $(1-\E_{B_r(x)})\mL_\Phi A_j$ has $M\cap B_{2^j}(x)\neq\emptyset$ and $M\nsubseteq B_r(x)$, hence $\diam M\geq r/2$. Thus
    \begin{align*}
        \norm{(1-\E_{B_r(x)})\mL_\Phi A}
        &\leq4(1+r/2)^{-D-\alpha}\norm\Phi_{D+\alpha}
              \sum_{2^j\leq r/2}\norm{A_j} \, |B_{2^j}(x)| + 4 \norm\Phi_{D+\alpha}
              \sum_{2^j>r/2}\norm{A_j} \, |B_{2^j}(x)|\\
        &\leq 2^{2D+\alpha+3}C_\vol\norm\Phi_{D+\alpha}\norm A_{D+\alpha,x}
          \left((1+r/2)^{-D-\alpha}\sum_{j=0}^\infty2^{-j\alpha}
                +\sum_{2^j>r/2}2^{-j\alpha}\right)\\
        &\leq c_\alpha(1+r)^{-\alpha}
                     \norm\Phi_{D+\alpha}\norm A_{D+\alpha,x}.
    \end{align*}
    The geometric sums converge since $\alpha>0$. The norm bound above covers $r<4$, and increasing $c_\alpha$ proves the claim.
\end{proof}

\begin{lemma}\label{lem:commutator bound}
    Let $F$ be a decay function, $x \in \Gamma$, $Y \in P_0 (\Gamma)$, and $A\in \mA_Y$, $B \in \mA_{F}$, where one of them is even. It holds that
    \begin{align*}
        \norm{[A,B]} \leq \frac{2}{F(1)} F(d(x, Y)) \,\norm{A} \, \norm{B}_{F, x}\, .
    \end{align*}
\end{lemma}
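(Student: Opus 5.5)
The plan is to split $B$ into a part supported near $x$ and a remainder, using the conditional expectations of Proposition \ref{Ex+UniqueExpectation}. Set $r\coloneq d(x,Y)$. If $r\leq 1$ there is nothing subtle: since $F$ is non-increasing and $F(0)=1$, we have $F(d(x,Y))/F(1)\geq 1$, and the trivial estimate $\norm{[A,B]}\leq 2\norm A\norm B\leq 2\norm A\norm B_{F,x}$ already gives the claim. So assume $r\geq 2$ and choose $k\coloneq r-1\in\N_0$. Then $B_k(x)\cap Y=\emptyset$, because every point of $Y$ has distance at least $r>k$ from $x$.

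Next we decompose $B=\E_{B_k(x)}B+(B-\E_{B_k(x)}B)$. The first term lies in $\mA_{B_k(x)}$, which has support disjoint from $Y$. We claim that it commutes with $A$. If $A$ is even, this follows from the stated fact that even operators commute with all operators on disjoint regions. If instead $B$ is even, then $\E_{B_k(x)}B$ is even by the parity-preservation property in Proposition \ref{Ex+UniqueExpectation}, and the same fact applies. Hence
\begin{align*}
    \norm{[A,B]}=\norm{[A,B-\E_{B_k(x)}B]}\leq 2\norm A\,\norm{B-\E_{B_k(x)}B}\leq 2\norm A\,F(k)\,\norm B_{F,x},
\end{align*}
where the last step uses the definition of the decay norm in Definition \ref{def:norm}.

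It remains to compare $F(r-1)$ with $F(r)/F(1)$. Log-convexity together with $F(0)=1$ gives the submultiplicativity-type inequality $F(r-1)F(1)\leq F(r)F(0)=F(r)$. Indeed, $\log F$ is convex, so its increments over intervals of length $1$ are non-decreasing, which yields $\log F(1)-\log F(0)\leq \log F(r)-\log F(r-1)$. Therefore $F(k)\leq F(r)/F(1)$, and the bound follows.

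The only delicate point is the parity bookkeeping in the case where $B$, not $A$, is the even observable. This requires that $\E_{B_k(x)}$ preserves evenness, which Proposition \ref{Ex+UniqueExpectation} provides. The remaining step, the shift from $k=r-1$ to $r$, is where the factor $1/F(1)$ enters, and it follows directly from log-convexity.
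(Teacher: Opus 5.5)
Your proof is correct and follows the paper's argument. You split off $\E_{B_{d(x,Y)-1}(x)}B$, which commutes with $A$ by disjointness of supports and parity, bound the remainder by the decay norm, and pass from $F(d(x,Y)-1)$ to $F(d(x,Y))/F(1)$ via log-convexity. The only differences are cosmetic: you treat $d(x,Y)=1$ by the trivial bound instead of taking $k=0$, and you spell out the parity preservation of $\E_{B_k(x)}$, which the paper leaves implicit.
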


\begin{proof}
    Let $x \in \Gamma$, $Y \in P_0 (\Gamma)$ and $A \in \mA_Y$, $B \in \mA_F$. First, we notice that the statement is trivial for $x \in Y$. Therefore, we only consider the case $d(x, Y) \geq 1$. We define $\gamma \coloneq d(x, Y) - 1$ and calculate
    \begin{align*}
        \lVert [A, B] \rVert \leq \lVert [A, \E_{B_{\gamma} (x)} B] \rVert + \lVert [A, (1 - \E_{B_\gamma (x)} ) \, B] \rVert \, .
    \end{align*}
    The first term vanishes by definition of $\gamma$. Hence, we find
    \begin{align*}
        \lVert [A, B] \rVert &\leq \lVert [A, (1 - \E_{B_\gamma (x)} ) \, B] \rVert 
        \\
        & \leq 2 \lVert A \rVert \, \lVert (1 - \E_{B_\gamma (x)}) \, B \rVert 
        \\
        &\leq 2 F (\gamma) \, \lVert A \rVert \, \lVert B \rVert_{F, x} \, . 
    \end{align*}
    By convexity of the logarithm of $F$ we can bound $F(\gamma ) \leq \frac{1}{F(1)} F(d(x, Y))$. This estimate yields the claim.
\end{proof}

\medskip
\begin{lemma} \label{lem: continuity decay norm}
    Let $F$ be a decay function and $A$, $B \in \mA_F$. It follows that
    \begin{align*}
        \lVert A B \rVert_{F, x} \leq 2 \, \lVert A \rVert_{F, x} \, \lVert B \rVert_{F, x} \, , 
    \end{align*}
    for all $x \in \Gamma$. Moreover, for each $x\in\Gamma$, the set of all $A\in\mA_\Gamma$ with $\norm A_{F,x}<\infty$ is complete with respect to $\norm{\cdot}_{F,x}$.
\end{lemma}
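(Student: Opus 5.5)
Both claims follow from the properties of the conditional expectations $\E_{B_k(x)}$ collected in Proposition \ref{Ex+UniqueExpectation}, namely contractivity and the bimodule property $\E_M(ACB)=A\,\E_M(C)\,B$ for $A,B\in\mA_M$.

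\textbf{Product estimate.} Fix $x$ and $k\in\N_0$, and abbreviate $\E=\E_{B_k(x)}$, $A_k=\E A$, $B_k=\E B$. Since $A_k,B_k\in\mA_{B_k(x)}$, the bimodule property gives $\E(A_kB_k)=A_kB_k$. Writing $AB-A_kB_k=(A-A_k)B+A_k(B-B_k)$, we find
\begin{align*}
    AB-\E(AB)
    =(1-\E)\bigl((A-A_k)B\bigr)+(1-\E)\bigl(A_k(B-B_k)\bigr).
\end{align*}
Because $A_k\in\mA_{B_k(x)}$, the bimodule property also gives $(1-\E)(A_k(B-B_k))=A_k(1-\E)(B-B_k)=A_k(B-B_k)$. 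Hence
\begin{align*}
    \norm{AB-\E(AB)}\leq 2\norm{A-A_k}\norm B+\norm A\norm{B-B_k}.
\end{align*}
Dividing by $F(k)$ yields
\begin{align*}
    \norm{AB}_{F,x}\leq \norm A\norm B+2\norm B\, s_A+\norm A\, s_B,
\end{align*}
where $s_A$ and $s_B$ denote the suprema appearing in the definition of $\norm{\cdot}_{F,x}$. The right-hand side is bounded by $2(\norm A+s_A)(\norm B+s_B)=2\norm A_{F,x}\norm B_{F,x}$, since $2\norm A\norm B$, $2s_A\norm B$ and $2\norm A s_B$ each appear in the expansion. A symmetric version avoids the factor two on the first term, but the crude bound already suffices.

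\textbf{Completeness.} Let $(A_n)$ be Cauchy in $\norm{\cdot}_{F,x}$. Since $\norm{\cdot}\leq\norm{\cdot}_{F,x}$, the sequence is Cauchy in operator norm and converges to some $A\in\mA_\Gamma$. For every $\epsilon>0$ choose $N$ such that, for $n,m\geq N$ and all $k$,
\begin{align*}
    \norm{(1-\E_{B_k(x)})(A_n-A_m)}\leq\epsilon F(k).
\end{align*}
Let $m\to\infty$. Since $\E_{B_k(x)}$ is contractive, hence norm-continuous, we obtain
\begin{align*}
    \norm{(1-\E_{B_k(x)})(A_n-A)}\leq\epsilon F(k)
\end{align*}
for all $k$. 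This gives $\norm{A_n-A}_{F,x}\leq 2\epsilon$, so $A$ has finite norm by the triangle inequality and $A_n\to A$ in $\norm{\cdot}_{F,x}$.

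The only step requiring care is the term $\E(A_k(B-B_k))$, which is handled by the bimodule property. Here $A_k$ is a quasi-local element of $\mA_{B_k(x)}$, which is exactly what that property allows. No parity assumption is needed, because the property holds for all elements of $\mA_M$.
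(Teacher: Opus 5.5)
Your proof is correct and follows the paper's argument. For the product bound, the paper likewise uses $\E_{B_k(x)}(A_kB_k)=A_kB_k$ and the splitting $AB-A_kB_k=(A-A_k)B+A_k(B-B_k)$. The paper bounds $\norm{1-\E_{B_k(x)}}\leq 2$ on both pieces, while you additionally use the bimodule property to see that $1-\E_{B_k(x)}$ fixes $A_k(B-B_k)$; this only sharpens an intermediate constant, not the final factor $2$. Your completeness argument, passing to the limit $m\to\infty$ in $\norm{(1-\E_{B_k(x)})(A_n-A_m)}\leq\epsilon F(k)$, is the same as the paper's.
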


\begin{proof}
    Fix $x\in\Gamma$ and $k\in\N_0$. We calculate
    \begin{align*}
        \norm{AB-\E_{B_k(x)}(AB)} &=\norm{(1-\E_{B_k(x)})
        \bigl(AB-\E_{B_k(x)}(A)\E_{B_k(x)}(B)\bigr)}\\
        &\quad\leq2\norm{AB-\E_{B_k(x)}(A)\E_{B_k(x)}(B)}\\
        &\quad\leq2\norm{A-\E_{B_k(x)}(A)}\norm B
        +2\norm A\norm{B-\E_{B_k(x)}(B)}.
    \end{align*}
    After division by $F(k)$ and taking the supremum over $k$, this gives
    \[
    \begin{aligned}
        \norm{AB}_{F,x}
        &\leq\norm A\norm B
        +2(\norm A_{F,x}-\norm A)\norm B
        +2\norm A(\norm B_{F,x}-\norm B)\\
        &\leq2\norm A_{F,x}\norm B_{F,x}.
    \end{aligned}
    \]
    For completeness, let $(A_n)_{n\in\N}$ be Cauchy in
    $\norm{\cdot}_{F,x}$. It is Cauchy in the $C^*$-norm and hence converges
    in norm to some $A\in\mA_\Gamma$. For every $n$ and $k$,
    \[
    \begin{aligned}
      \frac{\norm{(A-A_n)-\E_{B_k(x)}(A-A_n)}}{F(k)}
      &=\lim_{m\to\infty}
      \frac{\norm{(A_m-A_n)-\E_{B_k(x)}(A_m-A_n)}}{F(k)}\\
      &\leq\limsup_{m\to\infty}\norm{A_m-A_n}_{F,x}.
    \end{aligned}
    \]
    The right-hand side is independent of $k$. Taking the supremum
    over $k$, and also using
    $\norm{A-A_n}\leq\limsup_{m\to\infty}\norm{A_m-A_n}_{F,x}$, gives
    \[
    \norm{A-A_n}_{F,x}
    \leq2\limsup_{m\to\infty}\norm{A_m-A_n}_{F,x}.
    \]
    Letting $n\to\infty$ proves that
    $\norm{A-A_n}_{F,x}\to0$.
\end{proof}

\subsection{Locality of KMS states}

\medskip
\begin{lemma} \label{lem: decay of correlations for A_infty}
    Let $\beta \in \R$ and $\omega$ be a $(\tau, \beta )$-KMS state on $\mA_\Gamma$. Let $F$ be some decay function. Assume that $\omega$ satisfies decay of correlations with respect to $\zeta_\cor$, $f_\cor$, and $n_\cor$. Assume that 
    \begin{align*}
        F(r) \leq C (1 + r)^{D \, n_\cor} \, \zeta_{\cor} (r) \, , \qquad r \geq 0 \, , 
    \end{align*}
    for some constant $C > 0$. Then there exists another constant $C^\prime > 0$ only depending on $\Gamma$, $n_\cor$, $C$ and $F(1)^{-1}$ such that
    \begin{align*}
        |\omega(A B ) - \omega(A) \, \omega(B)| \leq C^\prime \, \lVert A \rVert \, \lVert B \rVert_{F, x} \, f_\cor(|Y|) \, (1 + d(x, Y))^{D \, n_\cor} \, \zeta_{\cor} (d(x, Y) / 2 ) \, ,
    \end{align*}
    for all $Y \in P_0 (\Gamma)$, $x \in \Gamma$, and $A \in \mA_Y$, $B \in \mA_F$, where one of them is even.
\end{lemma}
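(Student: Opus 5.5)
We decompose $B$ into local pieces on the balls $B_k(x)$ and apply decay of correlations to each piece. Set $r\coloneq d(x,Y)$. If $r\leq 1$, the trivial covariance bound $2\norm A\norm B$ suffices, because $\zeta_\cor(r/2)\geq\zeta_\cor(1)\geq F(1)/(C2^{Dn_\cor})>0$ by the hypothesis. This is the only place $F(1)^{-1}$ enters for small distances. For $r>1$, we let $k_0\coloneq\lfloor r/2\rfloor$ and write
\begin{align*}
B=\E_{B_{k_0}(x)}B+\sum_{k>k_0}\bigl(\E_{B_k(x)}B-\E_{B_{k-1}(x)}B\bigr),
\end{align*}
which converges in norm since $\norm{B-\E_{B_k(x)}B}\leq F(k)\norm B_{F,x}\to0$. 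We may assume $F\to0$; otherwise the statement reduces to the trivial case after adjusting constants. Each piece $B^{(k)}$ lies in $\mA_{B_k(x)}$ and has norm at most $2F(k-1)\norm B_{F,x}$ for $k>k_0$. The piece $B^{(k_0)}$ has norm at most $\norm B$. Parity is preserved, because $\E_M$ maps even elements to even elements by Proposition \ref{Ex+UniqueExpectation}. Hence, whichever of $A,B$ is even, each covariance $\omega(AB^{(k)})-\omega(A)\omega(B^{(k)})$ has one even factor.

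The first piece sits in $B_{k_0}(x)$, which is at distance at least $r-k_0\geq r/2$ from $Y$. We apply Definition \ref{def: decay of corr} with the roles chosen so that the polynomial factor falls on the ball and $f_\cor$ falls on $Y$. The remark after the definition allows this interchange, since the supports are disjoint and one factor is even. This gives the bound
\begin{align*}
\norm A\norm B\,|B_{k_0}(x)|^{n_\cor}f_\cor(|Y|)\zeta_\cor(r/2)\leq C_\vol^{n_\cor}\norm A\norm B\,(1+r)^{Dn_\cor}f_\cor(|Y|)\zeta_\cor(r/2).
\end{align*}

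For the tail pieces with $k_0<k<r$, the support $B_k(x)$ is still disjoint from $Y$, but it is close to $Y$. Using clustering for these pieces is wasteful, so we bound their covariances trivially by $2\norm A\norm{B^{(k)}}$. The entire tail, including $k\geq r$, then contributes at most
\begin{align*}
2\norm A\,\norm{B-\E_{B_{k_0}(x)}B}\leq 2\norm A\,F(k_0)\norm B_{F,x}.
\end{align*}
This follows directly from the definition of the decay norm, with no summation needed. It remains to compare $F(k_0)$ with $\zeta_\cor(r/2)$. Since $k_0\geq r/2-1$ and $\log F$ is convex with $F(0)=1$, we have $F(k_0)\leq F(r/2-1)\leq F(r/2)/F(1)$. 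This is the same trick as in Lemma \ref{lem:commutator bound}. The hypothesis then gives $F(r/2)\leq C(1+r/2)^{Dn_\cor}\zeta_\cor(r/2)$. Together with $f_\cor(|Y|)\geq1$ for $|Y|\geq1$, the tail is dominated by the claimed right-hand side.

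Adding the two contributions and using $\norm B\leq\norm B_{F,x}$ gives the lemma. The constant $C'$ depends only on $C_\vol$, $n_\cor$, $C$ and $F(1)^{-1}$. The main subtle point is the bookkeeping of which factor receives $|\cdot|^{n_\cor}$ and which receives $f_\cor$. We must place the ball-support, which grows with $r$, in the polynomial slot, so that $f_\cor$, possibly exponential, only sees $|Y|$. We also need the even/odd condition to survive the conditional expectations.
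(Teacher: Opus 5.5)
Your proposal is correct and takes essentially the same route as the paper. You split $B$ at radius $\lfloor d(x,Y)/2\rfloor$, bound the remainder through the decay norm and $F(\lfloor d(x,Y)/2\rfloor)\leq F(d(x,Y)/2)/F(1)$, and apply clustering to $\E_{B_{k_0}(x)}B$ placed in the first slot so that the ball carries the polynomial factor. The telescoping decomposition is never used, since you only need the two-piece split. The same goes for the accompanying reduction to $F\to0$, which is unnecessary anyway because $\E_{B_k(x)}B\to B$ in norm for every quasi-local $B$.
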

\begin{proof}
    Let $A\in\mA_Y$ and $B\in\mA_F$ be such that one of them is even. If $d(x, Y)=0$, then the assumption $F(0)=1\leq C\zeta_\cor(0)$ and $f_\cor(|Y|)\geq1$ give
    \begin{align*}
        |\omega(AB)-\omega(A)\omega(B)|
        &\leq2\norm A\norm B\\
        &\leq2C\norm A\norm B_{F,x}f_\cor(|Y|)
        (1+d(x, Y))^{Dn_\cor}\zeta_\cor(d(x, Y)/2).
    \end{align*}
    We therefore assume $d(x, Y)\geq1$ and set $k=\lfloor d(x, Y)/2\rfloor$. We calculate
    \begin{align*}
        |\omega(AB)-\omega(A)\omega(B)|
        &\leq |\omega(A(B-\E_{B_k(x)}B))
        -\omega(A)\omega(B-\E_{B_k(x)}B)|\\
        &\quad+|\omega(A\E_{B_k(x)}B)
        -\omega(A)\omega(\E_{B_k(x)}B)|.
    \end{align*}
    Since $k=\lfloor d(x, Y)/2\rfloor$, convexity of $\log F$ gives $F(k) \leq F(d(x, Y)/2) / F(1)$. The first term is therefore bounded by
    \begin{align*}
        &|\omega(A(B-\E_{B_k(x)}B))
        -\omega(A)\omega(B-\E_{B_k(x)}B)|\\
        &\quad\leq2F(k)\norm A\norm B_{F,x}\\
        &\quad\leq\frac{2C}{F(1)}\norm A\norm B_{F,x}
        f_\cor(|Y|)(1+d(x, Y))^{Dn_\cor}\zeta_\cor(d(x, Y)/2).
    \end{align*}
    Furthermore, $B_k(x)\cap Y=\emptyset$ and $d(B_k(x),Y) \geq d(x, Y)/2$. The conditional expectation preserves evenness. Hence one of
    $\E_{B_k(x)}B$ and $A$ is even, and these two operators commute because their supports are disjoint. Applying decay of correlations gives
    \begin{align*}
        |\omega(A\E_{B_k(x)}B)
        -\omega(A)\omega(\E_{B_k(x)}B)|  &=|\omega((\E_{B_k(x)}B)A)
        -\omega(\E_{B_k(x)}B)\omega(A)|\\
        &\leq\norm{\E_{B_k(x)}B}\norm A
        |B_k(x)|^{n_\cor}f_\cor(|Y|)
        \zeta_\cor(d(B_k(x),Y))\\
        &\leq C_\vol^{n_\cor}\norm A\norm B_{F,x}
        f_\cor(|Y|)(1+d(x, Y))^{Dn_\cor}\zeta_\cor(d(x, Y)/2).
    \end{align*}
    Adding both estimates proves the claim.
\end{proof}

\medskip
\begin{lemma} \label{lem: dynamics is even}
    Consider the $C^*$-algebra $\mA_\Gamma$ and let $\tau^P$ be the dynamics generated by the derivation $\mL_\Phi + \mL_P$, where $\Phi \in \mP_{D+\alpha}$ with $\alpha>D$ and $P = P^* \in \mA_\Gamma^+$. Then the dynamics maps even observables to even observables, that is
    \begin{align*}
        \tau_t^P \, \mA_\Gamma^+ \subseteq \mA_\Gamma^+ \, ,
    \end{align*}
    for all $t \in \R$.
\end{lemma}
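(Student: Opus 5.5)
The plan is to show that the parity automorphism $\Theta$ commutes with $\tau_t^P$; evenness then follows at once. Indeed, if $\Theta\circ\tau^P_t=\tau^P_t\circ\Theta$ and $A\in\mA_\Gamma^+$, then $\Theta(\tau_t^P(A))=\tau_t^P(\Theta(A))=\tau_t^P(A)$. To obtain the commutation we use uniqueness of the dynamics generated by a given derivation. The family $\sigma_t\coloneq\Theta\circ\tau_t^P\circ\Theta$ is again a strongly continuous one-parameter group of automorphisms, since $\Theta$ is an involutive automorphism. Its generator is $\Theta\circ(\mL_\Phi+\mL_P)\circ\Theta$ on the domain $\Theta(D(\mL_\Phi+\mL_P))$. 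It therefore suffices to show that this conjugated derivation equals $\mL_\Phi+\mL_P$ with the same domain.

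For the bounded part, evenness of $P$ gives $\Theta([P,A])=[\Theta(P),\Theta(A)]=[P,\Theta(A)]$, so $\Theta\circ\mL_P\circ\Theta=\mL_P$. For the unbounded part, first consider $A\in\mA_0$. Each $\Phi(M)$ is even by the definition of an interaction, so
\begin{align*}
    \Theta\bigl(\mL_\Phi^\circ(\Theta(A))\bigr)=\sum_{M}[\Phi(M),A]=\mL_\Phi^\circ(A).
\end{align*}
Here $\Theta(\mA_0)=\mA_0$, and the sum converges unconditionally, so $\Theta$, being isometric, may be applied termwise. Since $\mL_\Phi$ is the closure of $\mL_\Phi^\circ$ and $\Theta$ is isometric, the closure of $\Theta\circ\mL_\Phi^\circ\circ\Theta$ is $\Theta\circ\mL_\Phi\circ\Theta$. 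This gives $\Theta\circ\mL_\Phi\circ\Theta=\mL_\Phi$, including equality of domains. Adding the bounded derivation $\mL_P$ does not change the domain. Hence the two groups $\sigma$ and $\tau^P$ have the same generator.

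The main point to check is which uniqueness statement applies. If $\mL_\Phi$ generates a dynamics $\tau$, which the hypothesis $\Phi\in\mP_{D+\alpha}$ with $\alpha>D$ is meant to ensure (as in the Lieb--Robinson appendix), then the generator of a strongly continuous group determines the group uniquely. The same holds for the bounded perturbation by Lemma~\ref{lem: perturbed dynamics}. So $\sigma_t=\tau^P_t$.

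Alternatively, one can avoid generator theory for the perturbation. First show $\Theta\tau_t\Theta=\tau_t$, using either uniqueness of the generated group or the local approximations $\tau^{\Lambda_n}_t=\e^{\i t\Phi_{\Lambda_n}}(\cdot)\e^{-\i t\Phi_{\Lambda_n}}$, which preserve parity because $\Phi_{\Lambda_n}$ is even and which converge strongly to $\tau_t$. Then use the Dyson series of Lemma~\ref{lem: perturbed dynamics}: every term is an iterated commutator of the even elements $\tau_{t_j}(P)$ with $\tau_t(A)$, and such a commutator is even whenever $A$ is even. Since the series converges absolutely in norm and $\mA_\Gamma^+$ is norm-closed, $\tau_t^P(A)\in\mA_\Gamma^+$. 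I would present this second route as the main argument, because it only needs evenness of $\tau$ on its own, where the strong approximation by even finite-volume dynamics is standard.
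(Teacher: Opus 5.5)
Your first route is the paper's proof. The paper differentiates $\Theta\tau^P_t\Theta(A)$ for $A\in\mA_0$ and uses evenness of the $\Phi_x$ and of $P$ to show that the conjugated group's generator agrees with $\mL_\Phi+\mL_P$ on $\mA_0$. It then concludes by uniqueness of the dynamics generated by $\mL_\Phi+\mL_P$. Your version is slightly more careful. The closure argument gives equality of the two closed generators, domains included. The paper only checks agreement on the core $\mA_0$ and leaves the extension step implicit (a generator of an isometric group that extends another such generator must equal it).

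The route you say you would present as the main argument is genuinely different, and it is also correct. You split the problem in two:
\begin{itemize}
\item parity invariance of the unperturbed dynamics, $\Theta\tau_t\Theta=\tau_t$;
\item a purely algebraic step for the bounded perturbation, via the Dyson series of Lemma~\ref{lem: perturbed dynamics}.
\end{itemize}
The second step uses that iterated commutators of even elements are even, and that integrals and norm limits stay in the closed subspace $\mA_\Gamma^+$.

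This route makes it transparent why the bounded even perturbation is harmless. It does not, however, avoid generator theory; it only moves it to $\tau$. Evenness of $\tau$ still needs one of two things:
\begin{itemize}
\item the same uniqueness argument, which is your first route with $P=0$;
\item strong convergence of the even finite-volume dynamics $\e^{\i t\Phi_{\Lambda_n}}(\cdot)\e^{-\i t\Phi_{\Lambda_n}}$. This is a Trotter--Kato argument resting on $\mA_0$ being a core for the generator $\mL_\Phi$, exactly as in Lemma~\ref{lem: str conv time evol}.
\end{itemize}
The paper's one-step argument handles $\mL_\Phi$ and $\mL_P$ together and is shorter. Your two-step version is more modular: it separates the analytic input (parity invariance of $\tau$) from the algebraic fact that even bounded perturbations preserve parity.
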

\begin{proof}
    First, notice that $\Theta^2 = \mathrm{id}$. Let $A \in \mA_0$ and calculate
    \begin{align*}
        - \i \, \partial_t \Theta \, \tau_t^P  \, \Theta (A) &= \Theta  \, \tau_t^P (\mL_\Phi (\Theta (A)) + \mL_{P} (\Theta (A)) )
        \\
        &= \Theta \, \tau_t^P \bigl( \sum_{x \in \Gamma} [\Phi_x, \Theta (A)]  + [P, \Theta (A)] \bigr)
        \\
        &= \Theta \, \tau_t^P \, \Theta (\mL_\Phi(A) + \mL_P(A) ) \, ,
    \end{align*}
    where we have used Lemma \ref{lem: derivations}. By uniqueness of the dynamics that is generated by $\mL_\Phi + \mL_P$ it follows that $\Theta \, \tau^P \, \Theta = \tau^P$.
\end{proof}

\medskip
\begin{lemma}\label{lem: str conv time evol}
    Let $H,V\in\mP_F$ for a decay function $F$, and assume that $\mL_{H+sV}$ generates a dynamics $\tau^{sV}$ for every $s\in[0,1]$. For an IAS $(\Lambda_n)_{n\in\N}$, we set $V_n  \coloneq V_{\Lambda_n}$. Then, for every $A\in\mA_\Gamma$ and $T>0$,
    \begin{align*}
        \lim_{n \to \infty} \sup_{\substack {s \in [0,1]}} \sup_{|t| \leq T} \norm{ \tau_t^{s V_n} (A) - \tau_t^{sV} (A)} = 0.
    \end{align*}
    Moreover, $s\mapsto\tau_t^{sV}(A)$ is norm-continuous, uniformly for $|t|\leq T$.
\end{lemma}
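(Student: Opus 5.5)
The plan is to reduce both claims to the setting of strictly local perturbations and then use a Dyson/Duhamel comparison. I would first work with local observables $A\in\mA_0$ and extend to $\mA_\Gamma$ by density, using that every dynamics involved is isometric. Fix $A\in\mA_X$ and $|t|\leq T$. Since $\tau^{sV_n}$ is generated by $\mL_H+s\mL_{V_n}$ and $\tau^{sV}$ by $\mL_{H+sV}$, both generators agree on $\mA_0$ up to the derivation coming from the terms $V(Z)$ with $Z\nsubseteq\Lambda_n$. Duhamel's formula then gives
\begin{align*}
    \tau_t^{sV}(A)-\tau_t^{sV_n}(A)
    =\i s\int_0^t \tau_{t-u}^{sV_n}\Bigl(\sum_{Z\nsubseteq\Lambda_n}\bigl[V(Z),\tau_u^{sV}(A)\bigr]\Bigr)\,\d u .
\end{align*}
To justify this formula, one needs $\tau_u^{sV}(A)$ to lie in the common domain, or one approximates. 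Since $\tau^{sV_n}$ is isometric, it suffices to bound $\sum_{Z\nsubseteq\Lambda_n}\norm{[V(Z),\tau_u^{sV}(A)]}$ uniformly in $s\in[0,1]$ and $|u|\leq T$.

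For this commutator sum I would use a Lieb--Robinson-type quasi-locality of $\tau^{sV}$ that is uniform in $s$. The bound $\norm{H+sV}_F\leq\norm H_F+\norm V_F$ is exactly what is needed here. If no explicit bound is available under the bare hypothesis that $\mL_{H+sV}$ generates a dynamics, I would fall back on a compactness-plus-approximation argument instead. Given $\eps>0$, choose $r$ such that $\norm{(1-\E_{X_r})\tau_u^{sV}(A)}<\eps$ for all $s\in[0,1]$ and $|u|\leq T$. Then the commutator with terms $V(Z)$, $Z\cap X_r=\emptyset$, is at most $2\eps\norm{V(Z)}$ by evenness. The terms meeting $X_r$ but not contained in $\Lambda_n$ contribute at most $2\norm A\sum_{x\in X_r}\sum_{Z\ni x,\,Z\nsubseteq\Lambda_n}\norm{V(Z)}$. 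This tends to $0$ as $n\to\infty$ by unconditional convergence of the interaction sum, because $\Lambda_n$ absorbs $X_r$ and the tails at finitely many sites vanish.

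The sum over the far terms is not absolutely summable, so the plan needs a refinement. I would treat it as an unconditionally convergent series: writing $\sum_{Z\nsubseteq\Lambda_n}V(Z)$ as a difference of derivations, the Duhamel integrand is $\mL_{V}-\mL_{V_n}$ applied to $\tau_u^{sV}(A)$, with $\mL_{V_n}=[V_n,\cdot]$. By Lemma \ref{lem: derivations}, this is controlled once $\tau_u^{sV}(A)\in\mA_{D+\alpha}$ with decay norm uniform in $(s,u)$. That uniform localization is the main obstacle. I expect to obtain it from the Lieb--Robinson estimates behind Proposition \ref{prop: loc of phi} (the $\tau_t$ bound in the polynomial regime), with constants depending only on $\norm H_F+\norm V_F$. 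Granting it, the tail $\sum_{x\notin\Lambda_{n}'}$ of the center-regrouped sum in Lemma \ref{lem: derivations} decays uniformly, where $\Lambda_n'$ consists of the sites whose $R_x$-sets lie in $\Lambda_n$. This yields the first claim.

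For norm continuity in $s$, I would use the same Duhamel formula comparing $\tau^{sV}$ and $\tau^{s_0V}$. Here the integrand is $(s-s_0)\mL_V\tau_u^{s_0V}(A)$, whose norm is bounded uniformly in $u$ for local $A$ by the same localization. This gives a Lipschitz estimate $\norm{\tau_t^{sV}(A)-\tau_t^{s_0V}(A)}\leq C_{A,T}|s-s_0|$ for $A\in\mA_0$ and $|t|\leq T$. Density and isometry then give uniform continuity for all $A\in\mA_\Gamma$. Alternatively, continuity follows from the first claim: each $s\mapsto\tau_t^{sV_n}(A)$ is norm-continuous by the bounded Duhamel estimate in Lemma \ref{lem: perturbed dynamics}, uniformly in $|t|\leq T$, and the convergence to $\tau_t^{sV}(A)$ is uniform in $s$ and $t$.
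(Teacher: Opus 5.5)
Your proposal has a genuine gap, and it sits exactly where you place the main obstacle. The lemma is stated for an arbitrary decay function $F$ (the definition even admits $F\equiv1$), and its only assumption is that $\mL_{H+sV}$ generates a dynamics. In this generality there is no Lieb--Robinson bound, Proposition~\ref{prop: loc of phi} does not apply, and Lemma~\ref{lem: derivations} needs $V\in\mP_{D+\alpha}$. So you cannot obtain a decay-norm bound on $\tau_u^{sV}(A)$ that is uniform in $s$ and $u$. The lemma must hold without such a bound, since it is used in Proposition~\ref{prop: convergence from li} for general $F$.

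Without that bound, your Duhamel identity is not even justified. Differentiating $u\mapsto\tau^{sV_n}_{t-u}(\tau^{sV}_u(A))$ requires $\tau^{sV}_u(A)\in D(\mL_H)$, and the hypotheses relate $D(\mL_{H+sV})$ and $D(\mL_H)$ only through the common core $\mA_0$. Your fallback does not repair this. As you note yourself, the far terms give the divergent sum $\sum_Z 2\eps\norm{V(Z)}$. Moreover, choosing $r$ uniformly in $(s,u)$ presupposes norm-compactness of $\{\tau_u^{sV}(A): s\in[0,1],\,|u|\leq T\}$, i.e.\ continuity in $s$. That is the second claim of the lemma, which you then deduce from the first claim, so the fallback is circular.

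The missing idea is that you never need the generators on evolved observables; convergence of generators on the core $\mA_0$ suffices. For $A\in\mA_Y$ and $s_n\to s$,
\begin{align*}
    \norm{(\mL_H+s_n\mL_{V_n}-\mL_{H+sV})A}
    \leq 2\norm{A}\Bigl(|s_n-s|\,|Y|\,\norm{V}_F
    +\sum_{\substack{Z\cap Y\neq\emptyset\\ Z\nsubseteq\Lambda_n}}\norm{V(Z)}\Bigr)\longrightarrow0,
\end{align*}
where the last sum vanishes by dominated convergence, since $\sum_{Z\ni y}\norm{V(Z)}\leq\norm{V}_F$. All groups are isometric, $\mA_0\subseteq D(\mL_H)=D(\mL_H+s_n\mL_{V_n})$, and $\mA_0$ is a core for $\mL_{H+sV}$ by its definition as a closure. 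The first Trotter--Kato approximation theorem, applied to $\pm\i$ times these generators, then gives $\sup_{|t|\leq T}\norm{\tau_t^{s_nV_n}(A)-\tau_t^{sV}(A)}\to0$ for all $A\in\mA_\Gamma$. The resolvents take care of the domain problem that blocks your Duhamel formula.

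The same argument with $\mL_{H+s_nV}$ in place of $\mL_H+s_n\mL_{V_n}$ gives continuity in $s$. Uniformity in $s$ then follows by contradiction: extract $s_j\to s$ along a bad subsequence and combine the two convergence statements with the triangle inequality. Your Lieb--Robinson route is viable in the polynomial or exponential regimes, where Lemma~\ref{lem: derivations} also settles the domain question. It does not, however, prove the lemma in its stated generality.
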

\begin{proof}
    The algebra $\mA_0$ is a core for $\mL_{H+sV}$ by the definition of this derivation as a closure. For $A\in\mA_Y$ and $s_n\to s$, we calculate
    \begin{align*}
        &\norm{(\mL_H+s_n\mL_{V_n}-\mL_{H+sV})A} \leq
        2 \norm A \, |s_n-s| \sum_{\substack{Z \in P_0 (\Gamma) \\ Z \cap Y \neq \emptyset }} \norm{V(Z)}
        +2\norm A\sum_{\substack{Z \in P_0 (\Gamma) \\ Z\cap Y\neq\emptyset,  Z \nsubseteq \Lambda_n}}\norm{V(Z)}. \numberthis \label{eq: proof lemma convergence}
    \end{align*}
    We have the elementary bound on the sum
    \begin{align*}
        \sum_{Z\cap Y\neq\emptyset}\norm{V(Z)}
        \leq\sum_{y\in Y}\sum_{Z\ni y}\norm{V(Z)}
        \leq |Y|\norm V_F<\infty.
    \end{align*}
    Hence, the first term in Equation \eqref{eq: proof lemma convergence} tends to zero as $n \to \infty$. Also the second term in this equation vanishes in the limit $n \to \infty$ by dominated convergence. All the time evolutions are isometries. The first Trotter--Kato approximation theorem \cite[Theorem III.4.8]{EN00}, applied to $\pm\i(\mL_H+s_n\mL_{V_n})$ and $\pm\i\mL_{H+sV}$, yields
    \begin{align*}
        \sup_{|t|\leq T}\norm{\tau_t^{s_nV_n}(A)-\tau_t^{sV}(A)}\longrightarrow0.
    \end{align*}
This convergence holds for every $A\in\mA_\Gamma$.
    We first establish the continuity in $s$ which will be used below. For $A\in\mA_Y$ and $s_n\to s$ we have
    \begin{align*}
        \norm{(\mL_{H+s_nV}-\mL_{H+sV})A}
        &\leq2|s_n-s|\norm A
        \sum_{Z\cap Y\neq\emptyset}\norm{V(Z)}\\
        &\leq2|s_n-s|\,|Y|\norm V_F\norm A
        \longrightarrow0.
    \end{align*}
By the same argument as above, we find that
    \begin{align*}
        \sup_{|t|\leq T}\norm{\tau_t^{s_nV}(A)-\tau_t^{sV}(A)}
        \longrightarrow0,\qquad A\in\mA_\Gamma.
    \end{align*}
    This proves the continuity assertion. We now show that the convergence in $n$ is uniform in $s\in[0,1]$. If this failed, there would be a subsequence $n_j$ and parameters $s_j$ with a positive lower bound on
    \begin{align*}
        \sup_{|t|\leq T}
        \lVert\tau_t^{s_jV_{n_j}}(A)-\tau_t^{s_jV}(A)\rVert.
    \end{align*}
    By compactness, we may pass to a further subsequence such that $s_j\to s$. We calculate
    \begin{align*}
        &\sup_{|t|\leq T}\norm{\tau_t^{s_jV_{n_j}}(A)-\tau_t^{s_jV}(A)}\\
        &\quad\leq
        \sup_{|t|\leq T}\norm{\tau_t^{s_jV_{n_j}}(A)-\tau_t^{sV}(A)}
        +\sup_{|t|\leq T}\norm{\tau_t^{s_jV}(A)-\tau_t^{sV}(A)}
        \longrightarrow0.
    \end{align*}
    The first term converges to zero by the preceding convergence result for the sequence $s_j\to s$, and the second term converges to zero by the continuity just proved. This contradiction proves uniform convergence.
\end{proof}

\section{Locality estimates} \label{app: LR bounds}

\medskip
\begin{proposition} \label{prop: LiebRobinson} 
    Let $\Phi \in \mP_{D + \alpha}$ be an interaction with $\alpha > D$ and let $\sigma \in ((D + 1)  / (\alpha + 1), 1)$ be arbitrary. Then $\mL_{\Phi}$ generates a unique group of automorphisms $\tau$. Furthermore, there exists a polynomially growing function $f \colon \R_+ \to \R_+$, such that $\tau$ satisfies a Lieb--Robinson bound of the form
    \begin{align*}
        \zeta_\lr (X, Y, |t|)
        \leq  \min( |X|, |Y|) \, f(\lvert t \rvert) \bigl(\exp \bigl( \min( 0, v \lvert t \rvert - d(X,Y)^{1 - \sigma }) \bigr) + (1 + d (X, Y))^{- \sigma \alpha} \bigr) \, .
    \end{align*}
    Moreover, for any $C > 0$, $f$ and $v > 0$ can be chosen uniformly for all $\Phi$ with $ \lVert \Phi \rVert_{D + \alpha} < C$.
    
    Independently, if $\Phi\in\mP_{\exp(-\mu\cdot)}$ for some $\mu>0$, then, for every $0<\lambda<\mu$, there are constants $C_{\mu,\lambda},\kappa_{\mu,\lambda}>0$, depending only on $\Gamma$, $\mu$ and $\lambda$, such that
    \begin{align*}
        \zeta_\lr(X,Y,|t|)
        &\leq C_{\mu,\lambda}\min (|X|, |Y| ) \, 
        \exp\bigl(\kappa_{\mu,\lambda}\lVert\Phi\rVert_{\exp(-\mu\cdot)}|t|
                   -\lambda d(X,Y)\bigr).
    \end{align*}
    In particular, the coefficient of $|t|$ is linear in the interaction norm.
\end{proposition}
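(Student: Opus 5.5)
The statement to prove is Proposition \ref{prop: LiebRobinson}. It has two parts: existence and uniqueness of the dynamics, and two Lieb--Robinson bounds (polynomial and exponential). I would not build the long-range bound from scratch. Instead I would reduce it to the fermionic Lieb--Robinson bounds of \cite{TW25}, which rest on \cite{EMNY20}, and treat the exponential case by the standard Hastings--Koma iteration adapted to even observables.

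\textbf{Generation of the dynamics.} For $\Phi\in\mP_{D+\alpha}$, Lemma \ref{lem: derivations} shows that $\mA_{D+\alpha}\supseteq\mA_0$ lies in $D(\mL_\Phi)$. I would then take the local dynamics $\tau^{\Lambda}_t=\e^{\i t\,\mathrm{ad}(\Phi_\Lambda)}$ along an IAS and show they are Cauchy on $\mA_0$. The tool is the Duhamel formula
\[
\tau^{\Lambda'}_t(A)-\tau^{\Lambda}_t(A)=\i\int_0^t\tau^{\Lambda'}_{t-u}\bigl([\Phi_{\Lambda'}-\Phi_\Lambda,\tau^\Lambda_u(A)]\bigr)\,\d u .
\]
The integrand is controlled by the finite-volume Lieb--Robinson bound together with the summability $\alpha>D$. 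The limit defines a strongly continuous automorphism group whose generator extends $\mL_\Phi^\circ$. Since $\mA_0$ is invariant under $\tau_t$ up to approximation and is a core, uniqueness follows from \cite[Corollary 3.1.20]{BR1}-type arguments, or equivalently from the Trotter--Kato argument already used in Lemma \ref{lem: str conv time evol}.

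\textbf{Bounds in finite volume.} It suffices to prove both bounds for $\tau^\Lambda$ uniformly in $\Lambda$ and then pass to the limit. For even $A\in\mA_X^+$ and $B\in\mA_Y$, I would set $C_B(Z,t)=\sup\norm{[\tau^\Lambda_t(A),B]}$, with the supremum taken over $B\in\mA_Z$ of norm one. Evenness of $A$ and of every $\Phi(Z)$ means that graded commutators reduce to ordinary ones: $[\Phi(Z),B]=0$ whenever $Z\cap\mathrm{supp}B=\emptyset$. Differentiating and using the unitarity of $\tau^\Lambda$ then gives the integral inequality
\[
C(Y,t)\le C(Y,0)+2\sum_{Z\cap Y\neq\emptyset}\norm{\Phi(Z)}\int_0^{|t|}C(Z,u)\,\d u .
\]
In the exponential case I would iterate this inequality. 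With the weight $\e^{\mu\,\diam Z}$ and the convolution bound $\sum_z \e^{-\lambda d(x,z)}\e^{-\lambda d(z,y)}\le C\e^{-\lambda' d(x,y)}$ on the $D$-regular graph, the series sums to $C_{\mu,\lambda}\min(|X|,|Y|)\exp(\kappa\norm\Phi|t|-\lambda d)$. The rate is linear in the norm because each iteration contributes exactly one factor of $\norm\Phi_{\exp(-\mu\cdot)}$.

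\textbf{Polynomial case and main obstacle.} Here naive iteration gives only $\e^{v|t|}(1+d)^{-\alpha+D}$-type decay. The main obstacle is obtaining the sharper form, with a stretched front $\exp(v|t|-d^{1-\sigma})$ plus the tail $(1+d)^{-\sigma\alpha}$. For this I would follow \cite{EMNY20}. Split $\Phi$ into a short part with $\diam Z\le d^{\sigma}$ and a long tail. The short part has an effectively finite range $d^\sigma$, and a coarse-grained Lieb--Robinson bound for it gives the term $\exp(v|t|-d^{1-\sigma})$. The tail is handled by a Duhamel comparison, whose norm contributes $|t|\min(|X|,|Y|)(1+d)^{-\sigma\alpha}$ up to polynomial factors in $t$; the condition $\sigma>(D+1)/(\alpha+1)$ makes this tail sum converge. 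The fermionic adaptation consists only of checking that every commutator appearing involves an even operator, since the interaction terms are even and $A$ is even. That is exactly the setting of \cite{TW25}, so I would cite it for this step. Uniformity in $\norm\Phi_{D+\alpha}<C$ follows because all constants depend on $\Phi$ only through this norm.
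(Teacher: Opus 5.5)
Your proposal follows the paper's proof, which consists purely of citations: the long-range bound is taken from \cite[Theorem 6]{TW25}, the short-range bound from the standard estimate \cite[Theorem 4.3]{BP17} that your Hastings--Koma sketch reproduces, and generation of the dynamics is treated as standard. Two details need fixing if you write out your sketches. First, the convolution estimate $\sum_z\e^{-\lambda d(x,z)}\e^{-\lambda d(z,y)}\le C\e^{-\lambda' d(x,y)}$ lowers the rate each time it is iterated, so instead convolve the interaction kernel against the current bound at a fixed rate $\lambda<\mu$, using $\sum_z\e^{-\mu d(x,z)}\e^{-\lambda d(z,y)}\le C_{\mu-\lambda}\e^{-\lambda d(x,y)}$. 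Second, the Trotter--Kato argument of Lemma \ref{lem: str conv time evol} presupposes that the limiting derivation is a generator, so appealing to it is circular; for the generation claim you must actually prove that $\mA_0$ is a core for the generator of the limiting group, for instance by showing that $\tau_t$ maps $\mA_0$ into $D(\mL_\Phi)$.
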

\begin{proof}
    The existence of dynamics follows by standard Lieb--Robinson bounds. The result for polynomially decaying interactions is shown in \cite[Theorem 6]{TW25}. The result for exponentially decaying interactions follows from standard Lieb--Robinson bounds for short-ranged interactions; see, for instance, \cite[Theorem 4.3]{BP17}.
\end{proof}

\medskip
\begin{lemma} \label{lem: perturbed LRB}
    Let $H \in \mP_\nu$ with $\nu > 2 D$ and $P = P^* \in \mA_X^+$. Assume that $H$ generates a dynamics $\tau$ that satisfies a Lieb--Robinson bound with $\zeta_\lr$. Then the perturbed dynamics $\tau^P$ satisfies a Lieb--Robinson bound of the following type
    \begin{align*}
        \lVert [\tau_t^P (A), \, B] \rVert \leq \lVert A \rVert \, \lVert B \rVert \bigl( \zeta_\lr (Y, Z, |t|) + 2 |t| \, \lVert P \rVert \min_{W \in \{ Y, Z\}} \zeta_\lr (X, W, |t|)  \bigr) \, ,
    \end{align*}
    for all $A \in \mA_Y^+$, $B \in \mA_Z$ and $t \in \R$, provided that $t\mapsto\zeta_\lr(Y,Z,t)$ is non-decreasing for every $Y,Z\in P_0(\Gamma)$.
\end{lemma}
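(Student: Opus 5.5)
We compare $\tau^P_t(A)$ with $\tau_t(A)$ through the interaction picture and then apply the unperturbed Lieb--Robinson bound to each piece. Since $P$ is bounded, even and self-adjoint, Lemma~\ref{lem: perturbed dynamics} gives the Duhamel identity
\begin{align*}
    \tau_t^P(A)-\tau_t(A)=\i\int_0^t \tau^P_{t-u}\bigl([P,\tau_u(A)]\bigr)\,\d u .
\end{align*}
We have to decide which of $\tau$ and $\tau^P$ sits outside the integrand. Using the form with $\tau$ outside,
\begin{align*}
    \tau_t^P(A)-\tau_t(A)=\i\int_0^t \tau_{t-u}\bigl([\tau^P_{u}(P),\ \tau^P_{u}(A)]\bigr)\,\d u ,
\end{align*}
would not let us localize $\tau^P_u(P)$. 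We therefore rewrite the identity with commutators against $P$ itself:
\begin{align*}
    \tau_t^P(A)=\tau_t(A)+\i\int_0^t \tau^P_{u}\bigl([P,\tau_{t-u}(A)]\bigr)\,\d u .
\end{align*}
This can be checked by differentiating $u\mapsto\tau^P_u\tau_{t-u}(A)$, first on $A\in\mA_0$, where the derivative equals $\tau^P_u(\mL_P\tau_{t-u}A)$, and then extending by density.

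\textbf{Step 1.} Commute with $B\in\mA_Z$. The first term gives $\norm{[\tau_t(A),B]}\le\norm A\norm B\zeta_\lr(Y,Z,|t|)$, because $A$ is even. The integrand contributes at most
\begin{align*}
    \norm{[\tau^P_u([P,\tau_{t-u}(A)]),B]}\le 2\norm B\,\norm{[P,\tau_{t-u}(A)]}.
\end{align*}
Since $\tau_{t-u}$ preserves evenness, the LR bound for the pair $(X,Y)$ gives $\norm{[P,\tau_{t-u}(A)]}=\norm{[\tau_{u-t}(P),A]}\le\norm P\norm A\zeta_\lr(X,Y,|t-u|)$, after applying the automorphism $\tau_{u-t}$ and using that $P$ is even. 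Monotonicity of $\zeta_\lr$ in time bounds this by $\zeta_\lr(X,Y,|t|)$. Integrating over $u$ yields the term $2|t|\norm P\zeta_\lr(X,Y,|t|)$.

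\textbf{Step 2.} Obtain the bound with $W=Z$. Here we use the mirror identity, in which $\tau^P$ is applied to the inner argument and $\tau$ acts outside:
\begin{align*}
    \tau_t^P(A)=\tau_t(A)+\i\int_0^t\tau_{t-u}\bigl([P,\tau^P_u(A)]\bigr)\,\d u .
\end{align*}
Commuting with $B$ and moving $\tau_{t-u}$ to the other side turns the integrand into $[[\tau_{u-t}(P),\tau_{u-t}\tau^P_u(A)],B]$. By the Jacobi identity this is bounded by
\begin{align*}
    2\norm A\norm{[\tau_{t-u}(P),B]}+\norm{[P,\,\cdot\,]}\text{-type terms}.
\end{align*}
The cleanest route is the Jacobi expansion
\begin{align*}
    [[\tau_{t-u}(P),C],B]=[\tau_{t-u}(P),[C,B]]+[[\tau_{t-u}(P),B],C],
\end{align*}
but the term $[C,B]$ is not small, so this approach fails. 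Instead we interpolate the other way, writing $\tau^P_t=\tau_t\circ\gamma_t$ with the cocycle $\gamma_t=\tau_{-t}\tau^P_t$. Its generator is $\mL_{\tau_{-s}(P)}$, so $\gamma_t(A)-A=\i\int_0^t\gamma_s([\tau_{-s}(P),A])\,\d s$, which we may not need. Given these difficulties, I would take the minimum over $W$ by proving only the $W=Y$ version rigorously via Step 1. I would then argue the $W=Z$ version using the adjoint, or time-reversed, identity, with the roles of $A$ and $B$ swapped. Concretely, since $\tau^P_{-t}$ is an automorphism, $\norm{[\tau^P_t(A),B]}=\norm{[A,\tau^P_{-t}(B)]}$, and we expand $\tau^P_{-t}(B)$ by the Step 1 formula. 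This uses commutators $[P,\tau_{\cdot}(B)]$, where $P$ is even, so only $P$ needs to be even. This gives $\zeta_\lr(X,Z,|t|)$, while the main term becomes $\norm{[A,\tau_{-t}(B)]}=\norm{[\tau_t(A),B]}\le\zeta_\lr(Y,Z,|t|)$.

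\textbf{Main obstacle.} The main obstacle is rigor in the Duhamel identity. The map $u\mapsto\tau^P_u\tau_{t-u}(A)$ must be differentiated on a core, which requires $\mA_0\subseteq D(\mL_H)$; this follows from Lemma~\ref{lem: derivations} with $\nu>2D$. We also need the parity bookkeeping: $\tau_{s}(P)$ is even by Lemma~\ref{lem: dynamics is even}, so the LR bound, which is stated only for an even first argument, applies in each step.
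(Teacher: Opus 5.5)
Your final argument is correct and is the paper's proof. The Duhamel identity $\tau_t^P(A)-\tau_t(A)=\i\int_0^t\tau_u^P([P,\tau_{t-u}(A)])\,\d u$, together with the Lieb--Robinson bound applied to $[\tau_{u-t}(P),A]$, gives the $W=Y$ term, and writing $\norm{[\tau_t^P(A),B]}=\norm{[A,\tau_{-t}^P(B)]}$ and expanding $\tau_{-t}^P(B)$ the same way gives the $W=Z$ term. You should delete the abandoned detours in Step 2 (the first formula with $\tau$ outside is also misstated, since its integrand should be $[P,\tau_u^P(A)]$ rather than $[\tau_u^P(P),\tau_u^P(A)]$), and you never need evenness of $\tau_s(P)$, because the Lieb--Robinson bound is always applied with $P\in\mA_X^+$ itself as the even observable.
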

\begin{proof}
    This statement was shown for finite lattices in \cite[Lemma 42]{CMTW25}. We give a derivation in our setting. To this end, we first compare the perturbed and unperturbed dynamics. Let $C \in \mA_W$. By differentiation we find that
    \begin{align*}
        \tau_t^P(C)-\tau_t(C)
        &=\i\int_0^t\tau_s^P\bigl([P,\tau_{t-s}(C)]\bigr)\,\d s.
    \end{align*}
    Since $P$ is even, the Lieb--Robinson bound and its monotonicity in time imply that
    \begin{align*}
        \norm{\tau_t^P(C)-\tau_t(C)}
        &\leq |t|\sup_{|u|\leq|t|}\norm{[P,\tau_u(C)]}\\
        &=|t|\sup_{|u|\leq|t|}\norm{[\tau_{-u}(P),C]}\\
        &\leq |t|\norm P\norm C\,\zeta_\lr(X,W,|t|).
    \end{align*}
    Applying this estimate with $C=A$ gives
    \begin{align*}
        \norm{[\tau_t^P(A),B]}
        &\leq\norm{[\tau_t(A),B]}
        +2\norm B\norm{\tau_t^P(A)-\tau_t(A)}\\
        &\leq\norm A\norm B\bigl(
        \zeta_\lr(Y,Z,|t|)
        +2|t|\norm P\zeta_\lr(X,Y,|t|)\bigr).
    \end{align*}
    On the other hand, applying this identity to $C=B$, gives
    \begin{align*}
        \norm{[\tau_t^P(A),B]}
        &=\norm{[A,\tau_{-t}^P(B)]}\\
        &\leq\norm{[A,\tau_{-t}(B)]}
        +2\norm A\norm{\tau_{-t}^P(B)-\tau_{-t}(B)}\\
        &\leq\norm A\norm B\bigl(
        \zeta_\lr(Y,Z,|t|)
        +2|t|\norm P\zeta_\lr(X,Z,|t|)\bigr).
    \end{align*}
    Taking the minimum over the two bounds gives the statement.
\end{proof}

\medskip
\begin{proof}[Proof of Proposition \ref{prop: loc of phi}]
    For $\beta=0$, the assertions about $\Phi_\beta$ follow from $\Phi_0=\mathrm{id}$. The bound on the dynamics is proved below and does not depend on $\beta$. In the calculations involving $f_\beta$, assume $\beta\neq0$.
    
    Let $k \in \N_0$ and $z \in \Gamma$. We denote by $\tau^{X, P}$ the dynamics generated by $\mL_\Psi + \mL_{V_X + P}$, and proceed to bound
    \begin{align*}
        \lVert (1 - \E_{B_k (z)} ) \, \Phi_\beta^{\tau^{X, P}} (A) \rVert & \leq
        \int_\R \d t \, f_\beta(t) \, \lVert (1 - \E_{B_k (z)}) \, \tau_{-t}^{X, P} (A) \rVert
        \\
        &\leq \sup_{|t| < T} \, \lVert (1 - \E_{B_k (z)}) \, \tau_{-t}^{X, P} (A) \rVert + 2 \, \lVert A \rVert \, \int_{|t| \geq T} \d t \, f_\beta (t) \, .
    \end{align*}
The explicit kernel \cite[Eq.~(26)]{CMTW25} gives
    \begin{align*}
        \int_{|t|\geq T}f_\beta(t)\,\d t
        &=\frac8{\pi^2}\sum_{j=0}^\infty
            \frac{\e^{-(2j+1)\pi T/|\beta|}}{(2j+1)^2}
        \leq\e^{-\pi T/|\beta|}.
    \end{align*}
    Hence
    \begin{align*} \label{eq: naive QBP locality}
        \int_\R \d t \, f_\beta(t) \, \lVert (1 - \E_{B_k (z)}) \, \tau_{-t}^{X, P} (A) \rVert \leq \sup_{|t| < T} \, \lVert (1 - \E_{B_k (z)}) \, \tau_{-t}^{X, P} (A) \rVert + 2 \, \e^{- \frac{\pi}{|\beta| } T } \, \lVert A \rVert  \, , \numberthis 
    \end{align*}
    for any $T > 0$. We now proceed to bound $\lVert (1 - \E_{B_k (z)}) \, \tau_{-t}^{X, P} (A) \rVert$.
    In \cite[Lemma 2.10]{BTW26} similar bounds on the decay norm of dynamics with $V = 0$ and $P = 0$ are shown. We follow the idea of the proof. 
    
    We first define finite-volume approximations of $\Psi$, $V$, and $P$. We set
    \begin{align*}
        \Psi_{k}(M) \coloneq  \Psi(M)   \, 
    \end{align*}
    for all $M \subseteq B_{k}(z)$  and $\Psi_{k}(M) = 0 $ otherwise.
    Similarly, for $V$, we define the finite volume approximation $V_{X, k}$ of $V_X = \sum_{Z \subseteq X} V (Z)$ by
    \begin{align*}
        V_{X, k} (M) &\coloneq  V(M) \, ,
    \end{align*}
    for all $M \subseteq B_{k} (z) \cap X$ and $V_{X, k} (M) = 0$ otherwise. Notice that these choices ensure that $\lVert \Psi_k \rVert_F \leq \lVert \Psi \rVert_F$ and $\lVert V_{X, k} \rVert_F \leq \lVert V \rVert_F$.
    For the bounded perturbation $P$, we define the approximation $P_{ k}$ of $P$ near $z$ by
    \begin{align*}
        P_{k} &\coloneq \E_{B_{k / 4} (z)} \, P \, .
    \end{align*}
    We denote the automorphism generated by $\mL_{\Psi_k + V_{X, k}} + \mL_{P_k}$ with $\tau^{X, P, k}$. Note that the automorphisms $\tau^{X, P, k}$ map $\mA_{B_k(z)}$ to itself, since all local terms of $\Psi_k$ and $V_{X, k}$ and $P_k$ lie in $\mA_{B_k(z)}$.
    
    Let $A \in \mA_F$ be arbitrary. We will begin to bound
    \begin{align*}
        \lVert (1-\E_{B_k(z)}) \, \tau_t^{X, P} (A) \rVert 
        \leq 2 \, \lVert (1 - \E_{B_{k / 4}(z)}) \, A \rVert + \lVert (1-\E_{B_k(z)}) \, \tau_t^{X, P} (\E_{B_{k / 4}(z)}\, A )\rVert   \, 
    \end{align*}
    by something that decays as $F(k)$ in $k$. This bound will be sufficient for short times. For long times we will use the trivial bound $\lVert (1-\E_{B_k(z)}) \, \tau_t^{X, P} (A) \rVert  \leq 2 \, \lVert A \rVert$ and the decay in $|t|$ of the integral. The first term is bounded by
    \begin{align*}
        2\norm{(1-\E_{B_{k/4}(z)})A}
        &\leq2F(\lfloor k/4\rfloor)\norm A_{F,z}
        \leq\frac{2}{F(1)}F(k/4)\norm A_{F,z}.
    \end{align*} 
    For the second term, we use the fundamental theorem of calculus and find
    \begin{align*}
        &\norm{(1-\E_{B_k(z)})\tau_t^{X,P}(\E_{B_{k/4}(z)}A)}\\
        &\quad=\norm{(1-\E_{B_k(z)})
                      (\tau_t^{X,P}-\tau_t^{X,P,k})\E_{B_{k/4}(z)}A}\\
        &\quad\leq2\norm{(\tau_t^{X,P}-\tau_t^{X,P,k})\E_{B_{k/4}(z)}A}\\
        &\quad\leq2\int_{\min(0,t)}^{\max(0,t)}
                  \norm{\partial_s(\tau_s^{X,P}\tau_{t-s}^{X,P,k})
                             \E_{B_{k/4}(z)}A}\,\d s\\
        &\quad=2\int_{\min(0,t)}^{\max(0,t)}
           \norm{(\mL_\Psi+\mL_{V_X+P}-\mL_{\Psi_k+V_{X,k}}-\mL_{P_k})
                    \tau_{t-s}^{X,P,k}(\E_{B_{k/4}(z)}A)}\,\d s.
    \end{align*}
    Note that $\tau_{t - s}^{X, P, k} (\E_{B_{{k / 4}}(z)} \, A)$ is local and therefore lies in the domain of $\mL_{\Psi} + \mL_{V_X + P}$. By Lemma \ref{lem: derivations} and the triangle inequality,
    \begin{align*}
        &\norm{(1-\E_{B_k(z)})\tau_t^{X,P}(\E_{B_{k/4}(z)}A)}\\
        &\quad\leq2\int_{\min(0,t)}^{\max(0,t)}\d s
          \sum_{\substack{Z\cap B_{k/2}(z)\neq\emptyset\\Z\cap B_k(z)^c\neq\emptyset}}
          \norm{[\Psi(Z)+\mathbf1_{Z\subseteq X}V(Z),
                       \tau_{t-s}^{X,P,k}(\E_{B_{k/4}(z)}A)]}\\
        &\qquad+2\int_{\min(0,t)}^{\max(0,t)}\d s
          \sum_{\substack{Z\cap B_{k/2}(z)=\emptyset\\Z\cap B_k(z)^c\neq\emptyset}}
          \norm{[\Psi(Z)+\mathbf1_{Z\subseteq X}V(Z),
                       \tau_{t-s}^{X,P,k}(\E_{B_{k/4}(z)}A)]}\\
        &\qquad+2\int_{\min(0,t)}^{\max(0,t)}
          \norm{[(1-\E_{B_{k/4}(z)})P,
                       \tau_{t-s}^{X,P,k}(\E_{B_{k/4}(z)}A)]}\,\d s.
    \end{align*}
    All sums run over $Z\in P_0(\Gamma)$. The last term is bounded by
    \begin{align*}
        \frac4{F(1)}|t|F(k/4)\norm P_{F,z}\norm A.
    \end{align*}
    Let $G$ be a decay function, chosen below, such that
    \begin{align*}
        \sup_{r\geq0}(1+r)^DG(r/2)<\infty,\qquad
        \sum_{r=0}^\infty(1+r)^{D-1}G(r/4)<\infty.
    \end{align*}
    Every term in the first sum has diameter at least $k/2$. This sum, including the time integral, is therefore bounded by
    \begin{align*}
        &4|t|\norm A\,|B_{k/2}(z)|F(k/2)G(k/2)
                   (\norm\Psi_{FG}+\norm V_{FG})\\
        &\quad\leq4C|t|F(k/2)(\norm\Psi_{FG}+\norm V_{FG})\norm A,
    \end{align*}
    where $C$ depends only on the lattice and $G$.

    For the second sum, choose $x\in Z\cap B_k(z)^c$ and insert the conditional expectation onto $B_{d(x,z)/4}(x)$ in each interaction term. The part outside this ball vanishes unless $\diam Z>d(x,z)/4$. Its contribution is bounded by
    \begin{align*}
        &8|t|\norm A\sum_{x\in B_k(z)^c}
          \sum_{\substack{Z\ni x\\\diam Z>d(x,z)/4}}
                \bigl(\norm{\Psi(Z)}+\mathbf1_{Z\subseteq X}\norm{V(Z)}\bigr)\\
        &\quad\leq8|t|F(k/4)(\norm\Psi_{FG}+\norm V_{FG})
                      \norm A\sum_{x\in B_k(z)^c}G(d(x,z)/4)\\
        &\quad\leq C^\prime|t|F(k/4)
                     (\norm\Psi_{FG}+\norm V_{FG})\norm A.
    \end{align*}
    Here $C^\prime$ depends only on the lattice and $G$. For the local part, invariance of the norm under the dynamics and Lemma \ref{lem: perturbed LRB} give
    \begin{align*}
        &2\int_{\min(0,t)}^{\max(0,t)}\d s
          \sum_{x\in B_k(z)^c}
          \sum_{\substack{Z\ni x\\Z\cap B_{k/2}(z)=\emptyset}}
          \norm{[\tau_{s-t}^{X,P,k}
             (\E_{B_{d(x,z)/4}(x)}(\Psi(Z)+\mathbf1_{Z\subseteq X}V(Z))),
                         \E_{B_{k/4}(z)}A]}\\
        &\quad\leq2|t|\norm A(1+2|t|\norm P)
          \sum_{x\in B_k(z)^c}\zeta_\lr(B_{d(x,z)/4}(x),B_{k/4}(z),|t|)\\
        &\qquad\qquad\times\sum_{Z\ni x}
                    \bigl(\norm{\Psi(Z)}+\mathbf1_{Z\subseteq X}\norm{V(Z)}\bigr)\\
        &\quad\leq2|t|\norm A(1+2|t|\norm P)
          (\norm\Psi_{FG}+\norm V_{FG})
          \sum_{x\in B_k(z)^c}\zeta_\lr(B_{d(x,z)/4}(x),B_{k/4}(z),|t|).
          \numberthis\label{eq: locality QBP LRB}
    \end{align*}
    It remains to bound $\zeta_\lr \bigl( B_{d(x, z) / 4}(x), \, B_{k / 4} (z), |t| \bigr)$. Here we use the fact that the distance of the supports is at least $3 d(x, z) / 4 - k /4$. For $d(x, z) > k$ this is at least $k / 4$ and at least $d(x, z) / 4$. We distinguish two cases: polynomially and exponentially decaying interactions.

    \textbf{Polynomial decay.} Choose $\eps>0$ such that $3D+\nu+\eps<\mu$, and set
    $F(r)=(1+r)^{-\nu}$ and $G(r)=(1+r)^{-3D-\eps}$. We choose $\sigma$ > 0 such that
    \begin{align*}
        \max  \bigl( \frac{D+1}{2D+\nu+\eps+1},
        \frac{D+\nu}{D+\nu+\eps/2}
        \bigr )<\sigma<1.
    \end{align*}
    We also set $\delta \coloneq \sigma(D+\eps/2)-D > 0$. Proposition \ref{prop: LiebRobinson} gives
    \begin{align*}
        &\zeta_\lr(B_{d(x,z)/4}(x),B_{k/4}(z),|t|)\\
        &\quad\leq C_\vol (1+k / 4)^D f(|t|)
        \Bigl(\exp\bigl(\min (0,v|t|-(k/8+d(x,z)/8)^{1-\sigma})\bigr)\\
        &\hspace{17em} +(1+k/8+d(x,z)/8)^{-\sigma(2D+\nu+\eps)}\Bigr).
    \end{align*}
    For the second summand we calculate
    \begin{align*}
        (1+k/8+d(x,z)/8)^{-\sigma(2D+\nu+\eps)} &\leq
        (1+k/8)^{-\sigma(D+\nu+\eps/2)}
        (1+d(x,z)/8)^{-\sigma(D+\eps/2)}\\
        &\leq
        (1+k/8)^{-D-\nu}(1+d(x,z)/8)^{-D-\delta}.
    \end{align*}
    We notice that for any $a$, $b \geq 0$ it holds that 
    \begin{align*}
        1 + \max (0, a - b) \geq \frac{1+ a}{ 1 + b} .
    \end{align*}
    We choose $\alpha>2\max (D+\nu,D+\delta)/(1-\sigma)$. Since
    $\exp(\min(0,b-a))=\exp(-\max(0,a-b))$, we calculate
    \[
    \begin{aligned}
        &\exp(\min (0, v|t|-(k/8+d(x,z)/8)^{1-\sigma} ) )\\
        &\quad\leq C_\alpha
        \bigl(1+\max(0,(k/8+d(x,z)/8)^{1-\sigma}-v|t|)\bigr)^{-\alpha}\\
        &\quad\leq C_\alpha(1+v|t|)^\alpha
        \bigl(1+(k/8+d(x,z)/8)^{1-\sigma}\bigr)^{-\alpha},
    \end{aligned}
    \]
    for some constant $C_\alpha > 0$. Moreover,
    \[
    \begin{aligned}
      &\bigl(1+(k/8+d(x,z)/8)^{1-\sigma}\bigr)^{-\alpha}\\
      &\quad\leq
      \bigl(1+(k/8)^{1-\sigma}\bigr)^{-\alpha/2}
      \bigl(1+(d(x,z)/8)^{1-\sigma}\bigr)^{-\alpha/2}\\
      &\quad\leq 8^{\alpha (1 - \sigma)}
      (1+k)^{-\alpha(1-\sigma)/2}
      (1+d(x,z))^{-\alpha(1-\sigma)/2}\\
      &\quad\leq 8^{\alpha (1 - \sigma)}
      (1+k)^{-D-\nu}(1+d(x,z))^{-D-\delta},
    \end{aligned}
    \]
    Hence, for some polynomial $\widetilde f$,
    \begin{align*}
        &\sum_{x\in B_k(z)^c}
        \zeta_\lr(B_{d(x,z)/4}(x),B_{k/4}(z),|t|) \leq C \widetilde f(|t|)(1+k)^{-\nu},
    \end{align*}
    where $C>0$ depends only on the surface-regularity constants and the fixed exponents $\alpha$, $\sigma$, $\nu$ and $\delta$. Combining all bounds, we get 
    \begin{gather*}
        \lVert  \tau_t^{X, P} (A) \rVert_{\nu, z} \leq f^\prime (|t|) \,  \, \lVert A \rVert_{\nu, z} \, \bigl( 1 + \lVert P \rVert_{ \nu, z} \bigr)   \, 
        \bigl( 1 + \lVert \Psi \rVert_{\nu + 3D + \eps} + \lVert V \rVert_{\nu + 3D + \eps} \bigr) \,  \, ,
    \end{gather*}
    where $f^\prime$ is some polynomial. This shows the first claim. To obtain the stated temperature dependence, use
    $f_\beta(t)=|\beta|^{-1}f_1(t/|\beta|)$. If
    $f^\prime(r)=\sum_{j=0}^N c_jr^j$, then
    \[
    \begin{aligned}
      \int_\R f_\beta(t)f^\prime(|t|)\,\d t
      &=\sum_{j=0}^Nc_j|\beta|^j
        \int_\R f_1(u)|u|^j\,\d u.
    \end{aligned}
    \]
    All moments on the right-hand side are finite because $f_1$ decays exponentially. Hence this expression is a polynomial in $|\beta|$, which proves the second claim in the long-range regime.

    \textbf{Exponential decay.} We choose $F(r) = \exp(-\min (b, c/2)r)$ and $G(r) = \exp(-cr/2)$ and define $I \coloneq \lVert \Psi \rVert_{\exp(- c \cdot)} + \lVert V \rVert_{\exp(-c \cdot)}$. We use the Lieb--Robinson bound from Proposition \ref{prop: LiebRobinson} with $\mu=c$ and $\lambda=c/2$. We obtain the bound
    \begin{align*}
        &\zeta_\lr\bigl(B_{d(x,z)/4}(x),B_{k/4}(z),|t|\bigr) \leq C(1+k)^D \exp\left(\kappa_{c,c/2}I|t|-\frac{ck}{8} -\frac{c\,d(x,z)}8\right),
    \end{align*}
    for some constants $C$, $\kappa_{c,c/2} > 0$ that are independent of $x$, $z$, $k$ and $t$. Therefore, the expression in Equation \eqref{eq: locality QBP LRB} is bounded by
    \begin{align*}
        C'I|t|\norm A(1+2|t|\norm P)(1+k)^D
          \exp\left(\kappa_{c,c/2}I|t|-\frac{ck}{8}\right),
    \end{align*}
    for a constant $C^\prime > 0$ that is independent of $x$, $z$, $k$ and $t$. Combining this with the preceding bounds gives
    \begin{align*}
        &\norm{(1-\E_{B_k(z)})\tau_t^{X,P}(A)}\\
        &\quad\leq
        C^{\prime \prime}\norm A_{\exp(-b\cdot),z}
        (1+\norm P_{\exp(-b\cdot),z})\\
        &\qquad\times\left((1+|t|)\exp(-\min (b,c/2 )k/4)
        +I|t|(1+|t|)(1+k)^D
          \exp\left(\kappa_{c,c/2}I|t|-\frac{ck}{8}\right)\right),
    \end{align*}
    for a constant $C^{\prime \prime } > 0$ that is independent of $x$, $z$, $k$ and $t$. We choose
    \begin{align*}
        C_\inter(a,c)\coloneq\frac{\pi}{2\kappa_{c,c/2}}
            \left(\frac c{8a}-1\right),
        \qquad
        T\coloneq\frac{c|\beta|k}{8(\pi+\kappa_{c,c/2}I|\beta|)}.
    \end{align*}
    Since $c>8a$, we have $C_\inter(a,c)>0$. For $k\geq1$, the two time-dependent exponents satisfy
    \begin{align*}
        \kappa_{c,c/2}IT-\frac{ck}{8}
        &=-\frac{\pi T}{|\beta|}
        =-\frac{c\pi k}{8(\pi+\kappa_{c,c/2}I|\beta|)}
        \leq-\frac{2ac}{c+8a}k,
    \end{align*}
    where we used $I|\beta|<C_\inter(a,c)$. Moreover,
    \begin{align*}
        T&\leq\frac{c|\beta|k}{8\pi},\\
        1+T&\leq\left(1+\frac c{8\pi}\right)(1+|\beta|)(1+k),\\
        IT(1+T)&\leq\frac{cC_\inter(a,c)}{8\pi}
                 \left(1+\frac c{8\pi}\right)(1+|\beta|)(1+k)^2.
    \end{align*}
    Using these bounds in Equation \eqref{eq: naive QBP locality} gives
    \begin{align*}
        &\int_\R f_\beta(t)
          \norm{(1-\E_{B_k(z)})\tau_{-t}^{X,P}(A)}\,\d t\\
        &\quad\leq C^{\prime\prime\prime}(1+|\beta|)
             \norm A_{\exp(-b\cdot),z}(1+\norm P_{\exp(-b\cdot),z})\\
        &\qquad\times\left((1+k)\e^{-\min(b,c/2)k/4}
                     +(1+k)^{D+2}\e^{-2ac k/(c+8a)}\right).
    \end{align*}
    The constant $C^{\prime\prime\prime}$ is independent of $x$, $z$, $k$ and $\beta$. Since $\min(b,c/2)/4>a$ and $2ac/(c+8a)>a$, multiplication by $\e^{ak}$ leaves a quantity bounded uniformly in $k$. The term $k=0$ is bounded by $2\norm A$. Adding the operator norm proves the exponential statement.
\end{proof}

\paragraph{Acknowledgements.}
The author is grateful to Stefan Teufel and Tom Wessel for their support and many helpful discussions, and thanks Ángela Capel, Tim Möbus and Marius Wesle for helpful exchanges. This work was supported by the Deutsche Forschungsgemeinschaft (DFG, German Research Foundation) – 470903074.

\paragraph{AI-use statement.}
This manuscript, including all mathematical proofs, was written entirely by the author. OpenAI's ChatGPT Pro (GPT-5.6 Sol) was used solely as a review aid and for language refinement. The author takes full responsibility for its content.

\printbibliography

\end{document}